%% file: draft5.tex
\documentclass[letterpaper,11pt]{article}
\usepackage{amsmath, amsthm, amssymb}
\usepackage[usenames, dvipsnames]{color}
\usepackage[normalem]{ulem} 
\usepackage{fullpage}
\usepackage{cite}
\usepackage[numbers, sort&compress]{natbib}
\usepackage[noend]{algorithmic}
\usepackage{tikz, pgfplots}
\usepackage{enumerate}
\usepackage{hyperref}
\usepackage{xspace,color}
\usepackage{graphicx}
\usepackage{caption}
\usepackage{subcaption}
\usepackage{enumitem,linegoal}
\usepackage[linesnumbered,ruled,vlined]{algorithm2e}
\usepackage{comment}	
\usepackage{ctable}

\usepackage{mathtools}
\usepackage[flushleft]{threeparttable}
\usepackage{verbatim}
\usepackage{setspace}
\usepackage{bbm}
\usepackage{thm-restate}
\usepackage{footnote}
\usepackage{cleveref}
\crefname{LP}{LP}{LPs}
\usepackage{tikz}
\usetikzlibrary{positioning, decorations.pathmorphing}
\usepackage{pgfplots}
\pgfplotsset{compat=1.18}
\usepackage{hyperref}
\usepackage{tikz}
\usetikzlibrary{decorations.pathreplacing}

\makesavenoteenv{tabular}
\makesavenoteenv{table}

\usepackage[margin=1in]{geometry}

\definecolor{crimsonglory}{rgb}{0,0,0}

 \newtheorem{theorem}{Theorem}[section]
 
 \newtheorem{lemma}[theorem]{Lemma}
 
 \newtheorem{corollary}[theorem]{Corollary}

\makeatletter
\def\GrabProofArgument[#1]{ #1: \egroup\ignorespaces}
\def\proof{\noindent\textbf\bgroup Proof%
	\@ifnextchar[{\GrabProofArgument}{. \egroup\ignorespaces}}

\makeatother

\usetikzlibrary{arrows,shapes,snakes,automata,backgrounds,petri,calc}
\usepackage[latin1]{inputenc}

\input{src/macros.tex}

\newcounter{proccnt}

\newcommand{\konote}[1]{}

\usepackage[normalem]{ulem} 

\title{Maximizing Social Influence in Almost Linear Time}

\author{
 Saeed Seddighin
}

\SetCommentSty{mycommfont}

\SetKwInput{KwInput}{Input}                
\SetKwInput{KwOutput}{Output}  
\begin{document}
	\newcommand{\ignore}[1]{}
\renewcommand{\theenumi}{(\roman{enumi}).}
\renewcommand{\labelenumi}{\theenumi}
\sloppy
\date{}

\maketitle

\thispagestyle{empty}

\begin{abstract}
\input{src/abstract.tex}
\end{abstract}
\input{src/parallel.tex}
\input{src/intro.tex}
\input{src/preliminaries.tex}
\input{src/results.tex}
\input{src/nk.tex}
\input{src/curse.tex}
\input{src/strong.tex}
\input{src/query.tex}
\bibliographystyle{unsrtnat}	
\bibliography{draft}
\appendix

\end{document}

%% file: src/macros.tex
\newcounter{claudecnt}

\newif\ifshowclaudeedits
\showclaudeeditsfalse 
\newcommand{\claudeedit}[1]{\ifshowclaudeedits{\color{red}#1}\else{#1}\fi}

%% file: src/abstract.tex
Influence maximization is a central algorithmic challenge in network analysis, aiming to identify a set of $k$ seed nodes in a graph with $n$ nodes and $m$ edges that maximizes the expected cascade of information under standard diffusion models. The seminal work of Borgs, Brautbar, Chayes, and Lucier (SODA'14) yielded a fundamental breakthrough\footnote{The conference version of their paper originally claimed a runtime of $\tilde O_{\epsilon}(n+m)$, but this was subsequently corrected to a runtime of $\tilde O_{\epsilon}((n+m)k)$ in an updated version of the paper that is available online. We validate the necessity of this additional factor $k$ in Section~\ref{sec:lowerbound} by demonstrating that if their algorithm is restricted to a runtime budget of $\tilde{O}_{\epsilon}(n+m)$, the approximation ratio deteriorates to $\tilde O(k^{-1/4})$.} for this problem by achieving an $\tilde O_{\epsilon}((n+m)k)$ time algorithm for approximating the solution within a factor of $1-1/e-\epsilon$. In the years since, numerous efforts have attempted to improve the runtime of this algorithm; however, these works have been successful in only shaving logarithmic factors or improving the dependence on $\epsilon$, leaving the existence of an almost linear-time algorithm as an open question. 

In this work, we resolve this long-standing open question. We present a novel algorithm that approximates the influence maximization problem within a factor of $1-1/e-\epsilon$ in time $\tilde{O}_{\epsilon}(n+m)$, effectively removing the multiplicative dependence on $k$ from the time complexity.

%% file: src/parallel.tex
\section*{Parallel Work and Use of AI}

All algorithms and proofs in this paper were developed entirely by the author. Only after an initial draft of the paper had been written were AI tools (GPT and Claude) used, and only to find typos and grammatical errors and to polish the text. On June 3, 2026, a draft of this result was shared with experts in the community.

In a parallel work, Zhang~\cite{zhang2026budget} obtains the same result, namely a $(1-1/e-\epsilon)$-approximation for influence maximization in time $\tilde O_{\epsilon}(n+m)$. Both works use capped graph traversals as the main ingredient; however, they utilize them in different ways to obtain their algorithms. The author was not aware of the work of Zhang~\cite{zhang2026budget} when starting this research, and based on discussions with the author of~\cite{zhang2026budget}, they were not aware of this work either. According to its AI disclosure, the work of Zhang~\cite{zhang2026budget} was developed through interactions with GPT-6 Astra ultra (released on September 3, 2026). It is not clear whether the ideas of this paper were used in the training of GPT Astra, or whether GPT Astra came up with these ideas independently.

%% file: src/intro.tex
\section{Introduction}

The study of how information, ideas, and behaviors propagate through human populations has evolved from a niche sociological interest into a cornerstone of modern computational social science. In an era defined by hyper-connectivity, the mechanisms of word-of-mouth referral have become more transparent and measurable than ever before. Whether it is the viral spread of a mobile application, the adoption of sustainable farming practices in rural communities, or the dissemination of critical public health information, the underlying structure of social ties acts as a catalyst for large-scale behavioral shifts. The influence maximization problem addresses a fundamental question in this context: if a resource-constrained entity can only influence a small number of initial individuals, denoted by $k$, which specific set of seeds should be targeted to trigger the largest possible cascade of adoption across the entire network? \cite{kempe2003maximizing}.  This question bridges the gap between the structural properties of networks and the dynamic processes that play out over their edges.

The study of influence maximization from a theoretical computer science perspective was pioneered by Kempe, Kleinberg, and Tardos~\cite{kempe2003maximizing}. The problem is typically modeled using a graph $G = (V, E)$ with $n$ vertices and $m$ edges, where each edge is associated with a propagation probability and the goal is to select a seed set of at most $k$ vertices to maximize the spread. They formalized the problem under the independent cascade model and demonstrated that the influence spread function (representing the expected number of influenced individuals) is submodular. By leveraging this structural property, they employed a greedy framework to achieve a $1-1/e-\epsilon$ approximation guarantee in polynomial time. Since this foundational result, research has focused on maintaining the $1-1/e-\epsilon$ approximation guarantee while pushing the runtime down to linear in terms of the graph size. 

We note that the approximation ratio $1-1/e-\epsilon$ is essentially optimal for general directed graphs, as improving upon the $1-1/e$ factor has been shown to be NP-hard~\cite{feige1998threshold,kempe2003maximizing}.  However, the landscape changes for undirected graphs; Khanna and Lucier~\cite{khanna2014influence} demonstrated that in the undirected setting, it is possible to bypass this barrier and obtain approximation ratios strictly better than $1-1/e$. 

A transformative breakthrough was achieved by Borgs et al.~\cite{borgs2014maximizing}, who introduced the framework of reverse reachable (RR) sets. This approach shifted the complexity from traditional simulation-based methods to a sampling-based paradigm, resulting in a running time of $\tilde O_{\epsilon}((n+m)k)$ while maintaining a $1-1/e-\epsilon$ approximation guarantee.  In the decade following this result, many works have refined the runtime of the RR-set framework while  maintaining an approximation factor of $1-1/e-\epsilon$. Notably, TIM/TIM+~\cite{tang2014influence} provided the first practical implementation that maintained the $\tilde O_{\epsilon}((n+m)k)$ time  bound while performing well on real-world graphs. This was followed by other influential frameworks such as IMM~\cite{tang2015influence}, SSA~\cite{nguyen2016stop}, and the sketch-based approach~\cite{cohen2014sketch}, which focused on reducing constant factors, improving stopping conditions, or utilizing alternative sketch-based data structures to estimate influence. Subsequent works further improved the practical performance of this framework, e.g., through online quality assessment in OPIM and OPIM-C~\cite{tang2018online} and through more efficient reverse reachable set generation with tightened bounds~\cite{guo2020influence}. Lakshmanan~\cite{lakshmanan2025influence} studies a variant of reverse influence sampling whose running time is independent of the seed set size, but its guarantee requires $k\epsilon < 1$, which restricts the seed budget to $k < 1/\epsilon$. 
 However, a fundamental barrier remained: a multiplicative dependence on the seed set size $k$ in the runtime. While $k$ is often small in practice, in large-scale network analysis or scenarios where $k$ scales with $n$, this factor represents a significant bottleneck that prevents the algorithm from being truly linear in the size of the graph.

In this work, we resolve this long-standing open question. We present a novel algorithm that approximates the influence maximization problem within a factor of $1-1/e-\epsilon$ in time $\tilde{O}_{\epsilon}(n+m)$, effectively removing the multiplicative dependence on $k$ from the time complexity.

%% file: src/preliminaries.tex
\section{Preliminaries}

In this paper, we formalize the influence maximization problem within its primary setting of social networks and subsequently describe the maximum $k$-coverage formulation to which it is reduced. While the former defines the problem's objective and diffusion dynamics, the latter provides the combinatorial structure necessary for our algorithmic analysis.

\subsection{Influence Maximization in Social Networks}
We represent a social network as a directed graph $G = (V, E)$ with $n = |V|$ nodes and $m = |E|$ edges. Each directed edge $(u, v) \in E$ is associated with a propagation probability $p_{u,v} \in [0, 1]$, representing the likelihood that an activated node $u$ activates node $v$.

The diffusion of influence is governed by the independent cascade model, which operates in discrete time steps. Initially, at $t=0$, a given \textit{seed set} $C \subseteq V$ is activated. At any subsequent step $t \ge 1$, every node $u$ that became active at step $t-1$ is given a single opportunity to activate each of its currently inactive out-neighbors $v$. The attempt succeeds with probability $p_{u,v}$. If successful, $v$ becomes active at step $t$. This process continues until no further activations are possible.

For a seed set $C$, let $\sigma(C)$ denote the expected number of activated nodes at the conclusion of the diffusion process. In the influence maximization problem, we are given the graph $G$, the edge probabilities, and an integer $1 \le k \le n$. Our objective is to find a seed set $C$ of at most $k$ nodes that maximizes the expected spread $\sigma(C)$. For an $0 < \alpha < 1$, we say a solution $C \subseteq V$ is $\alpha$-approximate if $|C| \leq k$ and 
\[ \sigma(C) \geq \alpha \max_{C' \subseteq V, |C'| \le k} \sigma(C'). \]

\subsection{The Maximum $k$-Coverage Problem}
As we demonstrate in Section~\ref{sec:reduction}, the influence maximization problem is closely related to a specific formulation of the maximum $k$-coverage problem.  Let $\mathcal{N} = [n] = \{1, 2, \dots, n\}$ be a finite ground set of elements, and let $\mathcal{M} \subseteq 2^{\mathcal{N}}$ be a family (multiset) of subsets over $\mathcal{N}$. We also denote the size of $\mathcal{M}$ by $m$.\footnote{In the context of the $k$-coverage problem, $n$ and $m$ denote the size of the ground set $\mathcal{N}$ and the size of the multiset $\mathcal{M}$, respectively. This differs from their meaning in the influence maximization setting, where $n$ and $m$ denote the number of nodes and edges of the graph $G$. The intended meaning will be clear from the context.}  For reasons that will become clear later, in this paper, we utilize the dual notion of the maximum $k$-cover problem: we seek to find a subset of at most $k$ elements that maximizes the number of sets in $\mathcal{M}$ containing at least one of these elements. 

In our setting,  multiset $\mathcal{M}$ is not explicitly provided. Instead, the algorithm interacts with an oracle $\mathcal{Q}$ that, upon query, returns a subset $S \in \mathcal{M}$ sampled uniformly and independently from $\mathcal{M}$. Each query to $\mathcal{Q}$ incurs a non-uniform cost equal to the cardinality of the returned set, $|S|$. For a sequence of observed sets $S_1, S_2, \dots, S_x$, the total query cost is defined as $\sum  |S_i|$.

For any subset of elements $C \subseteq \mathcal{N}$, let $F(C)$ denote the \textit{fractional coverage} of $C$, which is the fraction of sets in $\mathcal{M}$ that contain any element of $C$. More precisely, $F$ is formulated as follows:
\[ F(C) = \frac{|\{S \in \mathcal{M} \mid S \cap C \neq \emptyset\}|}{|\mathcal{M}|}. \]
Our goal is to design a randomized algorithm that sequentially queries $\mathcal{Q}$ to find a subset $C$ with $|C| \le k$ such that $F(C) \ge \alpha \cdot \mathsf{opt}$, where $\mathsf{opt}$ is the maximum possible fractional coverage and $\alpha$ is the approximation ratio. We specifically seek an algorithm that provides a $1-1/e-\epsilon$ approximation guarantee while maintaining  almost linear query complexity and almost linear runtime.

Throughout the paper, we assume that every set in $\mathcal{M}$ is non-empty (this holds for reverse reachable sets, which always contain their root). This guarantees that $\mathsf{opt} \ge k/n$ holds in general, since a uniformly random subset of $k$ elements contains any fixed element with probability $k/n$ and hence intersects every set of $\mathcal{M}$ with probability at least $k/n$.

%% file: src/results.tex
\section{Our Results}\label{sec:reduction}
The primary technical contribution of Borgs et al.~\cite{borgs2014maximizing} is a transformative framework that reduces the influence maximization problem to a classic instance of the $k$-cover problem. For a node $u \in V$ and a random realization $R \sim G$ of $G$, Borgs et al.~\cite{borgs2014maximizing} introduced the \textit{reverse reachable set} $S_{R}(u)$ as the subset of nodes of $V$ that can reach $u$ in a realization $R$ of $G$. Specifically, if we were to construct $S_R(u)$ for a random realization $R \sim G$ of $G$, we can initiate a graph traversal (such as BFS or DFS) starting from vertex $u$ and proceeding in the reverse direction of the edges. In this process, each directed edge $(a, b) \in E$ is considered for traversal with probability $p_{a,b}$. The resulting set of visited nodes $S$ contains all nodes from which $u$ is reachable in that specific sampled subgraph. 

The efficacy of the framework proposed by Borgs et al.~\cite{borgs2014maximizing} rests on a fundamental coupling between the diffusion process and the distribution of reverse reachable sets. Specifically, their algorithm is based on the following observation: let $(\mathcal{N}, \mathcal{M})$ be an instance of the maximum $k$-coverage problem where the ground set $\mathcal{N} = \{1, \dots, n\}$ corresponds to the set of nodes $V$, and the multiset $\mathcal{M}$ is formulated in a way that oracle $\mathcal{Q}$ answers queries as follows: 
\begin{enumerate}
	\item Pick a node $v \in V$ uniformly at random.
	\item Generate a reverse reachable set $S(v)$ by sampling edges according to their propagation probabilities and traversing the graph in reverse starting from $v$.
\end{enumerate}

Under this construction, the objective function of the $k$-coverage problem (the fraction of sets in $\mathcal{M}$ that contain at least one element from a seed set $C$) is exactly proportional to the objective function of the influence maximization problem. Formally, for any $C \subseteq V$ we have $\sigma(C) = n \cdot \mathsf{Pr}_{S \sim \mathcal{M}}[C \cap S \neq \emptyset]$.

If the degrees of the nodes in $G$ were constant, the computational time required for the oracle $\mathcal{Q}$ to generate and return a reverse reachable set would be exactly the same as the size of the returned set. Consequently, any algorithm for the $k$-coverage problem with runtime $T(n)$ and query cost $W(n)$  would translate into an algorithm for the influence maximization problem with runtime $T(n) + W(n)$, preserving the same approximation guarantee. In this idealized scenario, achieving both runtime and query cost of $\tilde{O}_{\epsilon}(n)$ for the $k$-coverage instance yields a truly near-linear time solution for influence maximization.

It is not hard to show that this observation holds even when the degrees are not bounded, provided we adjust the parameters of the equivalent $k$-coverage instance. Specifically, the parameter $n$ in the ground set $\mathcal{N}$ would become exactly two times the number of edges of the graph ($2m$) plus the number of isolated nodes.  To formalize this, for each vertex $v \in V$, we can conceptually correspond it with a set of equivalent elements of $\mathcal{N}$ whose count is equal to the degree of $v$ in $G$ (or 1 in case of isolated nodes). In this reformulated instance, the ground set has size $|\mathcal{N}| \leq 2m+n$, and the cost of each oracle query would be equal to the sum of the degrees of the visited nodes which is exactly the time complexity of the underlying BFS or DFS traversal. Accordingly, our objective for the remainder of this paper is to design an algorithm that solves the maximum $k$-coverage problem in our oracle-based setting with a $(1-1/e-\epsilon)$ approximation factor, where both the total runtime and the cumulative query cost are bounded by $\tilde{O}_{\epsilon}(n)$. This would give us an algorithm with runtime $\tilde O_{\epsilon}(n+m)$ for the influence maximization problem that has the same approximation factor.

In Section~\ref{sec:weak}, we adapt the sampling-based framework of Borgs et al.~\cite{borgs2014maximizing} to our specific $k$-coverage setting to achieve a $(1 - 1/e - \epsilon)$ approximation with a query cost and runtime of $\tilde{O}_{\epsilon}(nk)$. While the core algorithmic principles follow their established methodology, we restate the derivation here to ensure the presentation is self-contained and tailored to the nuances of our problem. The approach hinges on constructing a sampled collection $\mathcal{M}'$ that is small enough for efficient processing yet representative enough to preserve coverage guarantees.

Let $\mathsf{opt}$ denote the optimal coverage ratio achievable by any subset of at most $k$ elements in the $k$-coverage problem. Lemma~\ref{lem:concentration} characterizes the sample complexity required to downsample $\mathcal{M}$ into a representative sub-collection $\mathcal{M}'$. Specifically, it ensures that running the standard greedy algorithm on $\mathcal{M}'$ yields a solution that remains near-optimal with respect to the original collection $\mathcal{M}$.

\vspace{0.2cm}
{\noindent \textbf{Lemma}~\ref{lem:concentration} [restated informally]. \textit{Let $\mathcal{M}'$ be a collection of $m' \geq \tilde \Omega(\frac{k}{\epsilon^2 \mathsf{opt}})$ sets sampled uniformly and independently from $\mathcal{M}$. Let $F(C)$ denote the true fraction of sets in $\mathcal{M}$ covered by a fixed subset $C \subseteq \mathcal{N}$ such that $F(C) \le \mathsf{opt}$, and let $F'(C)$ denote the empirical fraction of sets in $\mathcal{M}'$ covered by $C$. For any $0 < \epsilon < 1$, the probability that the empirical coverage deviates from the true coverage by more than $\epsilon \cdot \mathsf{opt}$ is bounded by:
		$$\mathsf{Pr}\big[|F'(C) - F(C)| > \epsilon \cdot \mathsf{opt}\big] \le 2 n^{-3 k}.$$\\}}

We then demonstrate that by iteratively sampling from $\mathcal{M}$ to construct $\mathcal{M}'$ until the cumulative query cost reaches a threshold of $\tilde{O}(nk)$, the number of sampled subsets $m'$ will, with desirable probability, meet or exceed the $\tilde{\Omega}(\frac{k}{\epsilon^2 \mathsf{opt}})$ requirement specified in Lemma~\ref{lem:concentration}. This ensures that the algorithm terminates within the desired complexity bounds while providing a sufficiently large representative sample to guarantee the approximation factor. Ultimately, this approach yields an algorithm with both runtime and sample complexity of $\tilde{O}(nk)$, providing a solution for the $k$-coverage problem that is on par with the runtime and approximation guarantees established by Borgs et al.~\cite{borgs2014maximizing}.

Our first contribution involves a detailed investigation into the necessity of the multiplicative $k$ factor within the sample complexity of the $k$-coverage problem. Let us investigate Lemma~\ref{lem:concentration} to understand this requirement. If we were to ignore the $k$ multiplicative factor in the necessary sample size $m'$, the concentration bound would still yield a deviation probability of roughly $n^{-3}$ for any single fixed subset. While $n^{-3}$ is a significantly small failure probability for one subset, we require this condition to hold simultaneously for all possible subsets of $\mathcal{N}$ of size at most $k$. Because the number of such subsets is exponentially large (around $n^k$) a standard union bound argument requires each individual failure probability to be significantly smaller than the inverse of this cardinality. Consequently, without the additional multiplicative factor $k$ in the sample complexity to sharpen the concentration, the theoretical guarantee for the greedy algorithm would not hold across the entire search space. 

While the existing framework established by Borgs et al.~\cite{borgs2014maximizing} necessitates this multiplicative $k$ factor to guarantee a $(1 - 1/e - \epsilon)$ approximation, it is natural to ask whether a more refined analysis might yield a stronger guarantee for a lower sampling budget. In other words, our first step toward obtaining a linear-time solution is to investigate what specific approximation guarantee is achieved if we employ the algorithm of Borgs et al.~\cite{borgs2014maximizing} but restrict the sampling phase to a budget of only $\tilde{O}_{\epsilon}(n)$. Notably, Borgs et al.~\cite{borgs2014maximizing} already establish that when the sample complexity is reduced by a multiplicative factor of $\beta$, the approximation factor is likewise degraded by a factor of at most $O(\beta)$; this would imply that such an algorithm, with a budget reduction of $\beta = k$, possesses an approximation factor of $\Omega(1/k)$.

In Section~\ref{sec:lowerbound}, we show that for our $k$-coverage problem, any algorithm restricted to a total query cost of $\tilde O(n)$ cannot achieve an expected approximation factor better than $\tilde O(1/\sqrt{k})$. It is important to note that this specific lower bound does not directly carry over to the influence maximization setting of Borgs et al.~\cite{borgs2014maximizing}, as our hard instance construction is not producible within the standard graph-based reachability models. Nonetheless, we construct another example that proves a weaker bound of $\tilde O(k^{-1/4})$ on the approximation factor of the algorithm by Borgs et al.~\cite{borgs2014maximizing} when the budget for DFS time is bounded by $\tilde O(n+m)$.

Our findings in Section~\ref{sec:lowerbound} highlight two key observations. First, in order to approximate the $k$-coverage problem within a factor of $(1 - 1/e - \epsilon)$ while maintaining almost linear query cost complexity, one must go beyond the standard oracle $\mathcal{Q}$ queries that return independent samples. Second, to achieve a $(1 - 1/e - \epsilon)$ approximation for the influence maximization problem in social networks, it is necessary to go beyond the sampling framework established by Borgs et al.~\cite{borgs2014maximizing}, as a simple reduction in sample size to achieve linear time complexity fundamentally compromises the approximation guarantee. In Section~\ref{sec:strong}, we introduce a stronger type of oracle $\mathcal{Q}_s$ that enables us to approximate the $k$-cover problem within a factor of $(1 - 1/e - \epsilon)$ while keeping both the runtime and query cost bounded by $\tilde{O}_{\epsilon}(n)$. We then demonstrate in Section~\ref{sec:practical_queries} that such an oracle can be embedded into a reduction from influence maximization to $k$-coverage. This embedding makes it possible to approximate the influence maximization problem for social networks within a factor of $(1 - 1/e - \epsilon)$ in total time $\tilde{O}_{\epsilon}(n+m)$, where $n$ and $m$ are the number of nodes and edges in the network, respectively.

To bypass the limitations established in the linear-cost regime, we introduce a stronger query model. Oracle $\mathcal{Q}_s(A)$ takes a subset of elements $A \subseteq \mathcal{N}$ as input and returns a random set $S \in \mathcal{M}$ sampled uniformly and independently from the sub-collection of sets that are \textit{not} covered by any element in $A$. The cost of this query remains equal to the cardinality of the returned set. In our algorithm, we construct the solution in multiple phases to maintain efficiency. In each phase, we take our current partial solution as the input $A$, which is fed to $\mathcal{Q}_s$ so that it gives us a representative sample of the residual problem. We utilize this capability to develop a phased greedy algorithm that achieves a $(1 - 1/e - \epsilon)$ approximation for the $k$-coverage problem while maintaining both runtime and query complexity bounded by $\tilde{O}_{\epsilon}(n)$. The algorithm operates by greedily selecting elements from the sampled collection $\mathcal{M}'$ until their marginal gains decay below a specific threshold, indicating that the current sample is no longer representative of the remaining optimal coverage. By refreshing the sample through the new oracle in the beginning of each of the $\tilde O_{\epsilon}(1)$ phases,  we show that our new algorithm is able to use $\mathcal{Q}_s$ to achieve an approximation factor of $1-1/e-\epsilon$ with almost linear query cost and runtime. This is the most technically involved part of our contribution.

Finally, in Section~\ref{sec:practical_queries}, we demonstrate how the adaptive oracle $\mathcal{Q}_s$ can be realized within the context of social network diffusion to finalize our reduction from influence maximization to $k$-coverage. Although $\mathcal{Q}_s$ is strong enough to solve the $k$-coverage problem, a standard graph traversal (BFS/DFS) cannot directly implement it by only incurring a cost proportional to the size of the returned set, as a naive rejection strategy might spend excessive time exploring large, invalid components. To address this, we introduce a middle-ground oracle $\mathcal{Q}_{\ell}$, which limits the exploration cost for any single sample to a threshold $\ell$. We prove in Lemma~\ref{lem:threshold} that there always exists a threshold enabling us to simulate $\mathcal{Q}_s$ effectively, thereby allowing us to plug our $k$-coverage framework into the influence maximization problem to achieve a $(1-1/e-\epsilon)$ approximation in $\tilde{O}_{\epsilon}(n+m)$ time.

\begin{theorem}[restatement of Corollary~\ref{cor:final}]
	The influence maximization problem can be approximated within a factor of $1-1/e-\epsilon$ in time $\tilde O_{\epsilon}(n+m)$.
\end{theorem}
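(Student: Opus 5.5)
We prove the theorem by composing the three ingredients outlined above: the reduction from influence maximization to oracle-based $k$-coverage, the phased greedy algorithm built on the strong oracle $\mathcal{Q}_s$, and the simulation of $\mathcal{Q}_s$ by the capped oracle $\mathcal{Q}_\ell$ in the graph setting.

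\textbf{Step 1: the reduction.} We first fix the $k$-coverage instance. Its ground set consists of degree-many copies of each node (one copy for an isolated node), so $|\mathcal{N}|\le 2m+n$. Its oracle picks a uniformly random root and runs a reverse traversal with edge $(a,b)$ kept with probability $p_{a,b}$. By the coupling $\sigma(C)=n\cdot\mathsf{Pr}_{S\sim\mathcal{M}}[C\cap S\ne\emptyset]$, an $\alpha$-approximate solution to $k$-coverage, mapped back to nodes by taking the owning node of each chosen copy, is $\alpha$-approximate for influence maximization. The query cost of a traversal equals the sum of degrees of the visited nodes, which is the traversal time.

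\textbf{Step 2: the phased greedy algorithm with $\mathcal{Q}_s$.} We run the algorithm of Section~\ref{sec:strong} with oracle $\mathcal{Q}_s(A)$, where $A$ is the current partial solution. Each phase proceeds as follows:
\begin{enumerate}
\item Draw a fresh residual sample $\mathcal{M}'$ until the budget $\tilde O_\epsilon(n)$ is spent.
\item Greedily add elements while the empirical marginal gain stays above a threshold $\tau$.
\item Lower $\tau$ geometrically, by a factor $(1-\epsilon)$, between phases.
\end{enumerate}
The approximation argument is a standard threshold-greedy analysis. While the sample is fresh, each added element has a true marginal gain of at least $(1-O(\epsilon))$ times the best remaining gain, roughly $(\mathsf{opt}-F(A))/k$. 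This guarantee only needs concentration for single elements and a short sequence of greedy choices, not a union bound over all $n^k$ sets. That is exactly what removes the factor $k$ from Lemma~\ref{lem:concentration}.

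The number of phases is $\tilde O_\epsilon(1)$: the threshold ranges from $1$ down to about $\epsilon\,\mathsf{opt}/k\ge \epsilon/n$, and it shrinks geometrically. The total cost is therefore $\tilde O_\epsilon(n)$.

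\textbf{Step 3: simulating $\mathcal{Q}_s$.} The remaining task is to implement $\mathcal{Q}_s(A)$ inside the graph within cost proportional to the returned set. We do rejection sampling with capped traversals $\mathcal{Q}_\ell$. A sample is drawn from a random root, the traversal is aborted once its cost exceeds $\ell$, and the sample is rejected if it touches $A$. Lemma~\ref{lem:threshold} supplies a threshold $\ell$ at which the sets truncated or discarded by the cap carry only an $\epsilon$-fraction of the residual mass. Under this threshold, the biased sample still supports the concentration used in Step 2, and the wasted exploration is charged against the accepted samples within polylogarithmic overhead.

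Since $\ell$ is unknown, we try the thresholds $\ell=2^j$ for all $j$ and keep the best candidate. We evaluate candidates by an independent estimate of $\sigma$, which costs only a logarithmic factor. Summing over phases and thresholds gives total time $\tilde O_\epsilon(n+m)$ with approximation $1-1/e-O(\epsilon)$; rescaling $\epsilon$ yields Corollary~\ref{cor:final}.

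\textbf{Main obstacle.} I expect the hardest part to be Step 3. The rejection step must not both bias the distribution and blow up the cost. Both must hold simultaneously in every phase, because $A$ changes between phases and the good $\ell$ may change with it. I would first try to apply Lemma~\ref{lem:threshold} afresh per phase and absorb the bias into the $(1-\epsilon)$ slack of the threshold-greedy analysis.
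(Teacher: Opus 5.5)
Your Step 1 is the paper's reduction. Step 2, however, misidentifies the mechanism that removes the factor $k$, and the two claims you build on it are not justified.

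Inside a phase the partial solution is built from the same sample. Controlling an element's empirical marginal gain relative to it therefore needs a union bound over all partial solutions the phase can produce ($n^{x}$ of them if the phase adds $x$ elements). Single-element concentration against the fixed $A$ does not cover this. Phases also cannot be short: $x=O(1)$ would force about $k$ phases of cost $\tilde\Theta(n)$ each.

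What one phase's $\tilde O_\epsilon(n)$ sample, about $\tilde\Omega(1/(\epsilon^2\mathsf{opt}^j_1))$ sets, does buy is an \emph{aggregate} bound. The phase only adds elements whose empirical gain stays within a factor $2$ of its first gain, so $\mathsf{opt}^j_i\ge i\,\mathsf{opt}^j_1/3$ for $i$ up to the number of elements the phase adds. The $n^{i}$ union bound is thus paid for by the $i$-fold larger mass, and Corollary~\ref{lemma:cor} gives $|F(C)-F'(C)|\le |C|\,\mathsf{opt}^j_1\min\{\epsilon/2,1/(20\log n)\}$ for $|C|\le 2x$. This never yields your per-element statement that each added element has true gain at least $(1-O(\epsilon))$ times the best remaining one. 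The paper instead applies Lemma~\ref{lem:lower_bound_sum} to the empirical gains $d'_i$, where greedy is exact. It uses Corollary~\ref{lemma:cor2} only for the fixed optimal set $O$, and converts $\sum d'_i$ into $\sum d_i$ phase by phase.

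The same issue undermines your phase count. The end of a phase certifies only that \emph{empirical} gains are below the threshold; individual elements may keep true gains well above it. If you lower $\tau$ anyway, a later phase can add elements whose gains span much more than a constant factor. Its sample, calibrated to the current $\mathsf{opt}_1$, is then too small at scale $\tau$. If you do not lower $\tau$, the number of phases is no longer the number of threshold levels. This is why the paper stops a phase at $\mathsf{gain}<\mathsf{first\_gain}/2$ and needs the potential argument of Lemmas~\ref{lem:marginal_decay} and~\ref{lemma:kuchik} to bound the number of phases by $O(\log^5 n\log(1/\epsilon))$.

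Step 3 has a similar gap. Lemma~\ref{lem:threshold} does not say that the capped sets carry only an $\epsilon$-fraction of the residual mass. It says that among sets of size larger than $\ell$ at most an $\epsilon$-fraction avoid $A$, and that the expected capped cost is at most $6\Delta^2/\epsilon^2$ times the mean size of the valid small sets.

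Choosing $\ell$ by ``keep the best candidate under an estimate of $\sigma$'' does not work, for three reasons:
\begin{itemize}
\item A phase does not output a seed set you could evaluate.
\item The right $\ell$ depends on $A$, so one $\ell$ per run is not enough, while choosing per phase gives $(\Theta(\log n))^{w}$ whole-run candidates.
\item A high estimated spread does not certify the similarity property the greedy analysis needs.
\end{itemize}
The paper certifies $\ell$ locally instead. It checks condition~1 directly by counting how many of $400\log n/\epsilon$ sampled sets of size greater than $\ell$ avoid $A$. It takes reaching the valid-size budget as the certificate for condition~2, and retries a random $\ell$ $O(\log^2 n)$ times (Algorithm~\ref{alg:validation}, Lemma~\ref{lemma:gg}).

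Finally, the resulting error is additive in every phase and accumulates to $1-1/e-(4w+7)\epsilon$ (Theorem~\ref{theorem:final}). So it cannot simply be absorbed into a per-phase $(1-\epsilon)$ slack; $\epsilon$ must be rescaled by $\Theta(w)$.
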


%% file: src/nk.tex
\section{A $\tilde O_{\epsilon}(nk)$ Time Solution for $k$-cover}\label{sec:weak}
In this section, we present an algorithm that achieves a $(1 - 1/e - \epsilon)$ approximation guarantee for our $k$-cover problem while maintaining a query cost and runtime of $\tilde{O}_{\epsilon}(nk)$.  Let $\mathsf{opt}$ denote the maximum fraction of the subsets of $\mathcal{M}$ that can be covered by an optimal choice of at most $k$ elements from $\mathcal{N}$. The core intuition behind our approach relies on random sampling. For any fixed subset of elements $C \subseteq \mathcal{N}$, if we sample roughly $O(\frac{\log n}{\epsilon^2 \mathsf{opt}})$ subsets from $\mathcal{M}$ to form a sampled collection $\mathcal{M}'$, the fraction of subsets in $\mathcal{M}'$ that get covered by $C$ is highly concentrated around its true expected value in $\mathcal{M}$. Specifically, the empirical coverage fraction will be close to the true coverage fraction with an additive error of at most $\epsilon \cdot \mathsf{opt}$. 

However, our algorithm must evaluate and compare many different subsets. To guarantee that this additive error bound holds simultaneously for every subset $C \subseteq \mathcal{N}$ of size at most $k$, we must apply a union bound. Since the number of such subsets is roughly $O(n^k)$, the failure probability for a single subset must be significantly smaller than $n^{-k}$. This requirement introduces an extra multiplicative $k$ factor into the necessary sample size, leading us to sample $m' \geq \Theta(\frac{k \log n}{\epsilon^2 \mathsf{opt}})$ sets. Once we have this sampled collection $\mathcal{M}'$, we simply run the standard greedy algorithm on it. Therefore, our algorithm proceeds as follows:
\begin{enumerate}
	\item Query the oracle $\mathcal{Q}$ many times to construct a sampled set $\mathcal{M}'$ from $\mathcal{M}$ whose total sum of set sizes is at least $16kn \log n / \epsilon ^ 2$.
	\item Initialize an empty set $\mathcal{C} = \emptyset$. For $i = 1$ to $k$, find an element $e \in \mathcal{N} \setminus \mathcal{C}$ that covers the maximum number of previously uncovered sets in $\mathcal{M}'$ and add $e$ to $\mathcal{C}$.
\end{enumerate}

To prove the correctness of our approach, we first establish Lemma~\ref{lem:concentration} that shows that the fractional coverage of any subset of size at most $k$ is preserved in the sample $\mathcal{M}'$ with high probability. 

\begin{lemma} \label{lem:concentration}
	Let $\mathcal{M}'$ be a collection of $m' \geq \frac{8 k \log n}{\epsilon^2 \mathsf{opt}}$ sets sampled uniformly and independently from $\mathcal{M}$. Let $F(C)$ denote the true fraction of sets in $\mathcal{M}$ covered by a fixed subset $C \subseteq \mathcal{N}$ such that $F(C) \le \mathsf{opt}$, and let $F'(C)$ denote the empirical fraction of sets in $\mathcal{M}'$ covered by $C$. For any $0 < \epsilon < 1$, the probability that the empirical coverage deviates from the true coverage by more than $\epsilon \cdot \mathsf{opt}$ is bounded by:
	$$\mathsf{Pr}\big[|F'(C) - F(C)| > \epsilon \cdot \mathsf{opt}\big] \le 2 n^{-3 k}.$$
\end{lemma}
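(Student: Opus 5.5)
The natural route is a multiplicative Chernoff bound applied to the indicator variables of coverage. For each sampled set $S_i \in \mathcal{M}'$ ($i = 1,\dots,m'$), let $X_i = \mathbbm{1}[S_i \cap C \neq \emptyset]$. Since the $S_i$ are drawn uniformly and independently from $\mathcal{M}$, the $X_i$ are i.i.d.\ Bernoulli with mean $F(C)$. Writing $X = \sum_i X_i$, we have $F'(C) = X/m'$ and $\mu := \mathbb{E}[X] = m' F(C)$. The event $|F'(C) - F(C)| > \epsilon\cdot\mathsf{opt}$ is exactly $|X - \mu| > \epsilon\, m'\,\mathsf{opt}$. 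The difficulty is that the deviation is measured relative to $\mathsf{opt}$, not to $F(C)$, and $F(C)$ may be much smaller than $\mathsf{opt}$ (even $0$).

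To handle this uniformly, I would set $\delta = \epsilon\,\mathsf{opt}/F(C)$, which is at least $\epsilon$ because $F(C) \le \mathsf{opt}$. The target deviation is then $\delta\mu$. If $F(C)=0$ the deviation is identically zero and there is nothing to prove, so assume $F(C)>0$.

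\textbf{Lower tail.} The Chernoff bound gives $\Pr[X < (1-\delta)\mu] \le \exp(-\delta^2\mu/2)$, and we have
\[
\delta^2\mu = \frac{\epsilon^2\mathsf{opt}^2}{F(C)^2}\, m' F(C) = \frac{\epsilon^2\mathsf{opt}^2 m'}{F(C)} \ge \epsilon^2\,\mathsf{opt}\, m'.
\]
Hence the lower tail is at most $\exp(-\epsilon^2\mathsf{opt}\,m'/2) \le \exp(-4k\log n) \le n^{-3k}$. (If $\delta \ge 1$ the lower tail event is empty.)

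\textbf{Upper tail.} I would use the form $\Pr[X > (1+\delta)\mu] \le \exp\!\big(-\delta^2\mu/(2+\delta)\big)$, valid for all $\delta>0$. This is the step needing care, since $\delta$ may be large. Note that $\delta^2\mu/(2+\delta) \ge \min(\delta^2\mu/3,\ \delta\mu/3)$, according to whether $\delta \le 1$ or $\delta > 1$.
\begin{itemize}
\item For $\delta \le 1$, the bound above gives $\delta^2\mu \ge \epsilon^2\mathsf{opt}\,m'$.
\item For $\delta > 1$, we have $\delta\mu = \epsilon\,\mathsf{opt}\,m' \ge \epsilon^2\,\mathsf{opt}\,m'$, using $\epsilon<1$.
\end{itemize}
In either case the exponent is at least $\epsilon^2\mathsf{opt}\,m'/3 \ge \tfrac{8}{3}k\log n$.

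The main obstacle is that this yields $n^{-8k/3}$ rather than $n^{-3k}$, so constants need tightening. I would first try the sharper bound $\exp(-\delta^2\mu/(2+2\delta/3))$ (Bernstein form). For $\delta\le1$ the exponent is at least $\tfrac{3}{8}\,\epsilon^2\mathsf{opt}\, m' \ge 3k\log n$. For $\delta>1$ it is at least $\tfrac{3}{8}\,\delta\mu \ge 3k\log n$. Both tails are then at most $n^{-3k}$, and a union bound over the two tails gives the claimed $2n^{-3k}$.
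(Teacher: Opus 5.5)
Your proposal is correct and is essentially the paper's argument. The paper applies Bernstein's inequality to the empirical average, with variance bounded by $F(C) \le \mathsf{opt}$ and deviation $\epsilon\cdot\mathsf{opt}$; your final upper-tail exponent $\delta^2\mu/(2+2\delta/3)$, with $\delta\mu = \epsilon\,\mathsf{opt}\,m'$, is that same bound rewritten multiplicatively. Working additively as the paper does, the denominator $2F(C)+\tfrac{2}{3}\epsilon\,\mathsf{opt} \le \tfrac{8}{3}\mathsf{opt}$ handles both tails at once, so your case split on $\delta \le 1$ versus $\delta > 1$ and your separate lower-tail Chernoff bound become unnecessary.
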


\begin{proof}
	Consider a fixed subset $C \subseteq \mathcal{N}$ where $F(C) \le \mathsf{opt}$. Let $x_1, \dots, x_{m'}$ be independent Bernoulli indicator variables where $x_i = 1$ if the $i$-th sampled set is covered by $C$, and $0$ otherwise. The empirical average is $F'(C) = \frac{1}{m'} \sum x_i$, with expectation $\mathbb{E}[x_i] = F(C)$ and variance $\sigma^2 = F(C)(1-F(C)) \le F(C)$.
	
	Bernstein's inequality~\cite{bernstein1924modification} states that for the average of independent random variables bounded by $r$, with variance $\sigma^2$, the probability of a deviation $\delta$ is bounded by
	$$\mathsf{Pr}\big[|F'(C) - F(C)| > \delta\big] \le 2 \exp \left( - \frac{m' \delta^2}{2\sigma^2 + \frac{2}{3}r\delta} \right).$$
	
	In our setting, we have $r=1$ and $\delta = \epsilon \cdot \mathsf{opt}$. We substitute $m' \geq \frac{8 k \log n}{\epsilon^2 \mathsf{opt}}$ and apply the bounds $\sigma^2 \le F(C) \le \mathsf{opt}$ to simplify the exponent step by step:
	\begin{align*}
		\frac{m' \delta^2}{2\sigma^2 + \frac{2}{3}r\delta} &\ge \frac{\left( \frac{8 k \log n}{\epsilon^2 \mathsf{opt}} \right) (\epsilon \cdot \mathsf{opt})^2}{2 F(C) + \frac{2}{3} (1) (\epsilon \cdot \mathsf{opt})} \\
		&= \frac{8 k \mathsf{opt} \log n}{2 F(C) + \frac{2}{3} \epsilon \mathsf{opt}} \\
		&\ge \frac{8 k \mathsf{opt} \log n}{2 \mathsf{opt} + \frac{2}{3} \mathsf{opt}} \\
		&= \frac{8 k \mathsf{opt} \log n}{\frac{8}{3} \mathsf{opt}} \\
		&= 3 k \log n.
	\end{align*}
	
	Substituting this back into the original inequality, we obtain
	$$\mathsf{Pr}\big[|F'(C) - F(C)| > \epsilon \cdot \mathsf{opt}\big] \le 2 \exp(-3 k \log n) \le 2 \exp(-3 k \ln n) = 2 n^{-3k}$$
	which completes the proof for any fixed set $C$ whose coverage is no greater than the optimal coverage.
\end{proof}

We now formally define Algorithm~\ref{alg:simple}. To ensure the query cost remains $O(nk \log n / \epsilon ^ 2)$, the algorithm samples sets from the oracle $\mathcal{Q}$ until the cumulative cost of the sets reaches $16nk \log n / \epsilon ^ 2$.

\begin{algorithm}[H]\label{alg:simple}
	\DontPrintSemicolon
	\KwIn{Oracle $\mathcal{Q}$, element set $\mathcal{N} = [n]$, budget $k$, error parameter $\epsilon$}
	\KwOut{Subset $\mathcal{C} \subseteq \mathcal{N}$ such that $|\mathcal{C}| \le k$}
	$\mathcal{C} \gets \emptyset$\;
	$\mathcal{M}' \gets \emptyset$\;
	$q \gets 0$\;
	\While{$q < 16nk \log n / \epsilon ^ 2$}{
		$S \gets \text{Query } \mathcal{Q}$\;
		$\mathcal{M}' \gets \mathcal{M}' \cup \{S\}$\;
		$q \gets q + |S|$\;
	}
	\For{$i = 1$ \KwTo $k$}{
		$u \gets \arg\max_{e \in \mathcal{N}} |\{S \in \mathcal{M}' \mid e \in S \text{ and }S \cap \mathcal{C} = \emptyset\}|$\;
		$\mathcal{C} \gets \mathcal{C} \cup \{u\}$\;
	}
	\Return{$\mathcal{C}$}\;
	\caption{\textsf{Simple Greedy}}
\end{algorithm}

Before proving our main theorem, we state Lemma~\ref{lem:sample_size} that gives a lower bound on the value of $m'$ (the number of sampled sets from $\mathcal{M}$). We apply this lower bound to Lemma~\ref{lem:concentration} in the proof of Theorem~\ref{thm:main_result}.
\begin{lemma} \label{lem:sample_size}
	With probability at least $1/2$, Algorithm~\ref{alg:simple} samples $m' \geq \frac{8 k \log n}{\epsilon^2 \mathsf{opt}}$ sets from $\mathcal{M}$ before reaching the cumulative cost threshold.
\end{lemma}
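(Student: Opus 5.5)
The plan is to show that the expected size of a sampled set is small compared to $n\cdot\mathsf{opt}$, and then use Markov's inequality on the total cost of the first $m_0$ samples.

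\textbf{Bounding the expected set size.} Let $S$ be a uniform sample from $\mathcal{M}$. By linearity,
\[ \mathbb{E}[|S|] = \sum_{e\in\mathcal{N}} \mathsf{Pr}[e\in S] = \sum_{e\in\mathcal{N}} F(\{e\}). \]
Each singleton $\{e\}$ is a feasible solution since $k\ge 1$, so $F(\{e\}) \le \mathsf{opt}$. Hence $\mathbb{E}[|S|] \le n\cdot \mathsf{opt}$. This is the key structural fact. It does not even need the bound $\mathsf{opt}\ge k/n$, although it can be strengthened to $\mathbb{E}[|S|]\le n\cdot\mathsf{opt}/k$ by averaging over a partition of $\mathcal{N}$ into blocks of size $k$. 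The weaker bound already suffices for what follows.

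\textbf{Markov on the first $m_0$ samples.} Set $m_0 = \lceil 8k\log n/(\epsilon^2\mathsf{opt})\rceil$ and consider the first $m_0$ oracle answers $S_1,\dots,S_{m_0}$. Couple the algorithm with an infinite i.i.d.\ stream of oracle answers, so these sets are well defined regardless of when the loop stops. By the previous step,
\[ \mathbb{E}\Big[\sum_{i\le m_0}|S_i|\Big] \le m_0\, n\,\mathsf{opt} \le 8kn\log n/\epsilon^2 + n\,\mathsf{opt}. \]
The ceiling is absorbed because $n\,\mathsf{opt}\le n \le 8kn\log n/\epsilon^2$, so the expectation is at most $16kn\log n/\epsilon^2$. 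Markov's inequality then gives
\[ \mathsf{Pr}\Big[\sum_{i\le m_0}|S_i| \ge 16kn\log n/\epsilon^2\Big] \le \tfrac12 . \]

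\textbf{Converting to the stopping rule.} If the cost of the first $m_0$ samples is below the threshold, the while loop has not yet terminated after $m_0$ samples, so it draws at least $m_0$ sets. Therefore $m'\ge m_0 \ge 8k\log n/(\epsilon^2\mathsf{opt})$ with probability at least $1/2$.

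\textbf{Main obstacle.} The only delicate point is the constant slack. Markov needs the expectation to be at most half the threshold, which is $8kn\log n/\epsilon^2$, whereas we only showed at most $8kn\log n/\epsilon^2+n\,\mathsf{opt}$ because of rounding. If the strict-versus-nonstrict inequality or the rounding causes trouble, there are two fixes:
\begin{itemize}
\item use the sharper bound $\mathbb{E}|S|\le n\,\mathsf{opt}/k$ via the partition argument, which gives a factor-$k$ of slack;
\item or take $m_0 = \lceil 8k\log n/(\epsilon^2\mathsf{opt})\rceil \le 8k\log n/(\epsilon^2 \mathsf{opt})+1$ and observe that the resulting expectation is still strictly below the threshold when $\log n\ge 1$.
\end{itemize}
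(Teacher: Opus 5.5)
Your approach is the paper's: the double-counting identity $\mathbb{E}|S| = \sum_{e} F(\{e\}) \le n\,\mathsf{opt}$, coupling with an infinite i.i.d.\ stream of oracle answers, and Markov's inequality on the cost of the first $8k\log n/(\epsilon^2\mathsf{opt})$ draws. The threshold $T = 16kn\log n/\epsilon^2$ is exactly twice the resulting expectation bound. The paper implicitly treats $8k\log n/(\epsilon^2\mathsf{opt})$ as an integer, and under that convention your argument is the paper's proof.

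\textbf{The rounding step fails.} After absorbing the ceiling you only know that the expectation is at most $T$ itself. Markov then gives $\Pr[\text{cost} \ge T] \le 1$, not $1/2$. You notice this under ``Main obstacle'', but neither proposed fix works.
\begin{itemize}
\item The second fix repeats the mistake: an expectation strictly below $T$ does not give a bound of $1/2$. You need it to be at most $T/2$.
\item The first fix relies on $\mathbb{E}|S| \le n\,\mathsf{opt}/k$, which is false. If $\mathcal{M}$ consists of the single set $\mathcal{N}$, then $\mathsf{opt}=1$ but $\mathbb{E}|S| = n > n/k$ for $k \ge 2$. The partition argument breaks because for a block $B$ you only have $F(B) = \Pr[S\cap B\neq\emptyset] \le \sum_{e\in B} F(\{e\})$. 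That union-bound inequality points the wrong way for your purpose.
\end{itemize}

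\textbf{The correct repair.} Let $m_0 = \lceil 8k\log n/(\epsilon^2\mathsf{opt})\rceil$. The loop tests $q < T$ before each draw, so $m' \ge m_0$ already holds whenever the first $m_0 - 1$ draws have total cost below $T$; the $m_0$-th draw is then made regardless of its size. Since $m_0 - 1 < 8k\log n/(\epsilon^2\mathsf{opt})$, the expected cost of those $m_0-1$ draws is less than $8kn\log n/\epsilon^2 = T/2$. Markov then gives failure probability less than $1/2$, which proves the lemma exactly without any integrality assumption.
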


\begin{proof}
	First, let $e^* \in [n]$ be an element such that its coverage fraction $F(\{e^*\})$ is maximized. Since any optimal solution of size $k \ge 1$ must cover at least as many sets as the single best element, it immediately follows that $F(\{e^*\}) \le \mathsf{opt}$.
	
	Next, we bound the average size of the sets in $\mathcal{M}$ using a double-counting argument. The sum of the sizes of all sets in $\mathcal{M}$ is exactly equal to the total number of element-set incidences. Let $m$ be the total number of sets in $\mathcal{M}$. The average set size is therefore:
	$$\frac{1}{m} \sum_{S \in \mathcal{M}} |S| = \frac{1}{m} \sum_{e \in \mathcal{N}} \big(m \cdot F(\{e\})\big) = \sum_{e \in \mathcal{N}} F(\{e\}).$$
	Since $F(\{e\}) \le F(\{e^*\}) \le \mathsf{opt}$ for every element, we can bound the sum over all $n$ elements by:
	$$\sum_{e \in \mathcal{N}} F(\{e\}) \le n \cdot F(\{e^*\}) \le n \cdot \mathsf{opt}.$$
	
	Now, for the sake of analysis, assume Algorithm~\ref{alg:simple} continues sampling uniformly and independently even if the cumulative size of the sets goes beyond the stopping threshold. Let us look at the first $m^* = \frac{8 k \log n}{\epsilon^2 \mathsf{opt}}$ random selections of sets used to construct $\mathcal{M}'$. 
	
	By linearity of expectation, the expected total sum of the sizes of these $m^*$ sets is bounded by:
	$$\mathbb{E}\big[ \text{total size of the first } m^* \text{ sets} \big] \le m^* \cdot (n \cdot \mathsf{opt}) = \left( \frac{8 k \log n}{\epsilon^2 \mathsf{opt}} \right) \cdot n \cdot \mathsf{opt} = \frac{8 n k \log n}{\epsilon^2}.$$
	
	Since set sizes are non-negative random variables, we can apply Markov's inequality. The probability that the sum of the sizes of these $m^*$ sets exceeds twice its expected value is at most $1/2$. Thus, with probability at least $1/2$, the total size of the first $m^*$ sets is bounded by:
	$$2 \cdot \frac{8 n k \log n}{\epsilon^2} = \frac{16 n k \log n}{\epsilon^2}.$$
	
	Because $\frac{16 n k \log n}{\epsilon^2}$ is exactly the stopping threshold of our algorithm, achieving a cumulative size below this bound implies that with probability at least $1/2$, Algorithm~\ref{alg:simple} successfully queries at least $m^* = \frac{8 k \log n}{\epsilon^2 \mathsf{opt}}$ sets before the while-loop terminates.
\end{proof}

We are now ready to prove Theorem~\ref{thm:main_result}.

\begin{theorem}[restatement of the result of Borgs et al.~\cite{borgs2014maximizing} ] \label{thm:main_result}
	For any $n \geq 4$, $k \geq 1$, and  $0 < \epsilon < 1$, with probability at least $1/3$, Algorithm~\ref{alg:simple} achieves an approximation factor of $(1 - 1/e - 2\epsilon)$ for the maximum $k$-cover problem. Both the query cost and the runtime of the algorithm are bounded by $O(nk \log n / \epsilon ^ 2)$.
\end{theorem}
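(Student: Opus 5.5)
The plan is to combine Lemma~\ref{lem:sample_size} with Lemma~\ref{lem:concentration}, applied through a union bound, and then run the standard greedy analysis on the empirical coverage function $F'$.

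\textbf{Step 1 (enough samples).} By Lemma~\ref{lem:sample_size}, with probability at least $1/2$ the algorithm collects $m' \ge \frac{8k\log n}{\epsilon^2\mathsf{opt}}$ sets. One subtlety is that $m'$ is a stopping time, not a fixed number. I would handle this as in the proof of Lemma~\ref{lem:sample_size}: conceptually draw an infinite i.i.d.\ sequence and work on the event that the first $m^*$ draws have total size at most the threshold. On that event $m' \ge m^*$, and the concentration bound is applied to the actual prefix of length $m'$. Since $m'$ depends only on set sizes, not on coverage, I would state this carefully or condition on the value of $m'$. For simplicity I would apply Lemma~\ref{lem:concentration} to each possible value $m'\in[m^*, 16nk\log n/\epsilon^2]$ and union bound over these at most polynomially many values as well.

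\textbf{Step 2 (uniform concentration).} I would apply Lemma~\ref{lem:concentration} to every $C$ with $|C|\le k$. The number of such sets is at most $\sum_{j\le k}\binom{n}{j}\le n^k$. Multiplying by the polynomially many possible values of $m'$, the failure probability is $2n^{-3k}\cdot n^k\cdot \mathrm{poly}$. For $n\ge4$ this is at most $1/6$, after absorbing the extra factor (for instance by noting $n^{-2k}\cdot 16nk\log n/\epsilon^2$ is small, or by using the slack in the constant). The obstacle here is that Lemma~\ref{lem:concentration} requires $F(C)\le\mathsf{opt}$, but this holds automatically for every $|C|\le k$ by the definition of $\mathsf{opt}$. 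This leaves the event that all $|C|\le k$ satisfy $|F'(C)-F(C)|\le\epsilon\,\mathsf{opt}$ together with $m'\ge m^*$. Its probability is at least $1/2-1/6=1/3$.

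\textbf{Step 3 (greedy on the sample).} $F'$ is a monotone submodular coverage function on $\mathcal{M}'$. The classical greedy guarantee therefore gives $F'(\mathcal{C})\ge(1-1/e)F'(C^*)$, where $C^*$ is an optimal solution for $F$. Chaining the inequalities,
\[F(\mathcal{C})\ge F'(\mathcal{C})-\epsilon\mathsf{opt}\ge(1-1/e)(F(C^*)-\epsilon\mathsf{opt})-\epsilon\mathsf{opt}\ge(1-1/e-2\epsilon)\mathsf{opt}.\]

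\textbf{Step 4 (cost).} The loop stops once $q$ reaches the threshold. The final set has size at most $n$, so the query cost is at most $16nk\log n/\epsilon^2+n$. For the greedy phase, I would maintain for each element the count of uncovered sets containing it, using bucket queues. Each covered set decrements the counts of its elements once, so the total time is linear in $\sum|S|$ plus $O(n+k)$, which gives $O(nk\log n/\epsilon^2)$.

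I expect the main obstacle to be the stopping-time subtlety in Step 1: making sure the concentration still holds even though $m'$ is random. The union bound over possible values of $m'$, or an argument that the sizes and the coverage indicators can be decoupled, resolves it.
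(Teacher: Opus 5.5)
Your Steps 1, 3 and 4 are fine and match the paper's proof: Lemma~\ref{lem:sample_size} for the number of samples, a union bound of Lemma~\ref{lem:concentration} over the at most $n^k$ sets with $|C|\le k$, the standard greedy chain, and the cost accounting. The gap is in Step 2. The extra union bound over the possible values of $m'$ cannot be absorbed.

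You union over the range $[m^*,16nk\log n/\epsilon^2]$, and since all sets are non-empty, $m'$ can indeed be that large. Because $\mathsf{opt}\ge k/n$ gives $m^*\le 8n\log n/\epsilon^2$, this range contains at least about $8nk\log n/\epsilon^2$ values. Your bound is therefore at least $2n^{-2k}\cdot 8nk\log n/\epsilon^2=16k\,n^{1-2k}\log n/\epsilon^2$. For $k=1$ this is of order $\log n/(n\epsilon^2)$, which is unbounded as $\epsilon\to 0$. For $n=4$, $k=1$ it exceeds $1$ for every $\epsilon<1$.

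The sharper tail $2\exp(-3k\log n\cdot t/m^*)$ for $t>m^*$ does not rescue this. Summing over $t\ge m^*$ still gives order $n^{-3k}m^*/(k\log n)\approx n^{-3k}/(\epsilon^2\mathsf{opt})$ per set $C$, an extra factor as large as $n/(k\epsilon^2)$. There is also no slack in the constants: the $8$ in Lemma~\ref{lem:concentration} and the threshold $16nk\log n/\epsilon^2$ of Algorithm~\ref{alg:simple} are fixed by the statement. So your claimed $1/6$ bound on the failure of uniform concentration, and with it the success probability $1/3$, is not established for the full range $n\ge 4$, $k\ge 1$, $0<\epsilon<1$.

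The paper avoids this factor by applying Lemma~\ref{lem:concentration} directly to the realized collection $\mathcal{M}'$ once $m'\ge m^*$. It gets failure at most $2n^{-2k}\le 1/8$, hence total failure at most $1/2+1/8<2/3$. Your concern is legitimate: that step tacitly treats the data-dependent $m'$ as fixed, even though the stopping rule depends on set sizes, which are correlated with coverage.

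The right fix is a concentration bound that is uniform over prefix lengths, not a union over each length. For example, apply Freedman's (martingale Bernstein) maximal inequality to $\sum_{i\le t}(x_i-F(C))$ on dyadic blocks $t\in[2^a m^*,2^{a+1}m^*)$. This controls all prefixes $t\ge m^*$ simultaneously and loses only a constant factor in the exponent, with no dependence on $\epsilon$ or on the number of possible stopping times. The numerical constants in the statement may then need adjustment.
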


\begin{proof}
   By Lemma~\ref{lem:sample_size}, with probability at least $1/2$, the number of sampled sets in $\mathcal{M}'$ (denoted by $m'$) is at least $\frac{8 k \log n}{\epsilon^2 \mathsf{opt}}$. Let us ignore the failure probability for now. To bound the error of the algorithm, we apply the union bound on top the guarantee of Lemma~\ref{lem:concentration}. Let $\mathcal{P}$ be the collection of all subsets $C \subseteq \mathcal{N}$ such that $|C| \le k$. $F(C) \le \mathsf{opt}$ holds for all such subsets by optimality of $\mathsf{opt}$. The number of subsets of size at most $k$ is $\sum_{i=1}^k \binom{n}{i} \le n^k$. Using Lemma \ref{lem:concentration}, the probability that there exists any $C \in \mathcal{P}$ such that $|F'(C) - F(C)| > \epsilon \cdot \mathsf{opt}$ is bounded by
	$$ \mathsf{Pr}\big[\exists C \in \mathcal{P} : |F'(C) - F(C)| > \epsilon \cdot \mathsf{opt}\big] \le n^k \cdot 2n^{-3k} = 2n^{-2k} .$$
	Thus, with  probability at least $1 - 2n^{-2k}$, for every relevant subset $C \in \mathcal{P}$, the empirical coverage $F'(C)$ is within $\epsilon \cdot \mathsf{opt}$ of the true coverage $F(C)$. Adding the $1/2$ failure probability of  $m'$ not reaching the desired threshold, makes the total failure probability bounded by $2n^{-2k} + 1/2 \leq 1/2 + 2 (4^{-2}) < 2/3$. This implies that the following conditions hold with probability at least $1/3$ which is the desired success probability of the statement of the theorem.
	
	Let $C^*$ be the optimal set of $k$ elements such that $F(C^*) = \mathsf{opt}$. On the sampled collection $\mathcal{M}'$, the greedy algorithm finds a solution $\mathcal{C}$ such that $F'(\mathcal{C}) \ge (1 - 1/e) F'(C^*)$. Applying the concentration bound to both $C^*$ and $\mathcal{C}$ we get
	\begin{align*}
		F(\mathcal{C}) &\ge F'(\mathcal{C}) - \epsilon \cdot \mathsf{opt} \\
		&\geq (1 - 1/e) F'(C^*) - \epsilon \cdot \mathsf{opt} \\
		&\geq (1 - 1/e) (F(C^*) - \epsilon \cdot \mathsf{opt}) - \epsilon \cdot \mathsf{opt} \\
		&= (1 - 1/e)\mathsf{opt} - (1 - 1/e)\epsilon \mathsf{opt} - \epsilon \mathsf{opt} \\
		&\geq (1 - 1/e)\mathsf{opt} - \epsilon \mathsf{opt} - \epsilon \mathsf{opt} \\
		&= (1 - 1/e - 2\epsilon)\mathsf{opt}
	\end{align*}
	where we use the fact that $(1 - 1/e) < 1$.
	
	The total query cost is bounded by $O(\frac{nk \log n}{\epsilon^2})$ and the optimal element in each iteration of the greedy algorithm can be found in time $O(n)$ which takes time $O(nk)$ in total.
\end{proof}

%% file: src/curse.tex
\section{The Lightweight Greedy Algorithm and the Lower Bound}\label{sec:lowerbound}
While Algorithm~\ref{alg:simple} provides a strong $(1 - 1/e - \epsilon)$ approximation guarantee, its query cost and runtime both scale as $\tilde{O}_{\epsilon}(nk)$. In many large-scale applications, particularly where $n$ and $k$ are both large, this multiplicative factor of $k$ is undesirable. Ideally, we seek an algorithm that runs in time almost linear in $n$. 

The $k$ factor in our previous analysis arises as a mechanical necessity to satisfy the union bound over all $\binom{n}{k} \approx n^k$ possible subsets. To bypass this, one might consider a lightweight version of the greedy approach where the sampling threshold is reduced by a factor of $k$. Furthermore, the $O(nk)$ runtime of the greedy selection step can be improved to $O(n + q' \log n)$ where $q'$ is the sum of the set sizes in $\mathcal{M}'$ by using a priority queue (heap) to maintain the marginal gains of each element. We formalize this in Algorithm~\ref{alg:fast_greedy}.

\begin{algorithm}[H]\label{alg:fast_greedy}
	\DontPrintSemicolon
	\KwIn{Oracle $\mathcal{Q}$, element set $\mathcal{N} = [n]$, budget $k$, error parameter $\epsilon$}
	\KwOut{Subset $\mathcal{C} \subseteq \mathcal{N}$ such that $|\mathcal{C}| \le k$}
	$\mathcal{C} \gets \emptyset, \mathcal{M}' \gets \emptyset, q \gets 0$\;
	\While{$q < 16n \log n / \epsilon ^ 2$}{
		$S \gets \text{Query } \mathcal{Q}$\;
		$\mathcal{M}' \gets \mathcal{M}' \cup \{S\}$\;
		$q \gets q + |S|$\;
	}
	Build a max-heap $H$ containing all $e \in \mathcal{N}$ with keys $w_e = |\{S \in \mathcal{M}' \mid e \in S\}|$\;
	\For{$i = 1$ \KwTo $k$}{
		$u \gets H.\text{extract\_max}()$\;
		$\mathcal{C} \gets \mathcal{C} \cup \{u\}$\;
		Update keys in $H$ for elements that overlap with $u$ on sets in $\mathcal{M}'$\;
	}
	\Return{$\mathcal{C}$}\;
	\caption{\textsf{The Lightweight Greedy Algorithm}}
\end{algorithm}

The query cost of Algorithm~\ref{alg:fast_greedy} is $\tilde{O}_{\epsilon}(n)$, a $k$-factor improvement over Algorithm~\ref{alg:simple}. However, by reducing the sample size, we lose the uniform concentration guarantee over all subsets of size $k$. This raises a natural question: how much does the approximation factor deteriorate in this case? Borgs et al.~\cite{borgs2014maximizing} already establish that when the sample complexity is reduced by a multiplicative factor of $\beta$, the approximation factor is likewise degraded by a factor of at most $O(\beta)$; this would imply that Algorithm~\ref{alg:fast_greedy}, with a budget reduction of $\beta = k$, possesses an approximation factor of $\Omega(1/k)$.

The $\Omega(1/k)$ approximation guarantee is indeed a trivial observation. By the property of submodular diminishing returns, the best single element always covers at least a $1/k$ fraction of the sets covered by the optimal $k$ elements, i.e., $\max_e F(\{e\}) \ge \frac{1}{k} \mathsf{opt}$. Also, one can think of Algorithm~\ref{alg:fast_greedy} as an algorithm which is $1-1/e-\epsilon$ approximation for $k=1$ and moreover, it adds $k-1$ additional elements to this solution. Thus, its final coverage cannot be worse than $(1-1/e-\epsilon)\mathsf{opt}_1$ where $\mathsf{opt}_1 = \max_{e \in \mathcal{N}} F(\{e\})$. Therefore, Algorithm~\ref{alg:fast_greedy} is at least $\frac{1-1/e-\epsilon}{k}$ approximation.

The gap between the $\tilde O_{\epsilon}(nk)$ complexity of Algorithm~\ref{alg:simple}, which guarantees a $(1-1/e-\epsilon)$ approximation, and the $\tilde O_{\epsilon}(n)$ complexity of Algorithm~\ref{alg:fast_greedy} raises a fundamental question: can we achieve a $(1-1/e-\epsilon)$ approximation with $\tilde{O}_{\epsilon}(n)$ query cost, or is the $k$ multiplicative factor in the query cost necessary?

We now establish a strict lower bound demonstrating that any algorithm restricted to an $O(n)$ query cost cannot achieve an approximation factor better than $O(1/\sqrt{k})$. In other words, the algorithm's performance deteriorates by a gap of $\Omega(\sqrt{k})$ compared to the optimal solution. The intuition is illustrated in Figure~\ref{fig:lower_bound_construction}. If an algorithm is constrained to an $O(n)$ budget, it cannot afford to observe the entire instance. We formalize this limitation in Theorem~\ref{lem:lower_bound}.

\begin{figure}[htbp]
	\centering
	\begin{tikzpicture}[
		font=\sffamily,
		largeSet/.style={rectangle, rounded corners, draw=blue!60, fill=blue!10, thick, minimum width=4cm, minimum height=1.5cm, align=center},
		singleton/.style={circle, draw=red!60, fill=red!10, thick, inner sep=2pt, minimum size=0.6cm},
		domainBox/.style={rectangle, draw=gray!80, dashed, thick, inner sep=10pt}
		]
		
		\node[domainBox, minimum width=5cm, minimum height=3cm] (domain1) {};
		\node[above=0.1cm of domain1.north, font=\bfseries] {Elements $\{1, 2, \dots, n/2\}$};
		
		\node[largeSet] (L1) at (domain1.center) {};
		\node[largeSet, yshift=-0.15cm, xshift=0.15cm] (L2) at (domain1.center) {};
		\node[largeSet, yshift=-0.3cm, xshift=0.3cm] (L3) at (domain1.center) {$s$ copies of \\ $L = \{1, \dots, n/2\}$};
		
		\node[domainBox, right=2cm of domain1, minimum width=6cm, minimum height=3cm] (domain2) {};
		\node[above=0.1cm of domain2.north, font=\bfseries] {Elements $\{n/2+1, \dots, n\}$};
		
		\node[singleton] (s1) at ([xshift=1cm, yshift=0.8cm]domain2.west) {$x_1$};
		\node[singleton] (s2) at ([xshift=2.5cm, yshift=-0.5cm]domain2.west) {$x_2$};
		\node[singleton] (s3) at ([xshift=4cm, yshift=0.5cm]domain2.west) {$x_3$};
		\node[singleton] (s4) at ([xshift=5cm, yshift=-0.8cm]domain2.west) {$x_4$};
		
		\node[below=0.2cm of s2] {$\dots$};
		\node[above=0.2cm of s3] {$\dots$};
		
		
		\node[below=1.5cm of domain1, xshift=4cm] (oracle) {\Large \textbf{Oracle $\mathcal{Q}$}};
		\draw[->, thick, shorten >=5pt, shorten <=5pt] (domain1.south) ++(0.3,0) to[bend right=20] node[midway, left=0.1cm, font=\small] {Probability $\approx 1/\sqrt{k}$} (oracle.west);
		\draw[->, thick, shorten >=5pt, shorten <=5pt] (domain2.south) ++(-1,0) to[bend left=20] node[midway, right=0.1cm, font=\small] {Probability $\approx 1-1/\sqrt{k}$} (oracle.east);
		
	\end{tikzpicture}
	\caption{Construction of the hard instance $(\mathcal{N},\mathcal{M})$.}
	\label{fig:lower_bound_construction}
\end{figure}

\begin{theorem}\label{lem:lower_bound}
	For any $k \geq 64$ and any $1 \leq c \leq \sqrt{k}$ there exists an instance $(\mathcal{N},\mathcal{M})$ of the $k$-cover problem such that any algorithm that makes queries to oracle $\mathcal{Q}$ with a total cost of $c\cdot n$ cannot achieve an expected approximation factor better than $O(c/\sqrt{k})$ for the  $k$-Cover problem.
\end{theorem}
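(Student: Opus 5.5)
We build the instance of Figure~\ref{fig:lower_bound_construction} and apply Yao's principle: we fix a distribution over instances and show that every deterministic algorithm with query cost $c\cdot n$ has small expected coverage relative to $\mathsf{opt}$.

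\textbf{Construction.} The ground set is $\mathcal{N}=[n]$. The multiset $\mathcal{M}$ contains a ``large'' part made of copies of $L=\{1,\dots,n/2\}$, weighted so that a uniform sample is a copy of $L$ with probability $p\approx 1/\sqrt{k}$. The remaining mass, about $1-1/\sqrt{k}$, is spread over singletons $\{x\}$ for $x$ in a hidden random subset $X\subseteq\{n/2+1,\dots,n\}$ of size $k-1$, each with equal weight $(1-p)/(k-1)$. The hidden part is which $k-1$ elements of the right half carry the singleton mass; $X$ is drawn uniformly at random and the labels are permuted.

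\textbf{Step 1: optimum and per-query cost.} Choosing one element of $L$ together with all of $X$ covers everything, so $\mathsf{opt}=1$. Each query returns a copy of $L$ with probability $p$, at cost $n/2$, so the expected cost per query is $pn/2+(1-p)\approx n/(2\sqrt{k})$. A budget of $cn$ therefore allows only about $2c\sqrt{k}$ queries in expectation. By Markov, or by simply truncating, the algorithm makes at most $O(c\sqrt{k})$ queries with constant probability; when it exceeds this we charge it full value $1$, which costs only a constant in the bound.

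\textbf{Step 2: what the algorithm learns.} Among $T=O(c\sqrt{k})$ queries, at most $T$ distinct singletons of $X$ are revealed, since the copies of $L$ reveal nothing about $X$. Conditioned on the transcript, the unrevealed elements of $X$ are uniformly distributed among the remaining $\ge n/2-T$ unseen right-half elements. Every other element the algorithm picks therefore lands in $X$ with probability $O(k/n)$ each, giving $O(k^2/n)$ expected hits in total, which is negligible once $n\ge k^2\sqrt{k}$ and we take $n$ large. The algorithm's coverage is then at most
\[p+\frac{1-p}{k-1}\Bigl(T+O(k^2/n)\Bigr)=O\!\left(\frac{1}{\sqrt{k}}+\frac{c\sqrt{k}}{k}\right)=O\!\left(\frac{c}{\sqrt{k}}\right),\]
using $c\ge1$.

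\textbf{Main obstacle.} The delicate step is making the cost-to-queries conversion rigorous for adaptive algorithms. The number of queries is a stopping time, and the costs are random. I would handle this by having the algorithm simulate only queries whose cumulative cost stays within $cn$. For each query, the event of returning $L$ is independent of the past, so the number of non-$L$ answers obtained within the budget is dominated by the number of queries up to the point where $\approx 2c\sqrt{k}$ copies of $L$ have appeared. That count has expectation $O(c\sqrt{k}/p)$, which is too weak. To repair this, I would rebalance the construction: count revealed distinct singletons by noting that non-$L$ answers cost $1$ each, so their number is at most $cn$ but they only reveal elements of $X$, and with $k-1$ equally weighted singletons many queries would reveal all of them. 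This breaks the bound, so the construction must instead make singletons rare relative to $L$. I would therefore set the $L$ mass to $1-1/\sqrt{k}$ and the singleton mass to $1/\sqrt{k}$ spread over $k$ elements, so that $\mathsf{opt}\ge$ roughly $1$. The number of queries is then $\approx 2c$, revealing $O(c)$ singletons, and the algorithm gets $1-1/\sqrt{k}$ plus little; this yields only a ratio close to $1$, so it also fails. The genuine resolution is to mirror the figure's weights, with $L$ at probability $\approx 1/\sqrt{k}$ but having $L$ dominate cost, and to argue through the Wald identity that expected revealed singletons are at most $(1-p)\,\mathbb{E}[\#\text{queries}]\le (1-p)\cdot cn/(pn/2)=O(c\sqrt{k})$. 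Wald's identity applies because the cost of each query is independent of the algorithm's choice to continue, and linearity then gives the bound above in expectation directly. This is the step I would write most carefully.
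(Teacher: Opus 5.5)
Your final argument, which uses Wald's identity, is correct. It differs from the paper at the key counting step, and two intermediate claims in your write-up are wrong and should be deleted.

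\textbf{The paper's counting step.} The paper's instance has $s=\lceil\sqrt{k}\rceil$ copies of $L$ and $k-s$ singletons on a random $Q$, with $n=2(k-s)s$. The paper never reasons about expected cost per query. It uses the hard cap directly: each $L$-answer costs $n/2$, so at most $2c$ of them can ever be received. Coupling the run with the i.i.d.\ stream of oracle answers, every query precedes the $(2c+1)$-th $L$-answer, whose expected index is $(2c+1)k/s\le(2c+1)\sqrt{k}$.

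This negative-binomial domination is exactly the argument you dismissed in your last paragraph. You wrote that one must wait for about $2c\sqrt{k}$ copies of $L$, giving $O(c\sqrt{k}/p)$. But the budget admits only $2c$ copies of $L$, so the count is $(2c+1)/p=O(c\sqrt{k})$, which is already what you need.

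\textbf{Your counting step.} Wald's identity is a valid replacement. The event $\{N\ge i\}$ is determined before the $i$-th answer. The size of that answer is independent of the past and of the hidden $X$. Also $N\le cn$ is bounded. Hence $\mu\,\mathbb{E}[N]=\mathbb{E}\bigl[\sum_{i\le N}|S_i|\bigr]\le cn$, or $cn+n/2$ if the last query may overshoot, with $\mu\ge pn/2$. The expected number of singleton answers is then $(1-p)\,\mathbb{E}[N]=O(c/p)=O(c\sqrt{k})$.

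\textbf{Comparison.} Wald buys a little generality, since it only needs the budget to hold in expectation. The paper's domination is more elementary and uses only the hard cap. Your other deviation is taking $n\ge k^{5/2}$, so that blind picks contribute $O(k^2/n)$ hits. This is a simpler way to handle unseen elements than the paper's per-pick bound $(k-s)/(n/4)$ with a Markov split on $V$ at $n=\Theta(k^{3/2})$. Either works, since the statement lets you choose $n$.

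\textbf{What to delete.} Step 1 as written is wrong. Suppose the algorithm exceeds $O(c\sqrt{k})$ queries with constant probability and you charge it value $1$ on that event. Then the expected ratio picks up an additive $\Omega(1)$, which destroys the $O(c/\sqrt{k})$ bound; it does not cost ``only a constant.'' No truncation is needed. Your coverage bound $p+\frac{1-p}{k-1}\bigl(T+O(k^2/n)\bigr)$ is linear in the number $T$ of revealed singletons, and conditioning on the transcript handles the blind picks. So you can take expectations and plug in $\mathbb{E}[T]$ from Wald or from the negative-binomial bound, as you note at the very end. Remove the truncation remark and the abandoned rebalancing detour, and the proof is complete.
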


\begin{proof}
	Let $s = \lceil \sqrt{k} \rceil$ and set $n = 2(k-s)s$. We construct a ground set $\mathcal{N}$ of $n$ elements partitioned into two blocks: $B_1 = \{1, 2, \dots, n/2\}$ and $B_2 = \{n/2+1, \dots, n\}$. 
	
	We construct the target multiset $\mathcal{M}$ of size $k$ as follows:
	\begin{enumerate}
		\item We include $s$ identical copies of a large set $L = \{1, 2, \dots, n/2\}$ in $\mathcal{M}$. Note that the cardinality of $L$ is $n/2$.
		\item We select a subset $Q \subset B_2$ uniformly at random such that $|Q| = k - s$. For each $x \in Q$, we include a singleton set $S_x = \{x\}$ in $\mathcal{M}$. 
	\end{enumerate}
	
	 First observe that an optimal choice of at most $k$ elements would select exactly one element from $B_1$ and all $k-s$ elements of $Q$. This solution has a size of $1 + k - s \le k$ and covers all $s$ copies of $L$ as well as all $k-s$ singleton sets. Thus, the optimal coverage is exactly $k$ sets (or $\mathsf{opt} = 1$ in fractional terms).
	
	Now, consider an arbitrary algorithm that interacts with oracle $\mathcal{Q}$ and is restricted to a total query cost of $cn$. Since the cost of each query that results in set $L$ is $n/2$, our oracle returns at most $2c$ such sets throughout the lifetime of the algorithm. Moreover,  for each query, the oracle returns $L$ with probability $s/k$ and a singleton set with probability $1-s/k$.  \claudeedit{All of these queries happen before the $(2c+1)$-th query that returns $L$ (since that query would exceed the budget), and the expected index of that query is $(2c+1)k/s$.} Thus, the expected number \claudeedit{$\mathbb{E}[V]$} of singleton sets that our algorithm encounters throughout its lifetime is bounded by $\claudeedit{(2c+1)k/s \leq (2c+1)\sqrt{k}}$. Notice that our construction of $Q$ is random and thus our algorithm is indifferent between the singleton sets that it does not visit during its lifetime. \claudeedit{If $V \le n/4$, each of the at most $k$ elements that our algorithm picks among the at least $n/2 - V \ge n/4$ unvisited elements of $B_2$ belongs to $Q$ with probability at most $(k-s)/(n/4)$. Otherwise, our algorithm covers at most $k-s$ singleton sets, and by Markov's inequality $\Pr[V > n/4] \le 4\mathbb{E}[V]/n$.} Thus, the expected number of singleton sets that are covered by the solution of our algorithm would be bounded by 
\claudeedit{
\begin{align*}
	\mathbb{E}[V] + k\frac{k-s}{n/4} + (k-s)\frac{4\mathbb{E}[V]}{n} &= \mathbb{E}[V] + \frac{2k}{s} + \frac{2\mathbb{E}[V]}{s} && (\text{since } n = 2(k-s)s)\\
	&\leq (2c+1)\sqrt{k} + 2\sqrt{k} + 2(2c+1) \\
	&\leq (3c+4)\sqrt{k} && (\text{since } \sqrt{k} \geq 8).
\end{align*}
}
	This in addition to the $s$ large sets that our algorithm may cover results in a total coverage $\claudeedit{(3c+4)\sqrt{k} + s \leq (3c+5)\sqrt{k} + 1 \leq (3c+6)\sqrt{k}}$ and thus the coverage ratio is bounded by $\claudeedit{(3c+6)/\sqrt{k}}$.
\end{proof}

Although the example illustrated in Theorem~\ref{lem:lower_bound} does not directly correspond to an instance of the influence maximization problem, we show a similar example in that context that shows Algorithm~\ref{alg:fast_greedy} with a total DFS/BFS cost of $c \cdot (m+n)$, does not provide an approximation guarantee better than $O(c k^{-1/4})$.

\begin{theorem}\label{lem:lower_bound2}
	For any $n \geq k \geq 2$ such that $k < n^{4/5}$ and any $1 \leq c \leq k^{1/4}$ there exists an instance of the influence maximization problem with $n$ nodes for which the approximation factor of Algorithm~\ref{alg:fast_greedy} with total DFS/BFS budget of $c \cdot (m+n)$ has an expected approximation factor of $O(c \cdot k^{-1/4})$.
\end{theorem}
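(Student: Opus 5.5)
The plan is to mimic the construction of Theorem~\ref{lem:lower_bound} inside a graph. A dense block will play the role of the expensive set $L$: it consumes the whole DFS/BFS budget after a few hits. Many small stars will play the role of the singleton sets: they carry almost all of $\mathsf{opt}$, yet are rarely seen by the sampler.

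\textbf{Construction.} All edge probabilities are $1$.
\begin{enumerate}
\item Take a complete directed graph $D$ on $t=k$ nodes. It has $\Theta(k^2)$ edges, and any single node of $D$ activates all of $D$.
\item Take $k$ disjoint out-stars. Each has a center $c_i$ and $r=k^{1/4}$ leaves, with edges $c_i\to$ leaf. Seeding $c_i$ activates $r+1$ nodes.
\item Pad with isolated nodes so the total is $n$. The hypothesis $k<n^{4/5}$ is exactly what guarantees $n\ge k^{5/4}\approx k\cdot r$, so the stars fit.
\end{enumerate}
Then $m=\Theta(k^2+n)$. Seeding one node of $D$ together with $k-1$ centers shows $\mathsf{opt}\ge (k-1)r+t=\Omega(k^{5/4})$.

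\textbf{Sample structure.} In the reduction of Section~\ref{sec:reduction}, a reverse reachable set rooted in $D$ is all of $D$. Its traversal cost is $\Theta(k^2)=\Theta(m)$ whenever $k^2\gtrsim n$; otherwise the padding dominates $m$ and I would enlarge $t$ to $\max(k,\sqrt n)$ and recheck. An RR set rooted at a leaf is $\{\text{leaf},c_i\}$, and one rooted at a center or isolated node is a singleton; each of these costs $O(1)$.

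\textbf{Counting samples and star hits.} With budget $c(m+n)$, the algorithm can afford only $O(c)$ samples rooted in $D$. Each sample is rooted in $D$ with probability $t/n$. As in the proof of Theorem~\ref{lem:lower_bound}, I would argue that all samples occur before the $(O(c)+1)$-th $D$-hit, whose expected index is $O(cn/t)$. So the expected number of samples is $N=O(cn/k)$. The expected number of samples rooted at leaves of star $i$ is $Nr/n=O(ck^{-3/4})$. Hence, by linearity, the expected number of stars hit at least once is $O(ck^{1/4})$.

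\textbf{Bounding the greedy solution.} Algorithm~\ref{alg:fast_greedy} picks $k$ nodes. I would bound its value as follows:
\begin{itemize}
\item possibly one node of $D$, worth $t=k$;
\item centers of stars that were sampled, worth $O(ck^{1/4})\cdot(r+1)=O(ck^{1/2})$ in expectation;
\item every other pick. A star never sampled has empirical count $0$ for its center, so greedy has no reason to prefer it over an isolated node. Assuming adversarial (or random) tie-breaking among zero-count nodes, such picks are worth $O(1)$ each. The unseen centers form an $O(k/n)$ fraction of the zero-count pool, which contributes an extra $O(k\cdot(k/n)\cdot r)=O(k)$ in expectation since $kr\le n$. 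In total these picks are worth $O(k)$.
\end{itemize}
So the expected spread is $O(k+ck^{1/2})$, and dividing by $\mathsf{opt}=\Omega(k^{5/4})$ gives $O(k^{-1/4}+ck^{-3/4})=O(ck^{-1/4})$. The range $c\le k^{1/4}$ keeps the bound nontrivial.

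\textbf{Main obstacle.} The hardest step is making the budget accounting rigorous. I need to show that the costly $D$-rooted traversals really cap $N$ at $O(cn/t)$ in expectation, handling the stopping-time argument and the case where padding makes $n$ dominate $m$, where I would rebalance $t$ as noted. I also need to handle tie-breaking among zero-count nodes, where I would either randomize the labels of the stars (as $Q$ was randomized in Theorem~\ref{lem:lower_bound}) or argue directly that unseen centers are indistinguishable from isolated nodes to the algorithm.
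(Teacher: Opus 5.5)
\textbf{Comparison with the paper.} Your clique-plus-stars construction is a genuinely different route from the paper's. The paper stays sparse ($m\le n$): it uses one directed cycle of size $n^{1-\gamma/2}$, which is both costly and valuable, plus $k-1$ cycles of size $n^{1-5\gamma/4}$, where $k=1+n^{\gamma}$. Because every size scales with $n$, a single construction covers every $k<n^{4/5}$. Yours does not.

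\textbf{Where your base case works.} Take $t=k$ and $r=k^{1/4}$. The budget $c(m+n)=\Theta(c(k^2+n))$ pays for $O(c(k^2+n)/k^2)$ clique-rooted traversals. By Wald's identity, which settles your stopping-time concern, $\mathbb{E}[N]=O(c(k^2+n)n/k^3)$. Hence the expected fraction of the stars' value that gets sampled is $O(ck^{-3/4}+cnk^{-11/4})$. This is $O(ck^{-1/4})$ exactly when $k\gtrsim n^{2/5}$, so your base case is actually fine somewhat below $\sqrt n$.

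\textbf{Where it fails.} For $k\ll n^{2/5}$ the base case breaks outright. For example, at $k=n^{1/4}$ the algorithm draws $\Theta(cn)$ samples and every star is hit about $n^{1/16}$ times. Greedy then takes a clique node and essentially all star centers, so it is near-optimal.

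\textbf{Your proposed fix does not work.} You suggest $t=\max(k,\sqrt n)$ with $r=k^{1/4}$ unchanged, but this fails exactly where it is needed. The clique alone then activates $\sqrt n$ nodes, while all the stars together are worth only about $k^{5/4}$. Every clique node's empirical count equals the number of clique-rooted samples, which is positive with overwhelming probability. So greedy selects a clique node, and its ratio is at least about $\min\{1,\sqrt n/k^{5/4}\}$. That is $\Omega(1)$ for $k\le n^{2/5}$, and $\omega(k^{-1/4})$ whenever $k=o(\sqrt n)$. The fix even destroys the range $(n^{2/5},\sqrt n)$, where your base case already worked.

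\textbf{The missing idea.} The value carried by the stars must scale with the cost sink. With $t=\sqrt n$, take $r=\lceil\sqrt n\,k^{-1/4}\rceil$; the stars fit because $kr\approx\sqrt n\,k^{3/4}\le n$ for $k\le n^{2/3}$. Then:
\begin{itemize}
\item $m+n=\Theta(n)$, and a clique traversal costs $\Theta(n)$, so $\mathbb{E}[N]=O(c\sqrt n)$.
\item Each star is hit $O(ck^{-1/4})$ times in expectation.
\item With random labels, as you planned, greedy's expected spread is $O(\sqrt n+c\sqrt n\,k^{1/2}+k)$.
\item $\mathsf{opt}\ge(k-1)r=\Omega(\sqrt n\,k^{3/4})$.
\end{itemize}
The ratio is therefore $O(k^{-3/4}+ck^{-1/4}+k^{1/4}/\sqrt n)=O(ck^{-1/4})$ for all $k\le\sqrt n$. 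Combined with your base case for $k\ge\sqrt n$, this would complete the argument.
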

\begin{proof}
	Let $\alpha = 1-\gamma/2$, $\beta = 1-5\gamma/4$, and let $\gamma$ be chosen such that $k = 1+n^\gamma$. Assume for simplicity that $n^{\alpha}$, $n^\beta$, and $n^{\gamma}$ are all integer numbers. In our influence maximization instance the probabilities for all edges are equal to 1. Moreover, our graph consists of one directed cycle of size $n^{\alpha}$ and $n^\gamma$ directed cycles of size $n^{\beta}$. The remaining $n- n^{\alpha} - n^{\beta+\gamma}$ vertices are isolated nodes. The nodes are labelled from 1 to $n$ in a random order. Since the out-degree of each vertex in our graph is either 0 or 1, the number of edges of our graph is bounded by the number of vertices and thus $n+m \leq 2n$.
	
\begin{figure}[h]
	\centering
	\usetikzlibrary{arrows.meta}
	\begin{tikzpicture}[
		main node/.style={circle, draw, fill=blue!10, minimum size=0.2cm, inner sep=0pt},
		small node/.style={circle, draw, fill=green!10, minimum size=0.15cm, inner sep=0pt},
		isolated node/.style={circle, draw, fill=gray!20, minimum size=0.15cm, inner sep=0pt},
		thick,
		>={Stealth[scale=1.2]}
		]
		
		\begin{scope}[xshift=0cm, yshift=0cm]
			\foreach \i in {1,...,10} {
				\node[main node] (L\i) at ({360/10 * (\i-1)}:1.8cm) {};
			}
			\foreach \i [evaluate=\i as \nextnode using {int(mod(\i,10)+1)}] in {1,...,10} {
				\draw[->] (L\i) -- (L\nextnode);
			}
			\node at (0, -2.5) {one cycle of size $n^\alpha$};
		\end{scope}
		
		\begin{scope}[xshift=5.5cm, yshift=1.5cm]
			\foreach \i in {1,...,6} {
				\node[small node] (S1\i) at ({360/6 * (\i-1)}:0.6cm) {};
			}
			\foreach \i [evaluate=\i as \nextnode using {int(mod(\i,6)+1)}] in {1,...,6} {
				\draw[->] (S1\i) -- (S1\nextnode);
			}
		\end{scope}
		
		\begin{scope}[xshift=5.5cm, yshift=-1.5cm]
			\foreach \i in {1,...,6} {
				\node[small node] (S2\i) at ({360/6 * (\i-1)}:0.6cm) {};
			}
			\foreach \i [evaluate=\i as \nextnode using {int(mod(\i,6)+1)}] in {1,...,6} {
				\draw[->] (S2\i) -- (S2\nextnode);
			}
		\end{scope}
		
		\begin{scope}[xshift=9cm, yshift=0cm]
			\foreach \i in {1,...,6} {
				\node[small node] (S3\i) at ({360/6 * (\i-1)}:0.6cm) {};
			}
			\foreach \i [evaluate=\i as \nextnode using {int(mod(\i,6)+1)}] in {1,...,6} {
				\draw[->] (S3\i) -- (S3\nextnode);
			}
		\end{scope}
		
		\node at (7.25, 0) {\Large $\dots$};
		\node at (7.25, -2.5) {$n^\gamma$ cycles of size $n^\beta$};
		
		\begin{scope}[xshift=1.5cm, yshift=-4.5cm]
			\node[isolated node] at (0,0) {};
			\node[isolated node] at (0.6,0.3) {};
			\node[isolated node] at (-0.5,-0.2) {};
			\node[isolated node] at (1.2,-0.1) {};
			\node[isolated node] at (0.3,-0.6) {};
			\node[isolated node] at (1.5,0.4) {};
			\node[isolated node] at (2.2,-0.3) {};
			
			\node at (3.5, 0) {\Large $\dots$};
			\node at (3.5, -0.6) {$n - n^\alpha - n^{\beta+\gamma}$ isolated nodes};
		\end{scope}
		
	\end{tikzpicture}
	\caption{The graph construction for the lower bound proof represented as a directed graph.}
	\label{fig:graph_construction}
\end{figure}
	
	Each time we run a DFS on any of the vertices of the cycle with $n^{\alpha}$ nodes we spend a DFS budget of $n^\alpha$. Since the total DFS budget is bounded by $c \cdot (n+m) \leq 2c \cdot n$ and the odds that we choose a node of that cycle to start our DFS is $n^{\alpha-1}$,  the expected number of times we run a DFS from a randomly selected node is bounded by $2c \cdot n^{2-2\alpha}$. Among these initial nodes, the expected number of times we start our DFS from a node that is in a cycle of size $n^{\beta}$ is $2c \cdot n^{2-2\alpha - (1-\beta-\gamma)} = 2c \cdot n^{1-2\alpha + \beta + \gamma}$.
	
	Since $k = 1+n^{\gamma}$, the optimal solution is to choose one node from each of the cycles. Such a seed set would activate $n^\alpha + n^{\beta + \gamma}$ nodes. Our algorithm however, encounters at most $2c \cdot n^{1-2\alpha + \beta + \gamma}$ cycles of size $n^{\beta}$ in expectation and has no information about which of the unvisited nodes belong to such cycles. Thus, the best way for our algorithm to maximize the coverage is to
	\begin{enumerate}
		\item Put one node from the cycle of length $n^{\alpha}$ into the seed set.\label{part:one}
		\item Put one node from each of the visited cycles of length $n^{\beta}$ into the seed set. \label{part:2}
		\item Let the total number of the vertices selected above be $k'$. Our algorithm can choose $k-k'$ randomly selected nodes that it has not visited in the seed set. Since it has no information about those vertices, it is indifferent between all of them. \label{part:3}
	\end{enumerate}
	
	The coverage we obtain for Item~\eqref{part:one} is $n^{\alpha}$ nodes. Since the expected number of cycles of length $n^{\beta}$ that are visited by our algorithm is bounded by $2c \cdot n^{1-2\alpha + \beta + \gamma}$, the total coverage we obtain from nodes of Item~\eqref{part:2} is in expectation bounded by $2c \cdot n^{1-2\alpha + 2\beta + \gamma}$. For Item~\eqref{part:3}, we can bound $k-k'$ by $k$. Also, the expected number of nodes visited during our algorithm would be bounded by $n^{\alpha} + 2c \cdot n^{1-2\alpha + 2\beta + \gamma} + 2c\cdot n^{2-2\alpha}$ and therefore the expected number of unvisited nodes is at least $n - n^{\alpha} - 2c \cdot n^{1-2\alpha + 2\beta + \gamma} - 2c\cdot n^{2-2\alpha}$.
	
	The probability of selecting a node belonging to an unvisited cycle of length $n^{\beta}$ during our random selection is at most $\frac{n^{\beta+\gamma}}{n - o(n)} = O(n^{\beta+\gamma-1})$. Since we make at most $k = O( n^{\gamma})$ random choices, the expected number of nodes chosen from unvisited $n^{\beta}$-cycles is bounded by $O(n^{\gamma} \cdot n^{\beta+\gamma-1}) = O(n^{2\gamma+\beta-1})$. Each such node provides a coverage of $n^{\beta}$. Thus, the expected coverage from these random choices is $O(n^{2\gamma+2\beta-1})$.
	
	The total expected coverage of our algorithm is therefore bounded by $O(n^{\alpha} + c \cdot n^{1-2\alpha+2\beta+\gamma} + n^{2\gamma+2\beta-1})$. Substituting $\alpha = 1-\gamma/2$ and $\beta = 1-5\gamma/4$, we can evaluate each term independently:
	\begin{itemize}
		\item $n^{\alpha} = n^{1-\gamma/2}$
		\item $c \cdot n^{1 - 2(1-\gamma/2) + 2(1-5\gamma/4) + \gamma} = c \cdot n^{1 - 2 + \gamma + 2 - 5\gamma/2 + \gamma} = c \cdot n^{1-\gamma/2}$
		\item $n^{2\gamma + 2(1-5\gamma/4) - 1} = n^{2\gamma + 2 - 5\gamma/2 - 1} = n^{1-\gamma/2}$
	\end{itemize}
	Summing these, the algorithm's total coverage is strictly bounded by $O(c \cdot n^{1-\gamma/2})$.
	
	Meanwhile, the optimal coverage is $n^{\alpha} + n^{\beta+\gamma} = n^{1-\gamma/2} + n^{1-5\gamma/4+\gamma} = n^{1-\gamma/2} + n^{1-\gamma/4}$. Since $\gamma > 0$, we know $1-\gamma/4 > 1-\gamma/2$, meaning the optimal coverage is dominated by $\Omega(n^{1-\gamma/4})$.
	
	Taking the ratio of the algorithm's coverage to the optimal coverage yields an expected approximation factor of:
	$$ \frac{O(c \cdot n^{1-\gamma/2})}{\Omega(n^{1-\gamma/4})} = O(c \cdot n^{-\gamma/4}) $$
	
	Since $k = \Theta(n^{\gamma})$, it follows that $O(n^{-\gamma/4}) = O((n^\gamma)^{-1/4}) = O(k^{-1/4})$. Therefore, the expected approximation factor of the algorithm cannot be better than $O(c \cdot k^{-1/4})$, which concludes the proof.
\end{proof}

%% file: src/strong.tex
\section{Overcoming the Impossibility Result with Stronger Queries}\label{sec:strong}
As established in Section~\ref{sec:lowerbound}, any algorithm restricted to standard oracle queries and an $\tilde O(n)$ query cost budget is fundamentally limited to an $\tilde O(1/\sqrt{k})$ approximation guarantee. To bypass this barrier, we consider a stronger type of oracle access. Let $\mathcal{Q}_{s}$ be an oracle that takes a subset of elements $A \subseteq \mathcal{N}$ as input and returns a random set $S \in \mathcal{M}$ sampled uniformly and independently from the collection of sets that do not contain any element from $A$. The cost of this query remains equal to the cardinality of the returned set $S$. 

Before detailing the algorithm for this new setting, we establish a generalized, high-probability version of Lemma~\ref{lem:sample_size}. In our previous analysis, the sampling threshold was bounded by the optimal $k$-element coverage, $\mathsf{opt}$. Let $\mathsf{opt}_i$ denote the optimal coverage ratio when using $i$ elements. Specifically, $\mathsf{opt}_1 = \max_{e \in [n]} F(\{e\})$ represents the maximum single-element fractional coverage. We state Lemma~\ref{lem:generalized_sample_size} below and defer its proof to the end of this section.

\begin{lemma}\label{lem:generalized_sample_size}
	For some $\epsilon < 1/2$, let $\tau = \frac{60n \log n}{\epsilon^2}$ be the cumulative cost threshold. If we sample a collection $\mathcal{M}'$ of sets from $\mathcal{M}$ until the total cost reaches $\tau$, then with probability at least $1 - n^{-10}$, the number of sampled collections $|\mathcal{M}'| = m'$ satisfies $m' \ge \frac{30 \log n}{\epsilon^2 \mathsf{opt}_1}$.
\end{lemma}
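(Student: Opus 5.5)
We follow the template of Lemma~\ref{lem:sample_size}, but replace Markov's inequality with a Chernoff-type tail bound so that the failure probability drops from $1/2$ to $n^{-10}$. Let $m^* = \lceil \frac{30 \log n}{\epsilon^2 \mathsf{opt}_1}\rceil$. As in Lemma~\ref{lem:sample_size}, we imagine the sampling continuing indefinitely. The event $m' < m^*$ then implies that the total size $Z = \sum_{i=1}^{m^*} |S_i|$ of the first $m^*$ independent samples is at least $\tau$. It therefore suffices to show $\mathsf{Pr}[Z \ge \tau] \le n^{-10}$.

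First we bound the mean. By the double-counting argument of Lemma~\ref{lem:sample_size}, $\mathbb{E}|S_i| = \sum_{e} F(\{e\}) \le n\,\mathsf{opt}_1$. Hence $\mathbb{E}[Z] \le m^* n \,\mathsf{opt}_1 \approx \frac{30 n\log n}{\epsilon^2} = \tau/2$, up to the rounding in $m^*$, which we absorb using $\mathsf{opt}_1 \ge 1/n$ (every set is non-empty).

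The main obstacle is the concentration step. Each $|S_i|$ lies in $[0,n]$, so a naive Hoeffding bound gives nothing. Instead we decompose $Z$ over elements: $Z = \sum_{e \in \mathcal{N}} Y_e$, where $Y_e = \sum_{i\le m^*} \mathbbm{1}[e \in S_i]$. For fixed $e$, $Y_e$ is a sum of $m^*$ independent Bernoulli variables, each with mean $F(\{e\}) \le \mathsf{opt}_1$. Hence $\mathbb{E}[Y_e] \le \mu := m^*\mathsf{opt}_1 \approx 30\log n/\epsilon^2$. By a Chernoff bound applied to a dominating Binomial$(m^*,\mathsf{opt}_1)$ variable,
$$\mathsf{Pr}[Y_e \ge 2\mu] \le \exp(-\mu/3) = \exp(-10\log n/\epsilon^2) \le n^{-40}$$
for $\epsilon<1/2$. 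A union bound over the $n$ elements gives $Y_e < 2\mu$ for all $e$ simultaneously with probability at least $1-n^{-39}$.

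This yields only $Z < 2n\mu \approx \tau$, so the constants are tight. The factor $2$ in $\tau = 2\cdot 30 n\log n/\epsilon^2$ was evidently chosen to match the threshold $Y_e< 2\mu$ for every $e$, and the slack needed to absorb the ceiling in $m^*$ comes from the strength of the per-element bound. Concretely, we use deviation $(1+\delta)\mu$ with $\delta$ slightly below $1$, say $\delta = 1 - 1/(2\mu)$, which still gives failure probability $\exp(-\delta^2\mu/3) \le n^{-11}$ per element. Summing $Y_e < (1+\delta)\mu$ over $e$ then gives $Z < \tau$, so $m' \ge m^*$ with probability at least $1-n^{-10}$.

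An alternative, if the per-element route is awkward, is to apply Bernstein's inequality (as in Lemma~\ref{lem:concentration}) directly to $Z/n$. Its summands are $|S_i|/n \in [0,1]$ with variance at most $\mathbb{E}|S_i|/n \le \mathsf{opt}_1$, and this gives the same $\exp(-\Omega(\log n/\epsilon^2))$ tail for $\mathsf{Pr}[Z - \mathbb{E}Z \ge \tau/2]$.
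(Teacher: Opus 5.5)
Your argument is correct in substance, but your concentration step takes a different route from the paper's. The paper never decomposes over elements. It normalizes the costs to $x_i = |S_i|/n \in [0,1]$ and notes $\mathbb{E}[x_i] \le \mathsf{opt}_1$ by the double counting of Lemma~\ref{lem:sample_size}. It then applies the multiplicative Chernoff bound for independent $[0,1]$-valued variables with mean bound $\mu_u = 30\log n/\epsilon^2$ and $\delta = 1$. This gives, in one line, that the sum of the first $m^*$ normalized costs exceeds $2\mu_u$ with probability at most $e^{-\mu_u/3} \le n^{-10}$.

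So your remark that the range $[0,n]$ defeats a direct bound holds only for additive, Hoeffding-type inequalities. After normalizing, $Z/n$ has mean $\Theta(\log n/\epsilon^2)$ and a multiplicative bound applies directly; your Bernstein alternative is essentially the paper's proof. Your main route is also valid: Binomial tails for each $Y_e$, a union bound over the $n$ elements, then $Z \le n \max_e Y_e$. It needs only Bernoulli Chernoff bounds and even shows that every element's frequency is controlled. It is lossier, though, and works only because the per-element exponent $\Omega(\log n/\epsilon^2)$ leaves ample room for the union bound.

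One step fails as written: the rounding patch. With $\delta = 1 - 1/(2\mu)$ you only get $Z < n(2\mu - 1/2)$. Since $\mu = m^*\mathsf{opt}_1$ can be nearly $30\log n/\epsilon^2 + \mathsf{opt}_1$, this exceeds $\tau$ whenever $\mathsf{opt}_1 > 1/4$. The overshoot is controlled by $\mathsf{opt}_1 \le 1$, not by $\mathsf{opt}_1 \ge 1/n$.

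There are two easy fixes. You could choose $\delta$ so that $(1+\delta)\mu = \tau/n$; with $a = 30\log n/\epsilon^2$ this still gives $\delta \ge 1 - 2/(a+1)$ and a negligible failure probability. Cleaner is to sharpen your reduction. Sampling continues while the cost is below $\tau$, so $m' < m^*$ already forces the first $m^*-1$ samples to have total cost at least $\tau$. Since $(m^*-1)\mathsf{opt}_1 < a$, the $\delta = 1$ bound with mean bound $a$ shows their total cost is below $2an = \tau$ with high probability, and no slack is needed. The paper itself simply ignores this rounding.
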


Observe that Lemma~\ref{lem:generalized_sample_size} serves as a high-probability analogue of Lemma~\ref{lem:sample_size}, obtained by replacing the general optimum $\mathsf{opt}$ with the single-element optimum $\mathsf{opt}_1$ (effectively setting $k=1$). We now restate Lemma~\ref{lem:concentration} by parameterizing it with an integer $i \in [k]$. Rather than bounding the threshold by the $k$-element optimum coverage $\mathsf{opt}$, we generalize the statement in terms of $\mathsf{opt}_i$.

\begin{lemma} [restatement of Lemma~\ref{lem:concentration}]\label{lem:generalized_concentration}
	Let $i$ be an integer such that $1 \le i \le k$, and let $\mathcal{M}'$ be a collection of $m' \geq \frac{8 i \log n}{\epsilon^2 \mathsf{opt}_i}$ sets sampled uniformly and independently from $\mathcal{M}$. Let $F(S)$ denote the true fraction of sets in $\mathcal{M}$ covered by a fixed subset $S \subseteq \mathcal{N}$ such that $F(S) \le \mathsf{opt}_i$, and let $F'(S)$ denote the empirical fraction of sets in $\mathcal{M}'$ covered by $S$. For any $0 < \epsilon < 1$, the probability that the empirical coverage deviates from the true coverage by more than $\epsilon \cdot \mathsf{opt}_i$ is bounded by:
	$$\mathsf{Pr}\big[|F'(S) - F(S)| > \epsilon \cdot \mathsf{opt}_i\big] \le 2 n^{-3 i}.$$
\end{lemma}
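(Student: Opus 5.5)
The plan is to repeat the proof of Lemma~\ref{lem:concentration} verbatim, replacing $k$ by $i$ and $\mathsf{opt}$ by $\mathsf{opt}_i$. Nothing in that proof uses that $k$ is the global budget or that $\mathsf{opt}$ is the $k$-element optimum. It only uses three facts: the sample size lower bound, the inequality $F(S)\le \mathsf{opt}$ for the fixed set under consideration, and the deviation parameter $\delta=\epsilon\cdot\mathsf{opt}$. Each of these has an exact analogue in the present statement.

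First, fix $S\subseteq\mathcal{N}$ with $F(S)\le\mathsf{opt}_i$. Let $x_1,\dots,x_{m'}$ be i.i.d.\ Bernoulli indicators of whether the $j$-th sampled set meets $S$. Then $F'(S)=\frac{1}{m'}\sum_j x_j$, with mean $F(S)$, and the variance satisfies $\sigma^2=F(S)(1-F(S))\le F(S)\le\mathsf{opt}_i$. Each variable is bounded by $r=1$.

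Second, apply Bernstein's inequality in the form already quoted in the proof of Lemma~\ref{lem:concentration}, with $\delta=\epsilon\cdot\mathsf{opt}_i$. Substituting $m'\ge \frac{8i\log n}{\epsilon^2\mathsf{opt}_i}$, the numerator $m'\delta^2$ becomes at least $8i\,\mathsf{opt}_i\log n$. Using $\sigma^2\le\mathsf{opt}_i$ and $\epsilon<1$, the denominator is at most $2\mathsf{opt}_i+\frac{2}{3}\mathsf{opt}_i=\frac{8}{3}\mathsf{opt}_i$. So the exponent is at least $3i\log n$.

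Third, conclude that the failure probability is at most $2\exp(-3i\log n)\le 2n^{-3i}$. This uses the same convention as before: since $\log n\ge\ln n$ when $\log$ is base $2$ (and equality holds when it is natural), the bound $2\exp(-3i\ln n)=2n^{-3i}$ is valid in either case.

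There is no real obstacle here. The only point needing care is that the variance bound must use the hypothesis $F(S)\le\mathsf{opt}_i$ rather than $F(S)\le\mathsf{opt}$. That hypothesis is exactly what lets $\mathsf{opt}_i$ cancel between numerator and denominator. Note that $\mathsf{opt}_i>0$ because every set is non-empty, so dividing by it is legitimate.
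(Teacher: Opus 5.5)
Your proposal is correct and is exactly the paper's approach. The paper gives no separate proof and presents the lemma as a restatement of Lemma~\ref{lem:concentration}, whose Bernstein argument carries over verbatim with $k$ replaced by $i$ and $\mathsf{opt}$ by $\mathsf{opt}_i$. As you note, the hypothesis $F(S)\le\mathsf{opt}_i$ gives $\sigma^2\le\mathsf{opt}_i$, and this is what makes the exponent at least $3i\log n$.
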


The intuition behind our approach is as follows: Lemma~\ref{lem:generalized_sample_size} can be applied to Lemma~\ref{lem:generalized_concentration} for the specific case of $i=1$ to provide the error bounds necessary for proving the approximation factor of our algorithm. Furthermore, as long as the ratio $\mathsf{opt}_i / \mathsf{opt}_1$ remains proportional to $i$ (for instance, $\mathsf{opt}_i / \mathsf{opt}_1 \ge i/2$), one can argue that the sample set generated using the threshold from Lemma~\ref{lem:concentration} remains sufficient. Specifically, if that threshold is adjusted by a suitable constant factor, it can still be used to effectively approximate $\mathsf{opt}_i$. Conversely, this relationship no longer holds if the ratio falls below a certain threshold. In such cases, the existing sample set becomes insufficient to maintain the required error bounds, necessitating a new sampling step to accurately capture the desired properties.

This suggests a natural strategy: we run a greedy procedure on a fresh sample until the contribution of the elements to be added to our solution drops below $\mathsf{opt}_1/2$, at which point we discard the sample and \emph{reset} the process by setting $A = \mathcal{C}$ (setting $A$ equal to the current solution). In the next phase, we do the sampling for the residual problem by using the updated oracle $\mathcal{Q}_s(A)$. Thus, the algorithm proceeds in phases, where each phase consists of (i) sampling a representative collection $\mathcal{M}'$ for the current residual problem, and (ii) executing a greedy augmentation until the contributions of the remaining elements drop below $1/2$ of the contribution of the element that was added in the beginning of the phase. The algorithm terminates when either we construct a solution of size $k$ or the number of phases exceeds $\Theta(\log^5 n \log (1/\epsilon))$ at which point we terminate the algorithm and report our current solution.

Algorithm~\ref{algorithm:linear} implements our strategy. For the purposes of our discussion, we define $\mathsf{opt}^j_i$ to be the optimal solution for sets of size $i$ in the residual problem in the beginning of round $j$. In each phase, the algorithm samples a collection of sets $\mathcal{M}'$ from the stronger oracle $\mathcal{Q}_s$ until a cumulative cost threshold $(1000 \log^2 n)\tau$ is met, ensuring a statistically representative sub-collection for the current state. The additional multiplicative factor $1000 \log^2 n$ to the sample size is purposefully added to make up for the following requirements:
\begin{itemize}
	\item Although the error bound $\epsilon \cdot \mathsf{opt}^j_i$ provided in Lemma~\ref{lem:generalized_concentration} is small enough to prove the approximation factor of our algorithm, we need a slightly modified bound of $\frac{i  \cdot \mathsf{opt}^j_1}{20\log n}$ for the sake of runtime analysis. Notice that $\mathsf{opt}^j_i \geq \mathsf{opt}^j_1 \cdot \frac{i}{3}$ holds for any $i$ which is not bigger than the number of elements we add to our solution in phase $j$.
	\item To present a clean proof, we would like the error bound of Lemma~\ref{lem:generalized_concentration} to be $n^{-10i}$ instead of $2 n^{-3 i}$.
\end{itemize}
Thus, based on Lemma~\ref{lem:generalized_concentration}, in every phase $j$, for any subset $C$ of elements whose size is bounded by twice the number of elements we add to our solution in round $j$, we know that $|F'(C) - F(C)|$ is bounded by $\frac{|C|  \cdot \mathsf{opt}^j_1}{20\log n}$ with probability at least $1-n^{-10}$.

\begin{corollary}[of Lemma~\ref{lem:generalized_concentration}]\label{lemma:cor}
	For some $1 \leq j$, let $x$ be the number of elements added to the solution in phase $j$ of the algorithm. Then, with probability at least $1-n^{-10}$, for any subset $C$ such that $|C| \leq 2x$ we have:
	$$|F(C) - F'(C)| \leq |C|  \cdot \mathsf{opt}^j_1 \cdot \min\{\epsilon/2,\frac{1}{20\log n} \}$$
	where $F$ refers to the coverage function of the \textbf{residual} problem and $F'$ refers to the empirical estimation used in that phase of the algorithm.
\end{corollary}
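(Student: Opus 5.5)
We plan to derive the corollary from Lemma~\ref{lem:generalized_concentration} (in its Bernstein form from Lemma~\ref{lem:concentration}) applied to the residual instance of phase $j$, together with Lemma~\ref{lem:generalized_sample_size} and a union bound over all subsets of size at most $2x$. Throughout we work in the residual problem. Samples in phase $j$ come from $\mathcal{Q}_s(A)$ with $A$ the solution at the start of the phase, so they are uniform and independent draws from the residual collection. Hence both lemmas apply verbatim with $F$ the residual coverage and $\mathsf{opt}_1$ replaced by $\mathsf{opt}^j_1$.

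\textbf{Sample size.} The phase threshold is $(1000\log^2 n)\tau$, where $\tau=\frac{60n\log n}{\epsilon^2}$. Applying Lemma~\ref{lem:generalized_sample_size} with this inflated threshold should give, with probability $1-n^{-10}$,
\[
m'\ \ge\ \frac{30000\,\log^3 n}{\epsilon^2\,\mathsf{opt}^j_1}.
\]
Its proof is a Markov/Chernoff argument in which the average residual set size is at most $n\,\mathsf{opt}^j_1$, and this argument scales linearly in the threshold.

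\textbf{Deviation bound for a fixed $C$.} Fix $C$ with $|C|=c\le 2x$ and set the target deviation $\delta = c\,\mathsf{opt}^j_1\,\eta$, where $\eta=\min\{\epsilon/2,\frac{1}{20\log n}\}$. Rather than go through $\mathsf{opt}^j_c$, we apply Bernstein directly as in Lemma~\ref{lem:concentration}. By submodularity the variance satisfies $\sigma^2\le F(C)\le c\,\mathsf{opt}^j_1$. The exponent is then at least
\[
\frac{m'\delta^2}{2c\,\mathsf{opt}^j_1+\tfrac23\delta}\ \ge\ \frac{m'\,c\,\mathsf{opt}^j_1\,\eta^2}{3}.
\]
With the sample size above and $\eta^2\ge \min\{\epsilon^2/4,\ 1/(400\log^2 n)\}$, this is at least
\[
\frac{c\cdot 30000\,\log^3 n}{3\epsilon^2}\cdot\frac{\epsilon^2}{400\log^2 n}\ \ge\ 25\,c\log n
\]
in the second case, and larger still in the first. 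The failure probability for this $C$ is therefore at most $2n^{-25c}$. This avoids needing $\mathsf{opt}^j_c\ge c\,\mathsf{opt}^j_1/3$. That inequality holds only for $c\le x$, not for $c$ up to $2x$, so bypassing it is why we bound the variance by $c\,\mathsf{opt}^j_1$ directly.

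\textbf{Union bound.} There are at most $n^c$ subsets of size $c$. Summing $n^c\cdot 2n^{-25c}$ over $c\ge1$ gives total failure at most $O(n^{-24})$. Adding the $n^{-10}$ failure from the sample-size lemma, everything holds with probability at least $1-n^{-10}$ after a slight constant adjustment, since the constants have ample slack.

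\textbf{Main obstacle.} The random quantity $x$ depends on the very sample being analyzed. We sidestep this by proving the bound simultaneously for all $C$ of every size $c\le n$; the union bound above already covers every $c$. The statement for $|C|\le 2x$ then follows regardless of how $x$ is determined. A secondary point is conditioning on the previous phases, which fix $A$. Since $\mathcal{Q}_s(A)$ draws fresh independent samples given $A$, the argument holds conditionally on any history.
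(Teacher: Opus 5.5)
Your argument is correct, but it meets the sample requirement for sets of size $c=|C|>1$ differently from the paper. The paper applies Lemma~\ref{lem:generalized_concentration} as a black box with $i=|C|$. The $1000\log^2 n$ inflation of the threshold buys accuracy $\frac{1}{20\log n}$ and failure probability $n^{-10i}$. The requirement $m'\ge \frac{8i\log n}{\epsilon^2\mathsf{opt}^j_i}$ is reduced to a bound in terms of $\mathsf{opt}^j_1$, which Lemma~\ref{lem:generalized_sample_size} supplies, via the asserted inequality $\mathsf{opt}^j_i\ge \frac{i}{3}\mathsf{opt}^j_1$ for $i$ up to the number $x$ of elements added in the phase. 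Implicitly, $\mathsf{opt}^j_i\le i\,\mathsf{opt}^j_1$ then turns the error $\epsilon\,\mathsf{opt}^j_i$ into the stated form. You bypass $\mathsf{opt}^j_i$ entirely: the subadditivity bound $\sigma^2\le F(C)\le c\,\mathsf{opt}^j_1$ makes the Bernstein exponent $\Omega(c\log n)$ for every $C$, which pays for the $n^{c}$ union bound at all sizes at once. This gives a statement uniform over all $C$, so it is insensitive to the fact that $x$ is determined by the same sample. It also does not rely on the lower bound on $\mathsf{opt}^j_i$, which the paper infers from the greedy behavior within the phase, i.e.\ from the very sample whose accuracy is being established; your version is the non-circular one. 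Your stated motivation, that this lower bound fails for $x<c\le 2x$, is not the real issue: monotonicity gives $\mathsf{opt}^j_c\ge\mathsf{opt}^j_x\ge \frac{c}{6}\mathsf{opt}^j_1$ there, at only a constant cost. The paper's route, in exchange, is modular and states the error relative to $\mathsf{opt}^j_i$, the form used elsewhere in its analysis. One technicality both arguments gloss over is that $m'$ is a stopping time of the same samples. Bernstein should therefore be applied at every prefix length $t\ge \frac{30000\log^3 n}{\epsilon^2\mathsf{opt}^j_1}$ and union-bounded over the $O(n\log^3 n/\epsilon^2)$ possible values of $t$, or replaced by a Freedman-type maximal inequality. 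Your exponent $25c\log n$ leaves room for this whenever $1/\epsilon$ is polynomially bounded.
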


\begin{corollary}[of Lemma~\ref{lem:generalized_concentration}]\label{lemma:cor2}
	Let $O$ be a set of elements that maximize the coverage for exactly $k$ elements in round 1. In any round $1 \leq j$ with probability at least $1-n^{-10}$ we have
	\begin{itemize}
	\item either the elements that have been added to $\mathcal{C}$ so far make a solution with approximation factor $9/10$.
	\item or $|F(O) - F'(O)| \leq \epsilon F(O)$ where $F$ and $F'$ are the coverage functions of the residual problems.
	\end{itemize}
\end{corollary}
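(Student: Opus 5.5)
We reduce Corollary~\ref{lemma:cor2} to Lemma~\ref{lem:generalized_concentration}, applied with $i=k$ in the residual instance of round $j$, combined with the per-phase sample size guarantee of Lemma~\ref{lem:generalized_sample_size}.

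Fix round $j$. Let $F$ be the residual coverage function, meaning fractions of the sets not covered by the current $A=\mathcal{C}$. Let $\mathsf{opt}^j_k$ be the residual $k$-optimum. The set $O$ has size $k$, so $F(O)\le \mathsf{opt}^j_k$. The case split is on the size of $F(O)$ relative to $\mathsf{opt}^j_k$.

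\textbf{Case 1: $F(O)$ is tiny.} Suppose $F(O) < \mathsf{opt}^j_k/10$. This is a situation where $O$ covers almost none of the residual sets. Write $F_0$ for the original coverage and $\mathsf{opt}=F_0(O)$. Let $r=1-F_0(\mathcal{C})$ be the uncovered fraction. Then
\[
F_0(\mathcal{C}\cup O)=F_0(\mathcal{C})+r\,F(O)\ge \mathsf{opt}.
\]
Since $r\le 1$, this gives $F_0(\mathcal{C})\ge \mathsf{opt}-F(O)$. It therefore suffices to show that small $F(O)$ forces $F(O)\le \mathsf{opt}/10$.

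Here I would use the cleaner normalization $rF(O)\le r\,\mathsf{opt}^j_k$. Together with $r\,\mathsf{opt}^j_k\le \mathsf{opt}$ (the residual gain of any $k$ elements is at most the original optimum), this yields
\[
F_0(\mathcal{C})\ge \mathsf{opt}-rF(O)\ge \mathsf{opt}-\mathsf{opt}/10=\tfrac{9}{10}\,\mathsf{opt}.
\]
So the first alternative holds. In fact the threshold I will use is $F(O)\le \mathsf{opt}/(10r)$, which gives the same conclusion directly.

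\textbf{Case 2: $F(O)$ is large.} Suppose instead $rF(O)>\mathsf{opt}/10$. Then $F(O)>\mathsf{opt}^j_k/10$, since $r\,\mathsf{opt}^j_k\le\mathsf{opt}$. Apply Lemma~\ref{lem:generalized_concentration} with $i=k$ and accuracy $\epsilon'=\epsilon/10$. This gives
\[
|F'(O)-F(O)|\le (\epsilon/10)\,\mathsf{opt}^j_k\le \epsilon F(O),
\]
with failure probability $2n^{-3k}$. Boosting the sample, as the paper does for Corollary~\ref{lemma:cor}, brings this to $n^{-10}$.

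The remaining requirement is that $m'\ge \frac{800k\log n}{\epsilon^2\mathsf{opt}^j_k}$. Lemma~\ref{lem:generalized_sample_size}, with the extra $1000\log^2 n$ factor, only yields $m'\gtrsim \frac{\log^3 n}{\epsilon^2\mathsf{opt}^j_1}$, which is weaker by roughly a factor $k/\log^2 n$.

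\textbf{Main obstacle.} The hard step is this sample size gap: a sample sized for $\mathsf{opt}^j_1$ must certify a $k$-element set. My plan is to avoid the $n^{k}$ union bound entirely. $O$ is a \emph{single fixed} set, determined in round 1 independently of the round-$j$ sample, so no union bound over subsets is needed. It suffices to apply Bernstein to the one set $O$, with failure probability $n^{-10}$, i.e.\ exponent $\Theta(\log n)$ rather than $\Theta(k\log n)$. The condition then becomes
\[
m'\ge \frac{C\log n}{\epsilon^2 F(O)},
\]
which follows from $F(O)\ge \mathsf{opt}^j_1$ or $F(O)\ge \mathsf{opt}^j_k/10\ge \mathsf{opt}^j_1/10$.

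One point needs care: $O$ is fixed in round 1, but the sample in round $j$ is drawn conditionally on the adaptive $A$. Since the fresh sample is independent given $A$, conditioning on the history is legitimate.
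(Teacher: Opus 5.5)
Your final argument is correct. It is worth being clear about how it relates to the paper: the paper gives no proof of Corollary~\ref{lemma:cor2} beyond attributing it to Lemma~\ref{lem:generalized_concentration}, and your argument is not really a reduction to that lemma. As you observe, the lemma cannot be quoted verbatim for $O$. With $i=k$, the hypothesis $m'\ge 8k\log n/(\epsilon^2\mathsf{opt}^j_k)$ is not guaranteed by the phase's sample size. With $i=1$, the hypothesis $F(O)\le \mathsf{opt}^j_1$ fails in general.

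What works is your last step. Split on the two cases below.
\begin{itemize}
\item If $rF(O)\le \mathsf{opt}/10$, then $F_0(\mathcal{C})\ge \mathsf{opt}-rF(O)\ge \tfrac{9}{10}\mathsf{opt}$.
\item If $rF(O)>\mathsf{opt}/10$, then $F(O)>\mathsf{opt}^j_k/10\ge \mathsf{opt}^j_1/10$, since $r\,\mathsf{opt}^j_k\le \mathsf{opt}$. Now redo the Bernstein computation for the single fixed set $O$, with deviation $\epsilon F(O)$ and variance proxy $F(O)$. The exponent is then at least $\tfrac38 m'\epsilon^2F(O)$. Rerunning the proof of Lemma~\ref{lem:generalized_sample_size} with the threshold multiplied by $1000\log^2 n$ gives $m'\ge 3\cdot 10^4\log^3 n/(\epsilon^2\mathsf{opt}^j_1)$, except with probability $\exp(-\Omega(\log^3 n))$. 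Hence the exponent is $\Omega(\log^3 n)$.
\end{itemize}
This single-set relative-error bound is exactly what the paper's one-line attribution needs in order to be rigorous.

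For the write-up, make the following changes.
\begin{itemize}
\item Rewrite the opening sentence, which announces a reduction to Lemma~\ref{lem:generalized_concentration} with $i=k$.
\item Delete the detour through $i=k$ with accuracy $\epsilon/10$. The boosting used for Corollary~\ref{lemma:cor} does not supply the missing factor of roughly $k/\log^2 n$.
\item Delete the intermediate claim that small $F(O)$ forces $F(O)\le \mathsf{opt}/10$. It is false in general when $r<1$, and you only need the bound on $rF(O)$.
\item State the case split once, in terms of $rF(O)$.
\item Account for both failure events explicitly: the sample-size event and the Bernstein event. Each has probability $\exp(-\Omega(\log^3 n))$, so their sum stays below $n^{-10}$. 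Two events of probability $n^{-10}$ each, as you currently quote, would not.
\end{itemize}

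Your remark that conditioning on the adaptive $A$ is legitimate is right. One further technicality, which you share with the paper: $m'$ is a stopping time determined by set sizes, and set sizes may correlate with being hit by $O$. Strictly, the Bernstein bound should be union-bounded over the possible values $m'\ge 3\cdot 10^4\log^3 n/(\epsilon^2\mathsf{opt}^j_1)$. The $\Omega(\log^3 n)$ exponent makes this harmless for any $\epsilon\ge 1/\mathrm{poly}(n)$.
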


For simplicity, we ignore the failure probability of Corollary~\ref{lemma:cor} in the rest of this section and only include it in our final theorem.

 Within these phases, Algorithm~\ref{algorithm:linear} greedily selects elements that maximize marginal gain based on the empirical coverage in $\mathcal{M}'$; however, unlike a standard greedy approach, it monitors the quality of the sample by comparing the current marginal gain against the initial gain of that phase. If the ratio of the current gain to the first gain drops below a $1/2$ threshold, the algorithm determines that the current sample no longer provides a reliable approximation for the remaining elements and thus we move on to the next phase. This mechanism ensures that the algorithm only makes decisions when the empirical data is sufficiently robust to maintain the desired high-probability error bounds throughout the selection process.

\begin{algorithm}[H]
	\caption{Adaptive Greedy Coverage}\label{algorithm:linear}
	\label{alg:adaptive_greedy}
	\begin{algorithmic}[1]
		\STATE \textbf{Input:} Adaptive Oracle $\mathcal{Q}_s$, ground set $\mathcal{N}$, budget $k$, error $\epsilon$
		\STATE \textbf{Initialize:} $\mathcal{C} \gets \emptyset$, $A \gets \emptyset$, $\tau \gets \frac{60n \log n}{\epsilon^2}$
		\STATE $\mathsf{iterations} \leftarrow 0$
		\WHILE{$|\mathcal{C}| < k$ \AND \textsf{iterations} $< c^* \log^5 n \log (1/\epsilon)$}
		\STATE $\mathsf{iterations} \leftarrow \mathsf{iterations} + 1$
		\STATE $\mathcal{M}' \gets \emptyset, q \gets 0$
		\WHILE{$q < (1000 \log^2 n)\tau$}
		\STATE $S \gets \mathcal{Q}_s(A)$; $\mathcal{M}' \gets \mathcal{M}' \cup \{S\}$; $q \gets q + |S|$
		\ENDWHILE
		\STATE $\mathsf{first\_gain} \gets \mathsf{null}$
		\WHILE{$|\mathcal{C}| < k$}
		\STATE $u \gets \arg\max_{e \in \mathcal{N}} |\{S \in \mathcal{M}' : e \in S \text{ and } S \cap \mathcal{C} = \emptyset \}|$
		\STATE $\mathsf{gain} \gets |\{S \in \mathcal{M}' : u \in S \text{ and } S \cap \mathcal{C} = \emptyset \}|$
		\IF{$\mathsf{first\_gain} == \mathsf{null}$}
		\STATE $\mathsf{first\_gain} \gets \mathsf{gain}$
		\ENDIF
		\IF{$\mathsf{gain} < \frac{\mathsf{first\_gain}}{2}$}
		\STATE $A \gets \mathcal{C}$
		\STATE \textbf{break} 
		\ENDIF
		\STATE $\mathcal{C} \gets \mathcal{C} \cup \{u\}$
		\ENDWHILE
		\ENDWHILE
		\RETURN $\mathcal{C}$
	\end{algorithmic}
\end{algorithm}

Before presenting our main result, we provide a brief justification for the termination condition of Algorithm~\ref{algorithm:linear}. This stopping criterion is essential to ensure that both the total sample cost and the overall runtime of the algorithm are bounded by $\tilde{O}_{\epsilon}(n)$. In Lemma~\ref{lemma:kuchik}, we show that the marginal gains of the remaining elements diminish exponentially as the number of iterations increases. Specifically, it can be implied from Lemma~\ref{lemma:kuchik} that after $O(\log^5 n \log 1/ \epsilon)$ iterations, the remaining potential gains are sufficiently small to be ignored without compromising the approximation factor. For clarity, the formal proof of Lemma~\ref{lemma:kuchik} is deferred to the end of this section.

\begin{lemma}\label{lemma:kuchik}
	There exists a constant $c^*$ such that after $c^* \log^5 n \log (1/\epsilon)$ phases of Algorithm~\ref{algorithm:linear} the marginal contributions of the remaining elements fall below $\epsilon /n$.
\end{lemma}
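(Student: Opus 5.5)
We plan to track the potential $g_j := \mathsf{opt}^j_1$, the largest true marginal coverage (in the residual problem) of a single element at the start of phase $j$, and show that it shrinks by a constant factor every $O(\log^4 n)$ phases. Since $\mathsf{opt}^1_1 \le 1$, after $O(\log^4 n \cdot \log(n/\epsilon)) = O(\log^5 n \log(1/\epsilon))$ phases this gives $g_j < \epsilon/n$, which is the claim. We use the normalization of the original problem: marginal gains are fractions of $\mathcal{M}$, and the residual coverage is rescaled by the residual mass.

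First, fix a phase $j$ and condition on the event of Corollary~\ref{lemma:cor}, with additive error $|C|\,\mathsf{opt}^j_1/(20\log n)$. By Lemma~\ref{lem:generalized_sample_size}, applied to the residual oracle, the sample is large enough at $i=1$. Hence the first greedy pick has empirical gain within $\mathsf{opt}^j_1(1\pm 1/(20\log n))$ of $\mathsf{opt}^j_1$.

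When the phase breaks, every element has empirical marginal gain below $\mathsf{first\_gain}/2$. We then apply the corollary to the pairs $\mathcal{C}_{\text{phase}}\cup\{e\}$ and $\mathcal{C}_{\text{phase}}$. Both sets have size at most $2x$ when $x\ge1$; when $x=0$ the break cannot happen at the first pick. This step needs care, because the additive error is proportional to $|C|$, which may be as large as $x$. A marginal gain is the difference of two coverages, so its error is about $2x\,\mathsf{opt}^j_1/(20\log n)$, and this is not small when $x$ is large. That is the main obstacle.

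To handle it, we would not bound a single phase. Instead we use an amortized argument. If $x \le 2\log n$, the errors are at most a constant fraction (about $1/5$) of $\mathsf{opt}^j_1$, so all true marginal gains are at most roughly $0.7\,\mathsf{opt}^j_1$, and $g_{j+1} \le 0.7 g_j$. If $x$ is large, the phase added at least $x/3 \cdot \mathsf{opt}^j_1$ of true coverage: each pick has true gain at least about $\mathsf{opt}^j_1/2$ minus the average error, using $\mathsf{opt}^j_i \ge i\,\mathsf{opt}^j_1/3$. Coverage is bounded by $1$. Moreover, $g$ is nonincreasing across phases by submodularity. So phases at the scale $g\in(2^{-t-1},2^{-t}]$ that fail to halve $g$ must each consume coverage $\Omega(x\,2^{-t})$. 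We would argue that these phases can recur at most $O(\log^4 n)$ times per scale. This is where we expect extra $\log$ factors to enter: the $1000\log^2 n$ oversampling, a $\log n$ from union bounding over phases, and a $\log n$ from dyadic bucketing of $x$.

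Concretely, we would do the following. Bucket the phases at a given scale by $\lfloor\log x\rfloor$. Within a bucket, use the tighter error $\min\{\epsilon/2, 1/(20\log n)\}$ together with the fact that the residual mass times $g$ is bounded. The goal is to show that after $O(\log^3 n)$ phases in one bucket, $g$ must drop by a factor of $2$. Summing over $O(\log n)$ buckets and $O(\log(n/\epsilon))$ scales gives the bound $c^*\log^5 n\log(1/\epsilon)$. The failure probabilities of Corollary~\ref{lemma:cor} are absorbed by a union bound, since each is $n^{-10}$ and there are polylogarithmically many phases.
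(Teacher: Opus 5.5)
\textbf{Gap in the long-phase case.} Your handling of short phases ($x\le 2\log n$) is fine. You also correctly locate the obstacle: the error in Corollary~\ref{lemma:cor} grows with $|C|$, so a single element's empirical marginal over a long phase can be off by $\Theta(x\,\mathsf{opt}^j_1/\log n)$. The amortization you propose for long phases, however, does not give a polylogarithmic bound. At scale $g\approx 2^{-t}$ every element has marginal at most $g$, so the residual coverage still available is at most $\min\{1, ng\}$. Charging each non-halving phase $\Omega(x\,2^{-t})$ against this only bounds the number of such phases by $O(\min\{2^t,n\}/x)$, which is polynomial in $n$ at the small scales. Nothing in your argument forces a long phase to consume a $1/\mathrm{polylog}$ fraction of what remains. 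The ``concretely'' paragraph (bucketing by $\lfloor\log x\rfloor$, residual mass times $g$ bounded) names no mechanism that would supply this. More fundamentally, a potential tracking only $\mathsf{opt}^j_1$ cannot suffice. The guarantee of Corollary~\ref{lemma:cor} does not rule out that a few elements whose true marginal stays near $g$ are underestimated in every long phase and survive. So progress has to be measured by how many high-marginal elements remain, not by the maximum alone.

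\textbf{The missing idea (Lemma~\ref{lem:marginal_decay}).} Fix $\kappa$ within a factor of about $2$ of the current maximum, track the number $|\Delta_\kappa|$ of unchosen elements with $g_e\ge\kappa$, and compare set-level rather than single-element marginals.
\begin{itemize}
\item Suppose the phase adds a set $C$ of $\beta<|\Delta_\kappa|$ elements. Take $\beta/2$ still-good elements $C^g$. Their joint true marginal over $C$ is at least $\beta\kappa/2$. The error of Corollary~\ref{lemma:cor} on $C\cup C^g$ and on $C$ is only $O(\beta\kappa/\log n)$, since both quantities scale with the number of elements. Averaging then yields one element with empirical marginal at least $\kappa(1-\frac{1}{2\log n})$, which contradicts the break condition.
\item Suppose instead the phase adds at least $|\Delta_\kappa|$ elements. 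The same averaging lower-bounds the empirical gain of each of the first $|\Delta_\kappa|$ greedy picks. Their total true contribution is therefore close to $|\Delta_\kappa|\kappa$, so at least $\lceil|\Delta_\kappa|/(3\log n)\rceil$ of them must lie in $\Delta_{\kappa(1-1/\log n)}$.
\end{itemize}
Consider the $O(\log n)$ thresholds $h(1-1/\log n)^t$. By pigeonhole, some consecutive pair of counts is within a constant factor, so every phase shrinks some $|\Delta_x|$ by a factor $1-\Omega(1/\log n)$. This is a multiplicative decrease, not the additive one your coverage charging gives. Consequently the potential $\sum_x\ln(1+|\Delta_x|)$ drops by $\Omega(1/\log n)$ per phase. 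The maximum therefore falls by a factor $1-1/\log n$ every $O(\log^3 n)$ phases, and $O(\log n\log(n/\epsilon))$ such drops give the claimed bound.
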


We also state an algebraic statement in Lemma~\ref{lem:lower_bound_sum} which will be used in the proof of Theorem~\ref{theorem:linear}. We will give a proof for Lemma~\ref{lem:lower_bound_sum} in Section~\ref{sec:sag}.

\begin{lemma}\label{lem:lower_bound_sum}
	Let $l$ be a positive integer, $u$ be a positive real number, $\alpha \in (0, 1]$, and $\beta > 1$. Suppose that $d_1, d_2, \dots, d_l$ is a sequence of positive real numbers such that for any $i \in \{1, 2, \dots, l\}$, the following inequality holds:
	$$
	d_i \ge \left( u - \beta \sum_{j=1}^{i-1} d_j \right) \frac{\alpha}{l}.
	$$
	Then, the total sum of the sequence satisfies the lower bound
	$$
	\sum_{i=1}^l d_i \ge u \left( 1 - \frac{1}{e} \right)  \min\{\alpha, 1/\beta\}.
	$$
\end{lemma}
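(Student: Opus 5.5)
The plan is to track the ``residual'' quantity $R_i = u - \beta S_i$, where $S_i = \sum_{j=1}^{i} d_j$ and $S_0 = 0$, $R_0 = u$. The hypothesis reads $d_i \ge R_{i-1}\,\alpha/l$. Since $R_i = R_{i-1} - \beta d_i$, this yields the one-step contraction
\[
R_i \le R_{i-1}\Bigl(1 - \frac{\alpha\beta}{l}\Bigr)
\]
whenever $R_{i-1} \ge 0$. Because every $d_i$ is positive, $S_i$ is increasing and $R_i$ is decreasing, so the sequence of residuals only goes down.

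\textbf{Case 1: some residual becomes nonpositive.} Suppose $R_i \le 0$ for some $i \le l$. Then $S_l \ge S_i \ge u/\beta$. This is at least $u(1-1/e)\min\{\alpha,1/\beta\}$, and we are done. This case also covers the situation $\alpha\beta > l$: there the factor $1-\alpha\beta/l$ is negative, so already $R_1 \le u(1-\alpha\beta/l) < 0$.

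\textbf{Case 2: all residuals stay positive.} Now $R_0,\dots,R_{l-1} > 0$ and $0 \le 1-\alpha\beta/l < 1$. Iterating the contraction gives
\[
R_l \le u\Bigl(1-\frac{\alpha\beta}{l}\Bigr)^l \le u\,e^{-\alpha\beta},
\]
and therefore $S_l = (u - R_l)/\beta \ge \frac{u}{\beta}\bigl(1-e^{-\alpha\beta}\bigr)$. It remains to compare this with the target, splitting on $x = \alpha\beta$.
\begin{itemize}
\item If $\alpha\beta \ge 1$, then $1 - e^{-\alpha\beta} \ge 1 - 1/e$. Hence $S_l \ge \frac{u}{\beta}(1-1/e)$, and here $1/\beta \le \alpha$, so $1/\beta = \min\{\alpha, 1/\beta\}$.
\item If $\alpha\beta < 1$, use concavity of $x \mapsto 1-e^{-x}$ on $[0,1]$: it lies above the chord through $(0,0)$ and $(1,1-1/e)$, so $1-e^{-x} \ge (1-1/e)x$. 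This gives $S_l \ge \frac{u}{\beta}(1-1/e)\alpha\beta = u(1-1/e)\alpha$, and here $\alpha = \min\{\alpha, 1/\beta\}$.
\end{itemize}

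The only delicate point is the sign bookkeeping. The contraction inequality multiplies by $R_{i-1}$, so it is valid only while the residual is nonnegative and the factor $1-\alpha\beta/l$ is nonnegative. Separating out Case 1 first, using monotonicity of $S_i$ (which comes from positivity of the $d_i$), is what makes the iteration in Case 2 legitimate. The rest is the standard greedy-style $(1-1/e)$ computation.
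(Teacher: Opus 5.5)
Your proof is correct and follows essentially the same route as the paper: the residual recurrence $R_i \le R_{i-1}(1-\alpha\beta/l)$, the bound $(1-\alpha\beta/l)^l \le e^{-\alpha\beta}$, and the chord bound $1-e^{-x}\ge(1-1/e)\min\{x,1\}$, with your Case~1 absorbing the paper's separate treatment of $\alpha\beta>l$. One small remark: the one-step inequality holds for any sign of $R_{i-1}$, since it comes from multiplying $d_i \ge R_{i-1}\alpha/l$ by $-\beta$, so chaining it only needs $1-\alpha\beta/l\ge 0$; this is why the paper only has to split off the case $\alpha\beta>l$.
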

We are now ready to prove Theorem~\ref{theorem:linear}.

\begin{theorem}\label{theorem:linear}
	If query oracle $\mathcal{Q}_s$ is available, Algorithm~\ref{algorithm:linear} runs in time $O(\frac{n \log^9 n  \log (1/\epsilon)}{\epsilon^2})$, has query cost $O(\frac{n \log^8 n  \log (1/\epsilon)}{\epsilon^2})$, and achieves a $(1-1/e-4\epsilon)$-approximate solution for the maximum $k$-cover problem with probability at least $1-n^{-8}$.
\end{theorem}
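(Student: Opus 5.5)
The proof splits into three parts: the cost and runtime bounds, the failure probability, and the approximation factor. All three use the per-phase guarantees (Corollaries~\ref{lemma:cor} and~\ref{lemma:cor2}), Lemma~\ref{lemma:kuchik} for the number of phases, and Lemma~\ref{lem:lower_bound_sum} to aggregate the phase-wise progress.

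\emph{Query cost and runtime.} Each phase samples until the cumulative cost reaches $(1000\log^2 n)\tau = O(n\log^3 n/\epsilon^2)$. Since the last set of a phase has size at most $n$, the cost of a phase is $O(n\log^3 n/\epsilon^2)$. The number of phases is capped at $c^*\log^5 n\log(1/\epsilon)$, so the total query cost is $O(n\log^8 n\log(1/\epsilon)/\epsilon^2)$. Within a phase, I would implement the greedy step with a max-heap as in Algorithm~\ref{alg:fast_greedy}. Initialization costs $O(n)$ plus the incidence count, and each key decrease costs $O(\log n)$; every element-set incidence in $\mathcal{M}'$ triggers at most one decrease. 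This gives an extra $\log n$ factor over the query cost, i.e.\ the stated $O(n\log^9 n\log(1/\epsilon)/\epsilon^2)$ runtime. The break test $\mathsf{gain}<\mathsf{first\_gain}/2$ is $O(1)$ per step.

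\emph{Probability.} Corollaries~\ref{lemma:cor} and~\ref{lemma:cor2} each fail with probability at most $n^{-10}$ per phase, and there are at most $O(\log^5 n\log(1/\epsilon))$ phases. I would apply a union bound over phases and both corollaries; for $n$ large relative to $\log(1/\epsilon)$, or with $\epsilon\ge 1/n$ assumed, this gives total failure probability at most $n^{-8}$. From here on I condition on every phase satisfying both corollaries.

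\emph{Approximation.} Let $O$ be the optimal $k$-set and let $u = \mathsf{opt}$. If Corollary~\ref{lemma:cor2} ever yields the first alternative ($9/10$-approximation), we are done, since further additions only increase coverage. Otherwise I track, for each added element $c_t$, its true marginal gain $d_t$ with respect to the original instance. Let $\mathcal{C}_{t-1}$ be the solution before $c_t$ was added. By submodularity, some element of $O$ has residual marginal gain at least $(\mathsf{opt} - F(\mathcal{C}_{t-1}))/k$. Greedy on $\mathcal{M}'$ picks an element with maximum empirical gain. Applying Corollary~\ref{lemma:cor} to sets of size at most $2x$ (the current partial phase set, united with $c_t$ or with a competing element of $O$) converts empirical gains into true gains up to additive error $|C|\cdot\mathsf{opt}^j_1\cdot\epsilon/2$. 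Because gains inside a phase stay at least half of $\mathsf{first\_gain}\approx \mathsf{opt}^j_1$, this error is a $(1\pm O(\epsilon))$ multiplicative distortion per element. I would then derive
\[
d_t \ge \Bigl(\mathsf{opt} - (1+O(\epsilon))\sum_{s<t} d_s\Bigr)\frac{1-O(\epsilon)}{k},
\]
and apply Lemma~\ref{lem:lower_bound_sum} with $\alpha = 1-O(\epsilon)$ and $\beta = 1+O(\epsilon)$, obtaining coverage $(1-1/e)(1-O(\epsilon))\mathsf{opt}$ when $k$ elements are chosen.

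If the algorithm stops early because of the phase cap, Lemma~\ref{lemma:kuchik} says every remaining marginal gain is below $\epsilon/n$. Then $\mathsf{opt}-F(\mathcal{C})\le k\epsilon/n\le\epsilon\,\mathsf{opt}$, using $\mathsf{opt}\ge k/n$. Padding the sequence $d_t$ with these small gains up to length $k$ costs only an additive $\epsilon\,\mathsf{opt}$. Collecting the errors gives $1-1/e-4\epsilon$.

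The main obstacle is the per-element comparison. The greedy choice is compared against elements of $O$ using only estimates from the current phase's sample, whose accuracy is tied to $\mathsf{opt}^j_1$ rather than to the residual optimum. Corollary~\ref{lemma:cor} only controls sets of size at most $2x$, so I would have to compare one element at a time, bounding $F'(\mathcal{P}\cup\{o\})-F'(\mathcal{P})$ for the current phase prefix $\mathcal{P}$. I would make sure the additive errors scale with $|\mathcal{P}|\,\mathsf{opt}^j_1\epsilon$, which is $O(\epsilon)$ times the gain already accumulated in the phase; this is what makes them charge multiplicatively into $\beta$. The break rule guarantees that this accumulated gain is at least $|\mathcal{P}|\,\mathsf{opt}^j_1/3$.
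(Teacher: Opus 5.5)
Your cost, runtime, union-bound and early-termination arguments are fine and essentially match the paper. The gap is in the approximation recurrence, at exactly the step you flag as the main obstacle. You compare $c_t$ with an element $o\in O$ through \emph{true} residual gains, and you convert each single marginal gain between $F$ and $F'$ using Corollary~\ref{lemma:cor}. That corollary only bounds $|F(C)-F'(C)|$ by $|C|\,\mathsf{opt}^j_1\epsilon/2$. So the gain of one element over the phase prefix $\mathcal{P}$ is known only up to roughly $(|\mathcal{P}|+1)\epsilon\,\mathsf{opt}^j_1$, while the gain itself is $\Theta(\mathsf{opt}^j_1)$. The per-element distortion is therefore $1\pm O(|\mathcal{P}|\epsilon)$, not $1\pm O(\epsilon)$, and a phase may contain up to $k$ elements.

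Charging this error into $\beta$ does not rescue it. In $d_t\ge(u-\beta\sum_{s<t}d_s)\alpha/k$, the slack that $\beta$ buys at step $t$ is only $\alpha(\beta-1)\sum_{s<t}d_s/k$. Your error is $\Theta(\epsilon)$ times the gain accumulated in the phase, so you would need $\beta-1=\Theta(\epsilon k)$, and Lemma~\ref{lem:lower_bound_sum} becomes vacuous. Summed over a phase of $x$ elements, your errors are of order $x^2\epsilon\,\mathsf{opt}^j_1$ against a total gain of order $x\,\mathsf{opt}^j_1$. This is the same union-bound barrier the phased algorithm is built around: an adaptively chosen prefix incurs error proportional to its size. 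Finer per-element bookkeeping cannot fix it.

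The missing idea is never to translate individual marginal gains. Carry out the greedy step entirely inside the current phase's empirical function $F'$, which is an exact coverage function. Write $d'_i$ for the empirical gains. Then $F'(O\cup\mathcal{P})-F'(\mathcal{P})\ge F'(O)-\sum_{i=q+1}^{r-1}d'_i$, so some element of $O$, and hence the greedy choice, has empirical gain at least $(F'(O)-\sum_{i=q+1}^{r-1}d'_i)/k$, with no estimation error at all.

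Only two estimates are then needed:
\begin{enumerate}
\item $F'(O)\ge(1-\epsilon)F(O)$ for the single fixed set $O$. This is Corollary~\ref{lemma:cor2}, and it is multiplicative precisely because $O$ is not chosen adaptively.
\item Corollary~\ref{lemma:cor} applied to the whole block of $x$ elements added in a completed phase. There the additive error $x\,\mathsf{opt}^j_1\epsilon/2$ is an $O(\epsilon)$ fraction of the block's empirical coverage, which is at least $x\cdot\mathsf{first\_gain}/2$ by the break rule.
\end{enumerate}
Together these give $\sum_{i\le q}d_i\le(1+\epsilon)\sum_{i\le q}d'_i$, and hence $d'_r\ge(u-(1+2\epsilon)\sum_{i<r}d'_i)(1-\epsilon)/k$. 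Lemma~\ref{lem:lower_bound_sum} is then applied to $d'$ with $\alpha=1-\epsilon$ and $\beta=1+2\epsilon$. Only at the end is $\sum d'_i$ converted back to $\sum d_i$, again phase by phase.

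Your closing observation, that the accumulated phase gain is at least $|\mathcal{P}|\,\mathsf{opt}^j_1/3$, is the right ingredient. It has to be used for this block-level conversion, not per element.
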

\begin{proof}
	
	For the runtime and query cost complexity of the algorithm, notice that the number of rounds is bounded by $O(\log^5 n \log (1/\epsilon))$. In each round, the sample size is bounded by $O(n \log^3 n / \epsilon ^2)$ and thus the overall query complexity would be bounded $O(\frac{n \log^8 n  \log (1/\epsilon)}{\epsilon^2})$. The runtime would be the same except that we use a heap data structure to find the element with the highest contribution in each turn that adds a multiplicative factor of $O(\log n)$. In the rest of the proof, we show that the approximation ratio of the algorithm is at least $(1-1/e - 4\epsilon)$. 
	
	We assume for simplicity here that the algorithm has no failure and discuss the failure probability at the end of the proof. Also, we assume for simplicity here that Algorithm~\ref{algorithm:linear} iteratively puts $k$ elements into the solution and the early termination does not happen. We discuss at the end of the proof that the early termination does not deteriorate the quality of our solution substantially.
	
	To this end, let $d_1, d_2, \ldots, d_k$ be the marginal contributions of the elements that Algorithm~\ref{algorithm:linear} iteratively puts into the solution to the function $F(\mathcal{C})$. Also, we define $d'_i$ as the same contribution for the sampled problem of the corresponding round. That is, let $F'$ be the sampled $k$-cover problem of the corresponding round. Instead of function $F$, we use $F'$ to determine the value of $d'_i$. We also define $u$ as the highest coverage of any subset of size $k$ which is the coverage ratio of the optimal solution. Notice that for each value $d'_i$, we consider the marginal gain with respect to the original problem (and not only the problem with the avoiding set).

	Now, we wish to apply Lemma~\ref{lem:lower_bound_sum} to give a lower bound on the quality of our algorithm. Fix some $1 \leq r \leq k$ and assume that the $r$'th element of the solution is added in round $j$. Also, assume that $q$ elements are added to the solution prior to round $j$. Define $v$ as the highest utility a set of size $k$ can obtain in the beginning of round $j$ in the residual problem. It follows that $v \geq u - \sum_{i=1}^q  d_i$. Also, define $v'$ as the coverage ratio of the same subset in the sampled problem. Based on Corollary~\ref{lemma:cor2} we have $v' \geq v (1-\epsilon) \geq (u - \sum_{i=1}^q  d_i)(1-\epsilon)$. Moreover, it follows from Corollary~\ref{lemma:cor} that $\sum_{i=1}^q  d_i \leq (1+\epsilon) \sum_{i=1}^q  d'_i$ and therefore $v' \geq (u - (1+\epsilon)\sum_{i=1}^q  d'_i)(1-\epsilon)$. This implies that at the time of adding the $r$'th element to the solution, there is a subset of size $k$ whose empirical contribution with respect to the sampled problem is at least 
\begin{align*}
	v' - \sum_{i=q+1}^{r-1} d'_i &\geq \left( u - (1+\epsilon)\sum_{i=1}^q d'_i \right)(1-\epsilon) - \sum_{i=q+1}^{r-1} d'_i \\
	&= u(1-\epsilon) - (1-\epsilon^2)\sum_{i=1}^q d'_i - \sum_{i=q+1}^{r-1} d'_i \\
	&\geq u(1-\epsilon) - \sum_{i=1}^q d'_i - \sum_{i=q+1}^{r-1} d'_i \\
	&= u(1-\epsilon) - \sum_{i=1}^{r-1} d'_i \\
	&\geq u(1-\epsilon) - (1+2\epsilon)(1-\epsilon)\sum_{i=1}^{r-1} d'_i \\ 
	&= \left( u - (1+2\epsilon)\sum_{i=1}^{r-1} d'_i \right)(1-\epsilon)
\end{align*}
Here, the last inequality in the display above uses $(1+2\epsilon)(1-\epsilon) = 1+\epsilon-2\epsilon^2 \geq 1$, which holds since $\epsilon < 1/2$ (as in Lemma~\ref{lem:generalized_sample_size}).

	And therefore since each time we add the element with the highest marginal contribution, we have $d'_i \geq \left( u - (1+2\epsilon)\sum_{i=1}^{r-1} d'_i \right)\frac{(1-\epsilon)}{k}$. Applying Lemma~\ref{lem:lower_bound_sum} to $d'$ with $\beta = 1+2\epsilon$ and $\alpha = 1-\epsilon$ implies $\sum_{i=1}^k d'_i \ge u \left( 1 - \frac{1}{e} \right)  (1-2\epsilon)$. This in addition to Corollary~\ref{lemma:cor} implies $\sum_{i=1}^k d_i \ge u \left( 1 - \frac{1}{e} \right)  (1-2\epsilon)(1-\epsilon) \geq  u \left( 1 - \frac{1}{e} \right)  (1-3\epsilon)$ which is desired.
	
	In the case of early termination, we know that the future contribution of each neglected element is bounded by $\epsilon / n$ and therefore our solution may be deteriorated by an additive error of at most $\frac{k \epsilon}{n}$ which introduces a multiplicative error of at most $\epsilon$ to our approximation guarantee since any solution of size $k$ has a coverage of at least $k/n$. Therefore the approximation factor of our algorithm is bounded by $(1-1/e - 4\epsilon).$ Also, since each failure probability is bounded by $n^{-10}$ the number of such possible scenarios is bounded by $n^2$, the total failure probability of the algorithm is bounded by $n^{-8}$.
\end{proof}

\subsection{Proof of Lemma~\ref{lem:generalized_sample_size}}
\begin{proof}[of lemma~\ref{lem:generalized_sample_size}]
	Let $S_1, S_2, \dots$ be the sequence of sets sampled from $\mathcal{M}$. Let $x_i = |S_i|/n$ be the normalized cost of the $i$-th query, where $x_i \in [0, 1]$. Let $m_l = \frac{30 \log n}{\epsilon^2 \mathsf{opt}_1}$. We define the total normalized cost after $m_{l}$ samples as $x = \sum_{i=1}^{m_{l}} x_i$. 
	
	To bound the deviations of $x$, we employ the following standard form of the Chernoff bound~\cite{mitzenmacher2017probability}: Let $x_1, \dots, x_{\alpha}$ be independent random variables taking values in $[0, 1]$ and $x = \sum_{i=1}^{\alpha} x_i$. For any $\mu_{u}$ such that $\mathbb{E}[x] \le \mu_{u}$ and any $\delta > 0$:
	\[ \Pr[x > (1+\delta)\mu_{u}] \le \exp\left(-\frac{\delta^2 \mu_{u}}{2 + \delta}\right) \]
	
	To prove the lemma, we consider the event that the cost threshold is exceeded before $m_{l}$ sets are sampled. This occurs only if $x > \tau/n = \frac{60 \log n}{\epsilon^2}$. From the double-counting argument of Lemma~\ref{lem:sample_size}, we have $\mathbb{E}[x_i] \le \mathsf{opt}_1$, and therefore $\mathbb{E}[x] \le m_{l} \cdot \mathsf{opt}_1 = \frac{30 \log n}{\epsilon^2}$ holds. Let $\mu_{u} = \frac{30 \log n}{\epsilon^2}$ be our upper bound on the expectation. Applying the Chernoff bound with $\delta = 1$:
	\[ \Pr[x > 2\mu_{u}] \le \exp\left(-\frac{\mu_{u}}{3}\right) = \exp\left(-\frac{10 \log n}{\epsilon^2}\right) \le n^{-10} \]
	Thus, with probability at least $1 - n^{-10}$, the cost of the first $m_{l}$ sets does not exceed $\tau/n$, which implies $|\mathcal{M}'| \ge m_{l}$.
\end{proof}

\subsection{Proof of Lemma~\ref{lemma:kuchik}}
For the sake of analysis, let us assume that we run the algorithm until all $n$ elements are  added to the solution (this is just a thought experiment and we do not actually run the algorithm to generate values $g_i$). For each $e \in \mathcal{N}$, define $g_e$ to be the marginal contribution of element $e$ to the coverage of our solution at the time of adding it to our solution. Based on this definition, the following property follows: During the lifetime of our algorithm, for each set of elements $C$ that are not yet added to the solution, their marginal contribution to the existing solution is at least $\sum_{e \in C} g_e$. 

\begin{lemma} \label{lem:marginal_decay}
	Let $C^1$ and $C^2$ be the solution of Algorithm~\ref{algorithm:linear} before and after a phase of the algorithm. For any $0 \leq x \leq 1$, define 
	$$\Delta^1_x = \{e \in \mathcal{N} \setminus C^1 | g_e \geq x \}$$ and $$\Delta^2_x = \{e \in \mathcal{N} \setminus C^2 | g_e \geq x \}.$$
	Then for any $\kappa$ such that $\frac{\max_{e \in \mathcal{N} \setminus C^1} g_e}{2 (1-1/\log n)} \leq \kappa $ holds we have 
	$$|\Delta^1_{\kappa (1-1/\log n)} \setminus \Delta^2_{\kappa (1-1/\log n)}| \geq \lceil \frac{|\Delta^1_{\kappa}|}{3 \log n} \rceil .$$
\end{lemma}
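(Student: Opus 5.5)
The plan is an averaging (potential) argument comparing the true coverage $F$ before and after the phase, using the defining property of the $g_e$. Throughout, $F$ is the coverage function of the original problem, and Corollary~\ref{lemma:cor} converts empirical quantities of the phase into true ones.

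Let $G=\Delta^1_\kappa$. If $G=\emptyset$ the claim is trivial, so assume $\kappa\le \max_{e\notin C^1} g_e$.

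\textbf{Step 1: compare the first gain with $\kappa$.} Let $e_1$ be the first element added in the phase. Its true gain is $g_{e_1}$, and by Corollary~\ref{lemma:cor} this is $\mathsf{opt}^j_1$ up to a factor $1\pm\frac{1}{20\log n}$. Since every $e\notin C^1$ has $g_e\le$ its current marginal $\le \mathsf{opt}^j_1$, we get $\mathsf{opt}^j_1\le \max_{e\notin C^1}g_e\,(1+\frac{1}{10\log n})$. The hypothesis on $\kappa$ then yields
\[
\kappa(1-1/\log n)\ \ge\ \tfrac12\max_{e\notin C^1} g_e\ \gtrsim\ \tfrac12\,\mathsf{opt}^j_1 .
\]

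\textbf{Step 2: every element added in the phase has $g_e\ge\kappa(1-1/\log n)$, up to lower-order terms.} Each such element had empirical gain at least $\mathsf{first\_gain}/2$. Converting to true gain via Corollary~\ref{lemma:cor} (applied to singletons) gives $g_e\gtrsim \mathsf{opt}^j_1/2$. The small losses are absorbed by shrinking the threshold slightly, which is why the statement uses $\kappa(1-1/\log n)$ rather than $\kappa$. Hence it suffices to show that the number $x$ of elements added in the phase satisfies $x\ge\lceil |G|/(3\log n)\rceil$, with those elements lying in $\Delta^1_{\kappa(1-1/\log n)}\setminus\Delta^2_{\kappa(1-1/\log n)}$.

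\textbf{Step 3: lower-bound $x$.} Argue by contradiction: suppose $x<|G|/(3\log n)$, and pick $G'\subseteq G\setminus C^2$ with $|G'|=\min(|G\setminus C^2|,2x)$. If $x=0$, handle the phase separately: a phase always adds its first element, so $x\ge 1$.

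By the defining property of $g$, the joint marginal of $G'$ over $C^1$ is at least $\sum_{e\in G'}g_e\ge |G'|\kappa$. At the end of the phase, every element outside $C^2$ has empirical gain below $\mathsf{first\_gain}/2$. By Corollary~\ref{lemma:cor}, which is applicable since $|G'|\le 2x$, this gives
\[
F(C^2\cup G')-F(C^2)\ \le\ |G'|\,\tfrac{\mathsf{opt}^j_1}{2}\bigl(1+O(\tfrac{1}{\log n})\bigr)\ \le\ |G'|\,\kappa\bigl(1-\tfrac1{\log n}\bigr)\bigl(1+O(\tfrac{1}{\log n})\bigr).
\]
Subtracting the two bounds,
\[
F(C^2)-F(C^1)\ \gtrsim\ |G'|\,\kappa/(2\log n).
\]
On the other hand, each added element contributes at most $\mathsf{opt}^j_1\le 2\kappa(1+o(1))$, so $F(C^2)-F(C^1)\le 2x\kappa(1+o(1))$. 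Combining, $x\gtrsim |G'|/(4\log n)$.

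If $|G'|=2x$ this is absurd for large $n$. Otherwise $G'=G\setminus C^2$, so $|G'|\ge |G|-x$, and $x\gtrsim (|G|-x)/(4\log n)$ gives $x\ge |G|/(3\log n)$ after rearranging, again a contradiction.

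\textbf{Main obstacle.} The main obstacle is the bookkeeping of the additive errors from Corollary~\ref{lemma:cor}. They must be of order $\frac{|C|\,\mathsf{opt}^j_1}{20\log n}$, so that they fit inside the $1/\log n$ slack between $\kappa$ and $\kappa(1-1/\log n)$ while still leaving a net gain of order $\kappa/\log n$ per element of $G'$. Relatedly, the corollary only controls sets of size at most $2x$, which is exactly why the comparison set $G'$ is truncated. If the constants do not close with $1/20$, I would shrink that constant, since it is a free parameter of the sample size.
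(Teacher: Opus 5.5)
Your proposal has a genuine gap in Step~2, and the whole reduction rests on it. Step~2 needs every element added in the phase to satisfy $g_e \ge \kappa(1-1/\log n)$. What you actually derive is $g_e \gtrsim \mathsf{opt}^j_1/2$, while Step~1 gives $\kappa(1-1/\log n) \gtrsim \mathsf{opt}^j_1/2$. These are two lower bounds by the same quantity, so they cannot be chained. The hypothesis only bounds $\kappa$ from below, and the lemma must hold for every $\kappa$ up to roughly $\max_e g_e$; the proof of Lemma~\ref{lemma:kuchik} uses a whole window of such values.

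A concrete failure: take exact sampling and pairwise disjoint sets. Greedy legitimately adds elements with $g_e = 0.6\,\mathsf{opt}^j_1$, and for $\kappa = 0.9\,\mathsf{opt}^j_1$ these do not lie in $\Delta^1_{\kappa(1-1/\log n)}$. No $1/\log n$ slack can absorb a constant-factor gap.

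Even the per-element bound $g_e \gtrsim \mathsf{opt}^j_1/2$ is not available. Corollary~\ref{lemma:cor} controls $F'$ only up to $|C|\,\mathsf{opt}^j_1/(20\log n)$. So the marginal of an element on top of $p$ elements already added in the phase is known only within $(2p+1)\,\mathsf{opt}^j_1/(20\log n)$, which exceeds $\mathsf{opt}^j_1/2$ once $p \gtrsim 5\log n$. Only averaged statements survive, which is why the lemma concludes with the weak count $\lceil |\Delta^1_\kappa|/(3\log n)\rceil$ rather than the number of added elements.

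The missing idea is the paper's Case~1. Write $G=\Delta^1_\kappa$ and $\alpha=|G|$. Assume (otherwise the claim is immediate) that $\alpha/2$ elements of $G$ survive the phase and that at least $\alpha$ elements are added. Let $C$ be the first $\alpha$ added elements and $C^g$ be $\alpha/2$ survivors.
\begin{itemize}
\item Since $|C\cup C^g| \le 2x$, the corollary gives $C^g$ an empirical marginal over $C$ of at least $\frac{\alpha}{2}\kappa(1-\frac{1}{2\log n})$.
\item By submodularity, every greedy step that built $C$ therefore had empirical gain at least $\kappa(1-\frac{1}{2\log n})$, so $\sum_{e\in C} g_e \ge \alpha\kappa(1-\frac{3}{5\log n})$.
\item Each $g_e < 2\kappa$, so at least about $\alpha/(3\log n)$ elements of $C$ must have $g_e \ge \kappa(1-1/\log n)$. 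This is exactly where the $3\log n$ comes from.
\end{itemize}

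Step~3 does not close either. With $|G'|=2x$, your final inequality reads $x \gtrsim x/(2\log n)$, which is true rather than absurd. In the other branch, $x \gtrsim (|G|-x)/(4\log n)$ only rearranges to $x \gtrsim |G|/(4\log n+1)$, which does not contradict $x<|G|/(3\log n)$. There is also a size issue: the corollary must be applied to $(C^2\setminus C^1)\cup G'$, of size $x+|G'|$. So you need $|G'|\le x$, and the error scales with $2x+|G'|$ rather than $|G'|$.

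The repair is to apply the defining property of $g$ at the end of the phase rather than over $C^1$. Surviving elements of $G$ still satisfy $F(C^2\cup G')-F(C^2) \ge |G'|\kappa$. This directly contradicts the upper bound you derived from the stopping rule, with no need to compare against the total gain of the phase. That is the paper's Case~2, and it is what guarantees that the phase runs for at least $\alpha$ steps, so that Case~1 applies.
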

\begin{proof}
	Let $\alpha = |\Delta^1_{\kappa}|$ denote the number of elements not yet in $C^1$ whose marginal contribution to the final solution is at least $\kappa$.  If at least half of the elements of $\Delta^1_{\kappa}$ are added to the solution in this phase, then the condition of the lemma holds trivially. Moreover, the same holds for $\Delta^1_{\kappa (1-1/\log n)}$. That is, if at least $\lceil \frac{\alpha}{3 \log n} \rceil$ elements of $\Delta^1_{\kappa (1-1/\log n)}$ are added to the solution in this phase the condition of the lemma holds trivially. Therefore, we may assume without loss of generality that at least $\frac{\alpha}{2}$ elements from $\Delta^1_{\kappa}$ are not added to the solution in this phase and moreover, at most $\alpha - \lceil \frac{\alpha}{3 \log n} \rceil$ elements of $\Delta^1_{\kappa (1-1/\log n)}$ are added to the solution in this phase. We now consider two mutually exclusive cases based on the total number of elements added to the solution during the phase:
	
\textbf{Case 1:} The number of elements added to the solution in the phase is at least $\alpha$. For simplicity and without loss of generality we assume $\alpha$ is even.
Let $C$ be the set of the first $\alpha$ elements that are added to the solution in this phase. For the condition of the lemma to fail, strictly fewer than $\lceil \frac{\alpha}{3 \log n} \rceil$ of the elements added have a real contribution of at least $\kappa(1-\frac{1}{\log n})$. We define $C^g$ as a set of exactly $\frac{\alpha}{2}$ elements of $\Delta^1_{\kappa}$ that were not added to the solution. Because each element in $C^g$ maintains a real marginal contribution of at least $\kappa$ even after the phase, their joint real contribution evaluated over $C$ is bounded from below by:
$$F(C \cup C^g) \ge F(C) + \frac{\alpha}{2}\kappa.$$
Applying Corollary~\ref{lemma:cor}, the empirical error for any set $S$ of size up to $2\alpha$ is at most $\frac{|S| \cdot \mathsf{opt}^j_1}{20\log n} \le \frac{|S|\kappa}{10\log n}$. \claudeedit{Here, we use $\mathsf{opt}^j_1 \le 2\kappa$, which holds since the first element $e^f$ added in this phase satisfies $g_{e^f} \ge (1-\frac{1}{10\log n})\mathsf{opt}^j_1$ by Corollary~\ref{lemma:cor}, and hence $\mathsf{opt}^j_1 \le \frac{\max_{e \in \mathcal{N} \setminus C^1} g_e}{1-1/(10\log n)} \le \frac{2\kappa(1-1/\log n)}{1-1/(10\log n)} < 2\kappa$.}  Therefore, the empirical contribution of $C^g$ over $C$ in the sampled set $\mathcal{M}'$ satisfies:
$$F'(C \cup C^g) - F'(C) \ge \left(F(C \cup C^g) - \frac{|C \cup C^g|\kappa}{10\log n}\right) - \left(F(C) + \frac{|C|\kappa}{10\log n}\right).$$
Substituting $|C| = \alpha$ and $|C \cup C^g| = \frac{3\alpha}{2}$, we obtain:
$$F'(C \cup C^g) - F'(C) \ge \frac{\alpha}{2}\kappa - \frac{3\alpha\kappa}{20\log n} - \frac{2\alpha\kappa}{20\log n} = \frac{\alpha}{2}\kappa\left(1 - \frac{1}{2\log n}\right).$$
This implies that the total empirical marginal contribution of the elements in $C^g$ over $C$ is at least $\frac{\alpha}{2}\kappa(1 - \frac{1}{2\log n})$. By the pigeonhole principle, there must exist at least one element $e^* \in C^g$ whose empirical marginal contribution to $C$ in $\mathcal{M}'$ is at least the average, which is $\kappa(1 - \frac{1}{2 \log n})$. By submodularity, at any step during the construction of $C$, the available empirical marginal contribution of $e^*$ to the partial solution was also at least $\kappa(1 - \frac{1}{2\log n})$. This implies that the marginal empirical contribution of any element added to $C$ during the first $\alpha$ steps is at least this amount which implies that their total empirical contribution is at least $\alpha \kappa(1 - \frac{1}{2\log n})$. This implies that their real contribution is at least $\alpha \kappa(1 - \frac{1}{2\log n} - \frac{1}{10\log n}) = \alpha \kappa(1  - \frac{6}{10\log n})$. This is in contradiction with the fact that apart from $\lceil \frac{\alpha}{3 \log n} \rceil -1 $ of such element whose real contribution could be up to $2\kappa$, the rest have a real contribution less than $\kappa (1-1/\log n)$ and thus their total real contribution would be bounded by $\alpha \kappa(1  - \frac{2}{3\log n})$.
 
\textbf{Case 2}: The number of elements added to the solution in the phase is $\beta < \alpha$. Again, for simplicity and without loss of generality we assume $\beta$ is even. Let $C$ denote the set of $\beta$ elements added during this phase. As we discussed above, there are at least $\alpha/2$ elements whose real marginal contribution remains at least $\kappa$ after this phase ends. Since $\beta < \alpha$, we have $\alpha/2 > \beta/2$. We can therefore select a subset $C^g$ of $\beta/2$ such good elements that were not added to the solution. 

The size of the union $C \cup C^g$ is $\frac{3}{2}\beta$, which is less than $2\beta$, satisfying the condition of Corollary~\ref{lemma:cor}. Because every element in $C^g$ has a real marginal contribution of at least $\kappa$ even after the phase ends, we have:
$$F(C \cup C^g) \geq F(C) + \frac{\beta}{2}\kappa$$
Applying Corollary~\ref{lemma:cor}, the total empirical error is at most $\frac{|C \cup C^g|\kappa}{10\log n} = \frac{3\beta\kappa}{20\log n}$. The empirical marginal contribution of the set $C^g$ over $C$ in $\mathcal{M}'$ is then:
$$F'(C \cup C^g) - F'(C) \geq \left(F(C \cup C^g) - \frac{3\beta\kappa}{20\log n}\right) - \left(F(C) + \frac{2\beta\kappa}{20\log n}\right) = \frac{\beta}{2}\kappa \left(1 - \frac{1}{2 \log n}\right)$$
By an averaging argument, there must exist at least one element $e^* \in C^g$ whose empirical marginal contribution to $C$ is at least $F'(C \cup \{e^*\}) - F'(C) \geq \kappa(1 - \frac{1}{2\log n})$. Let $e^f$ be the first element that was added to the solution in this phase. It follows from Corollary~\ref{lemma:cor} and the maximality of the empirical coverage of $e^f$ that $F'(\{e^f\}) \leq \mathsf{opt}^j_1 (1+1/(20 \log n))$. Since the phase only terminates when all available empirical contributions fall below the threshold $$\frac{F'(\{e^f\})}{2} \leq \frac{\mathsf{opt}^j_1 (1+1/(20 \log n))}{2} \leq \frac{2\kappa(1-1/\log n) (1+1/(20 \log n))}{2} < \kappa (1-\frac{1}{2\log n})$$ the existence of $e^*$ with an empirical contribution of at least $\kappa (1-\frac{1}{2\log n})$ contradicts the termination of the phase.
\end{proof}

Now, we are ready to prove Lemma~\ref{lemma:kuchik}.

\begin{proof}[of Lemma~\ref{lemma:kuchik}]
	The proof follows from Lemma~\ref{lem:marginal_decay}. Consider the first phase of the algorithm and let $h$ be the highest contribution of any element. Consider $O(\log n)$ values $\Delta_{h(1-1/\log n)},\Delta_{h(1-1/\log n)^2}, \ldots, $ where $\Delta_x$ is the number of elements whose $g_i$ is at least $x$. As we add elements to the solution, we update $\Delta_x$ values by only considering the elements that are not added to the solution. Since we have $O(\log n)$ elements that are increasing, and their maximum is $n$, in every phase, there is one $x$ such that $\Delta_x$ is at most constant times more than $\Delta_{x/(1-1/\log n)}$ and thus after that phase, according to Lemma~\ref{lem:marginal_decay} the value $\Delta_x$ is multiplied by a fraction of $1-\Omega(1/\log n)$. \claudeedit{To see this, note that the quantity $\sum_x \ln(1+\Delta_x)$, taken over the above $O(\log n)$ values, is at most $O(\log^2 n)$, it never increases, and it decreases by $\Omega(1/\log n)$ in every phase. Moreover, since $\Delta_x \le \Delta_{x'}$ whenever $x > x'$, the first of these values that reduces to $0$ is the top one.} Thus, after $O(\log^3 n)$ steps, one of the above values reduces to 0 meaning that there remains no element that is not in the solution whose value $g_i$ is at least $h(1-1/\log n)$. \claudeedit{We then repeat the same argument with the new highest contribution in place of $h$ (which is nonzero by definition, so the top value of the new window is nonzero). Therefore, the highest contribution decreases by a factor of $1-1/\log n$ every $O(\log^3 n)$ phases, and it takes $O(\log n \log(n/\epsilon))$ such decreases to go from at most $1$ to below $\epsilon/n$.} This means that after $O(\log^5 n \log (1/\epsilon))$ phases, the contribution of each of the remaining elements is bounded by $\epsilon / n$.
\end{proof}

\subsection{Proof of Lemma~\ref{lem:lower_bound_sum}}\label{sec:sag}
\begin{proof}[of Lemma~\ref{lem:lower_bound_sum}]
	If $\alpha\beta > l$, the claim follows immediately: in this case $\alpha/l > 1/\beta$, and the inequality for $i=1$ gives $\sum_{i=1}^l d_i \ge d_1 \ge u\alpha/l > u/\beta \ge u\left(1-\frac{1}{e}\right)\min\{\alpha, 1/\beta\}$. Therefore, we assume in the rest of the proof that $\alpha\beta \le l$, i.e., $1-\alpha\beta/l \ge 0$.
	We track the remaining residual value with respect to $u$ after subtracting the scaled prefix sums. For each index $i \in \{1, 2, \dots, l\}$, expanding the summation yields the algebraic identity
	$$
	u - \beta \sum_{j=1}^i d_j = u - \beta \sum_{j=1}^{i-1} d_j - \beta d_i.
	$$
	Substituting the lower bound for $d_i$ into this expression establishes the upper bound
	$$
	u - \beta \sum_{j=1}^i d_j \le u - \beta \sum_{j=1}^{i-1} d_j - \beta \left( u - \beta \sum_{j=1}^{i-1} d_j \right) \frac{\alpha}{l}.
	$$
	Factoring out the common residual term simplifies the recurrence relation to
	$$
	u - \beta \sum_{j=1}^i d_j \le \left( u - \beta \sum_{j=1}^{i-1} d_j \right) \left( 1 - \frac{\alpha\beta}{l} \right).
	$$
	Applying this inequality inductively from $i = l$ down to $i = 1$ (which is valid since $1-\alpha\beta/l \ge 0$) yields
	$$
	u - \beta \sum_{i=1}^l d_i \le u \left( 1 - \frac{\alpha\beta}{l} \right)^l.
	$$
	To decouple the bound from the sequence length $l$, we apply the standard analytical inequality $1 - x \le e^{-x}$~\cite{mitzenmacher2017probability}, which holds for all real $x$. Setting $x = \frac{\alpha\beta}{l}$ produces
	$$
	u \left( 1 - \frac{\alpha\beta}{l} \right)^l \le u \left( e^{-\alpha\beta/l} \right)^l = u e^{-\alpha\beta}.
	$$
Combining these inequalities and isolating the total sum yields the intermediate bound
$$
\sum_{i=1}^l d_i \ge \frac{u}{\beta} \left( 1 - e^{-\alpha\beta} \right).
$$
We utilize the concavity of the function $f(x) = 1 - e^{-x}$. Because $f(x)$ is strictly concave, its graph lies above the secant line connecting $(0,0)$ and $(1, 1 - e^{-1})$ on the interval $[0, 1]$, giving $1 - e^{-x} \ge (1 - e^{-1})x$ for $0 \le x \le 1$. Furthermore, since $f(x)$ is strictly increasing, $1 - e^{-x} \ge 1 - e^{-1}$ for all $x \ge 1$. Combining these observations yields the global lower bound $1 - e^{-x} \ge (1 - e^{-1}) \min(x, 1)$ for all $x \ge 0$. Evaluating this expression at $x = \alpha\beta$ implies
$$
\sum_{i=1}^l d_i \ge \frac{u}{\beta} \left( 1 - e^{-\alpha\beta} \right) \ge \frac{u}{\beta} (1 - e^{-1}) \min(\alpha\beta, 1) = u (1 - e^{-1}) \min(\alpha, 1/\beta)
$$

\begin{figure}[htbp]
	\centering
	\begin{tikzpicture}
		\begin{axis}[
			axis lines = middle,
			xlabel = {$x$},
			ylabel = {},
			xmin = -0.1, xmax = 2.2,
			ymin = -0.1, ymax = 1.1,
			xtick = {0, 1, 2},
			xticklabels = {0, 1, 2},
			ytick = {0, 0.6321, 1},
			yticklabels = {0, $1 - e^{-1}$, 1},
			legend pos = south east,
			grid = none,
			width = 8cm, height = 6cm
			]
			\addplot [
			domain = 0:2.1, 
			samples = 100, 
			color = blue,
			thick
			] {1 - exp(-x)};
			\addlegendentry{$1 - e^{-x}$}
			
			\addplot [
			domain = 0:2.1, 
			samples = 100, 
			color = red,
			dashed,
			thick
			] {(1 - exp(-1)) * (x < 1 ? x : 1)};
			\addlegendentry{$(1 - e^{-1})\min(x, 1)$}
			
			\draw[gray, dashed] (axis cs:1,0) -- (axis cs:1,0.6321);
			\draw[gray, dashed] (axis cs:0,0.6321) -- (axis cs:2.1,0.6321);
			
			\node[circle, fill=black, inner sep=1.5pt] at (axis cs:1,0.6321) {};
		\end{axis}
	\end{tikzpicture}
	\caption{Comparison between the function $g(x) = 1 - e^{-x}$ and the piecewise linear lower bound valid for all $x \ge 0$.}
	\label{fig:concavity_bound}
\end{figure}

which completes the proof.
\end{proof}

%% file: src/query.tex
\section{Stronger Queries for Social Networks}\label{sec:practical_queries}
In Section~\ref{sec:strong}, we established that by employing a stronger type of query oracle, $\mathcal{Q}_s$, one can achieve a $(1-1/e-\epsilon)$ approximation guarantee for the maximum $k$-coverage problem in almost linear time and with almost linear query cost. This result stands in sharp contrast to the $O(1/\sqrt{k})$ bound derived in Section~\ref{sec:lowerbound}, which applies to any algorithm restricted to standard oracle queries. The gap between these two results stems directly from the enhanced information provided by $\mathcal{Q}_s$ which is an oracle that returns sets sampled specifically from the sub-collection of $\mathcal{M}$ that avoid a prohibited set of elements $A$.

However, to maintain the practical utility of our result, these stronger queries must be efficiently implementable within the standard social network diffusion models. Recall from Section~\ref{sec:reduction} that a standard query $\mathcal{Q}$ is answered by selecting a root node $v \in V$ uniformly at random and performing a graph traversal (e.g., BFS or DFS) on the reversed edges of a sampled realization. The resulting set of reachable nodes constitutes the reverse reachable set. To implement $\mathcal{Q}_s$, the oracle must return a reverse reachable set that does not contain any node from a given set $A$.

A naive approach to simulating $\mathcal{Q}_s$ using the standard oracle $\mathcal{Q}$ is \textit{rejection sampling}: one repeatedly draws reverse reachable sets using $\mathcal{Q}$ until a set is found that satisfies the condition $R \cap A = \emptyset$. While this is theoretically sound, it introduces a significant computational bottleneck. The cost of a single successful query to $\mathcal{Q}_s$ would become the sum of the costs of all discarded samples plus the cost of the last successful sample. 

To get intuition, consider a case where all collections in $\mathcal{M}$ are of the same size. In this case, the simple approach is actually desirable. For an avoiding set $A$, define $r_A$ to be the ratio of collections of $\mathcal{M}$ that do not contain any elements from $A$. It is easy to show that with high probability ($1-n^{-c}$ for an arbitrarily large constant $c$), the number of unsuccessful attempts to answer $\mathcal{Q}_s(A)$ is bounded by $O(\log n/r_A)$. Thus, as long as $r_A > \epsilon$, this only adds a multiplicative factor of $\tilde{O}_{\epsilon}(1)$ to the cost function, leaving the overall complexity of the algorithm almost intact. 

This is indeed not the case when $r_A$ drops below $\epsilon$. However, notice that in Algorithm~\ref{algorithm:linear}, when querying $\mathcal{Q}_s$, we always choose $A$ to be the current elements that are present in our solution ($A = \mathcal{C}$). Thus, if $r_A$ drops below $\epsilon$, this means that our current solution is already $(1-\epsilon)$-approximate and therefore we can terminate the algorithm already.

Given that in Section~\ref{sec:lowerbound} we showed that $\mathcal{Q}$ is not strong enough to obtain an approximation factor better than $\tilde O(1/\sqrt{k})$, we expect the simple rejection sampling approach to fail if collection sizes are not the same. Let us remind the reader of the hard instance construction: we consider a ground set partitioned into two blocks where the multiset $\mathcal{M}$ contains $s \approx \sqrt{k}$ identical copies of a large collection $L= \{1, 2, \dots, n/2\}$ and $k-s$ singleton collections $\{x\}$ for $x$ chosen from $\{n/2+1, n/2+2, \ldots, n\}$.

In the first phase, our algorithm adds element $1$ to the solution set (since it is included in the large collections) and we now want to implement the oracle $\mathcal{Q}_s$ with the prohibited set $A = \{1\}$. Notice that all the collections in $\mathcal{M}$ that are valid for this query (those that do not contain element $1$) are the singleton collections, each having a size of $1$. However, if we use the naive rejection strategy, we repeatedly draw from the standard oracle $\mathcal{Q}$. Since $\mathcal{Q}$ returns the large collection $L$ with probability $s/k \approx 1/\sqrt{k}$, and each such draw must be discarded at a cost of $|L| = n/2$, the expected cost to successfully sample one valid singleton collections via Algorithm~\ref{algorithm:linear} escalates to $O(n/\sqrt{k})$. This effectively adds a multiplicative factor of $O(n/\sqrt{k})$ to the query cost, making it impossible to obtain a desirable solution with query cost $\tilde O_{\epsilon}(n)$.

Recall that our goal is to apply this framework to the influence maximization problem, where reverse reachable sets are generated by performing a traversal on the reversed edges of a social network. While the implementation of the standard oracle $\mathcal{Q}$ is straightforward via a graph search (e.g., BFS or DFS) starting from a randomly selected root, implementing $\mathcal{Q}_s$ with the required cost limit is not directly possible. The primary obstacle is that the only way to determine if a traversal will eventually reach a prohibited node in $A$ is to actually perform the search, which may incur a significant cost before the condition $R \cap A = \emptyset$ can be evaluated.

To bridge this gap, we introduce a middle-ground oracle, $\mathcal{Q}_{\ell}$, which serves two purposes: it is efficiently implementable via a graph search with a guaranteed cost bound, and it can be used to implement the stronger oracle $\mathcal{Q}_s$ effectively. The oracle $\mathcal{Q}_{\ell}$ receives an integer parameter $l$ as input and, when invoked, returns one of the following:
\begin{itemize}
	\item A collection of size at most $l$ which is the reverse reachable set of a randomly chosen node. In this case, the incurred cost is equal to the size of the returned collection.
	\item A \texttt{null} value, indicating that the reverse reachable set of the randomly chosen node has a size strictly larger than $l$. In this case, the cost of the oracle call is exactly $l$.
\end{itemize}
This oracle is easily implementable by a standard graph traversal that includes a breaking point: the search is terminated as soon as the number of explored edges or nodes exceeds $l$.

Back to the hard instance construction in Section~\ref{sec:lowerbound}, one can observe that $\mathcal{Q}_{\ell}$ can resolve our problem easily. In that scenario, the bottleneck was the high cost of sampling the large set $L$ only to discard it because it contained the prohibited element. By invoking $\mathcal{Q}_{\ell}$ with a small threshold $l = 1$, the oracle returns \texttt{null} as soon as it realizes that the collection it is drawing has a size more than 1. This allows the algorithm to skip over the bad large sets efficiently and successfully sample the valid singleton collections (which are of size $1$) within each with a cost of $\tilde{O}_{\epsilon}(1)$. Thus, $\mathcal{Q}_{\ell}$ may provide a way to simulate the necessary properties of $\mathcal{Q}_s$ without the prohibitive computational overhead of full rejection sampling on large sets.

We show in the following that $\mathcal{Q}_{\ell}$ is not merely a specialized fix to bypass the specific difficulty captured by the lower bound in Section~\ref{sec:lowerbound}; rather, it is a robust tool for efficiently implementing $\mathcal{Q}_s$ in general. By carefully choosing the threshold $l$, we can use $\mathcal{Q}_{\ell}$ to simulate the behavior of the stronger oracle $\mathcal{Q}_s$ across diverse instances, ensuring that large, invalid sets never dominate the computational cost. This makes it a fundamental building block for achieving our almost linear-time approximation algorithm.

It is worth examining the intuition behind the lower bound presented in Section~\ref{sec:lowerbound}. One might observe that the primary reason oracle $\mathcal{Q}$ fails to provide a decent approximation for the aforementioned example is that the average size of the collections of $\mathcal{M}$ is significantly larger than the average size of the collections of $\mathcal{M}$ that do not contain element $1$. Otherwise, if the average sizes were comparable, one could apply an argument similar to the uniform-size case. Based on this intuition, we present Lemma~\ref{lem:threshold} that provides the technical threshold $l$ necessary to bridge the gap between $\mathcal{Q}$ and $\mathcal{Q}_s$ by identifying a scale where the expected sizes and the tail probabilities are balanced relative to $\epsilon$. For brevity, we defer the proof of Lemma~\ref{lem:threshold} to Section~\ref{sec:prooflemmal}.

\begin{lemma}
	\label{lem:threshold}
	Let $\mathcal{N} = \{1, 2, \ldots, n\}$ and let $\mathcal{M}$ be a multiset of non-empty subsets of $\mathcal{N}$. For any avoiding set $A \subseteq \mathcal{N}$, define $\mathcal{M}^* = \{X \in \mathcal{M} \mid X \cap A = \emptyset\}$. Assume that $|\mathcal{M}^*| > \epsilon |\mathcal{M}|$ for some constant $\epsilon > 0$. Let $\Delta = \lceil \log_2 n \rceil+2$. Then, there exists a threshold $l \in \{0, 1, 3, 7, \ldots, 2^{\Delta-1}-1\}$ such that the following two conditions hold simultaneously:
	\begin{enumerate}
		\item $|\{X \in \mathcal{M}^* \mid |X| > l\}| \leq \epsilon |\{X \in \mathcal{M} \mid |X| > l\}|$
		\item $\mathbb{E}_{X \sim \mathcal{M}}[\min\{|X|, l\}] \leq \frac{6 \Delta^2}{\epsilon^2} \cdot \mathbb{E}_{X \sim \mathcal{M}^* \text{ such that }|X| \leq l}[|X|]$
	\end{enumerate}
\end{lemma}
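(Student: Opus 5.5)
The plan is to work on the dyadic grid of thresholds $l_t = 2^t-1$ for $t=0,1,\ldots,\Delta-1$ and to encode everything through tail fractions. Let $a_t = |\{X\in\mathcal{M} : |X|>l_t\}|/|\mathcal{M}|$ and $b_t = |\{X\in\mathcal{M}^* : |X|>l_t\}|/|\mathcal{M}|$. Both sequences are non-increasing in $t$, and $b_t\le a_t$.

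In this notation, condition (1) at $l_t$ reads $b_t\le \epsilon a_t$; call such $t$ \emph{good}. Two boundary facts anchor the argument.
\begin{itemize}
\item Since $2^{\Delta-1}-1\ge 2n-1>n$, we have $a_{\Delta-1}=b_{\Delta-1}=0$, so $t=\Delta-1$ is always good.
\item Since all sets are non-empty, $a_0=1$ and $b_0=|\mathcal{M}^*|/|\mathcal{M}|>\epsilon$, so $t=0$ is never good.
\end{itemize}

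Next I convert both sides of condition (2) into sums over scales. For the left side, write $\mathbb{E}_{X\sim\mathcal{M}}[\min\{|X|,l_t\}]=\sum_{s=0}^{l_t-1}\Pr[|X|>s]$ and group $s$ into the blocks $(l_{t'},l_{t'+1}]$, each of length $2^{t'}$. This gives the bound $\sum_{t'<t}2^{t'}a_{t'}$ up to a constant factor.

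For the right side, the conditional average is at least the unnormalized quantity $\frac{1}{|\mathcal{M}|}\sum_{X\in\mathcal{M}^*,|X|\le l_t}|X|$, because $|\mathcal{M}^*_{\le l_t}|\le|\mathcal{M}|$. Every set in $\mathcal{M}^*$ with size in $(l_{t'},l_{t'+1}]$ contributes at least $2^{t'}$. So the right side is at least $\sum_{t'<t}2^{t'}(b_{t'}-b_{t'+1})$, restricted to $t'+1\le t$. Condition (2) thus reduces to the purely numerical inequality
\[
\sum_{t'<t}2^{t'}a_{t'}\;\le\;\frac{c\,\Delta^2}{\epsilon^2}\sum_{t'<t}2^{t'}(b_{t'}-b_{t'+1})
\]
for a suitable good index $t$.

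The key mechanism is this. If $t'$ is bad and $t$ is good with $t'<t$, then the $\mathcal{M}^*$-mass lying in $(l_{t'},l_t]$ is $b_{t'}-b_t>\epsilon a_{t'}-\epsilon a_t$. Whenever $a_t\le a_{t'}/2$, this mass is at least $\epsilon a_{t'}/2$ and contributes at least $\epsilon 2^{t'}a_{t'}/2$ to the right-hand sum. That pays for the term $2^{t'}a_{t'}$ on the left, but only for that single scale.

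To choose $t$, I let $t_0$ maximize $2^{t'}a_{t'}$. I take $t$ to be the smallest good index $\ge t_0$, or else the next good index after a point where $a$ has halved. Then the left sum is at most $\Delta\cdot 2^{t_0}a_{t_0}$, which costs one factor $\Delta$, and the single dominant scale is paid for by the mechanism above. If $t_0$ itself is good, I move $t_0$ down to a bad index. Index $0$ is bad, so a bad index below any good index exists.

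The main obstacle is the case where $a$ does not drop by half between the bad index $t'$ and the next good index $t$, i.e.\ $a_t>a_{t'}/2$. Then $\epsilon a_{t'}-\epsilon a_t$ may be tiny and the bad-scale mass may sit above $l_t$. My first attempt is an iteration argument. Whenever this happens I replace $t$ by a larger good index at which $a$ has at least halved relative to $a_{t'}$. Because $a_t$ decreases monotonically to $0$ at $\Delta-1$, such an index exists, and the number of halvings that matter is $O(\Delta)$ before the remaining terms are negligible compared with the dominant scale, since $2^{t'}a_{t'}$ is maximal at $t_0$. Summing the losses over these $O(\Delta)$ rounds should account for the second factor of $\Delta$.

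The two factors of $\epsilon$ arise as follows. One comes from the inequality $b_{t'}>\epsilon a_{t'}$ at bad indices. The other comes from absorbing the additive $\epsilon a_t$ slack, using $|\mathcal{M}^*|>\epsilon|\mathcal{M}|$ to lower-bound the size-$\ge1$ mass of $\mathcal{M}^*_{\le l_t}$. Together with the constant losses from the dyadic grouping, this should give the stated constant $6\Delta^2/\epsilon^2$.
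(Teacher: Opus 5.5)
\paragraph{Verdict: genuine gap in the choice of $t$.} Your reduction of condition (2) to the dyadic inequality $\sum_{t'<t}2^{t'}a_{t'}\le C\sum_{t'<t}2^{t'}(b_{t'}-b_{t'+1})$ is sound. The paper's bounds are likewise all in terms of $\frac{1}{|\mathcal{M}|}\sum_{X\in\mathcal{M}^*,|X|\le l}|X|$. If the maximizer $t_0$ of $2^{t'}a_{t'}$ is bad, you are done by taking $t=\Delta-1$. The gap is the case where $t_0$ is good, and the selection rule you describe for it fails. After moving down to a bad index $t'$ you only pay for $2^{t'}a_{t'}$. The iteration that then raises $t$ until $a$ halves relative to $a_{t'}$ absorbs the unpaid dominant term.

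\paragraph{Counterexample to your rule.} Let $\epsilon=1/10$ and $2^M\le n<2^{M+1}$. Let a quarter of the sets lie in $\mathcal{M}^*$ with size $2$, and the other three quarters meet $A$ with size $2^M$. Then $t=0,1$ are bad, every $t\ge 2$ is good, $t_0=M$, and $a_t=3/4>a_1/2$ for $2\le t\le M$. Your rule returns $t=M$ (smallest good index $\ge t_0$), or $t=M+1$ (after moving down and iterating). In both cases $l\ge 2^M-1$, where the left side of (2) is at least $\frac12+\frac34(2^M-1)$ and the right side is $2$. That ratio is of order $n$, whereas $l=3$ works.

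\paragraph{The opposite rule fails too.} Suppose you stop at the first good index above the bad block. Take half of the sets to be singletons in $\mathcal{M}^*$, a $2^{-h}$ fraction to be sets of $\mathcal{M}^*$ of size $2^h$, and the rest of size $2^M$ with exactly an $\epsilon$ fraction of them in $\mathcal{M}^*$. Then every good $t\le M$ gives a ratio above $(2^h-1)/3$, which exceeds $6\Delta^2/\epsilon^2$ for $h\approx\frac12\log_2 n$. Only thresholds $l\ge 2^M$ work. Both instances look the same to your data: a bad block, then only good indices, and no halving of $a$ before $M+1$. What separates them is whether the $\mathcal{M}^*$-density of the tail is $0$ or about $\epsilon$, and the good/bad labels at level $\epsilon$ cannot see this.

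\paragraph{The missing idea.} You need to detect sparsity at the stricter level $\epsilon'=\epsilon/\Delta$. The paper takes the smallest dyadic scale $i$ with $|\mathcal{M}^*_{\ge i}|\le\epsilon'|\mathcal{M}_{\ge i}|$. Subtracting this from the reverse inequality at any smaller scale $i'$ gives the band bound $|\mathcal{M}^*_{\ge i'}\setminus\mathcal{M}^*_{\ge i}|>\epsilon'|\mathcal{M}_{\ge i'}\setminus\mathcal{M}_{\ge i}|$, which replaces your halving requirement. It then lets $j<i$ be the largest dyadic scale whose bucket $S^*_j=\{X\in\mathcal{M}^*: j\le|X|<2j\}$ has $|S^*_j|>\epsilon'|\mathcal{M}_{\ge i}|$ (such $j$ exists because $|\mathcal{M}^*|>\epsilon|\mathcal{M}|$), and sets $l=2j-1$.

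Condition (1) follows because the light buckets strictly between $j$ and $i$, plus the tail at $i$, are at most $\Delta$ terms. Together they contribute at most $\Delta\epsilon'|\mathcal{M}_{\ge i}|=\epsilon|\mathcal{M}_{\ge i}|$.

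For (2), the left side splits into three parts:
\begin{enumerate}
\item[(a)] Sizes at most $l$: the band bound and the maximality of $j$ give a tail ratio of order $\epsilon/\Delta^2$ at every scale below $2j$, and then Lemma~\ref{lem:average_ratio} applies.
\item[(b)] Sizes in $(l,i)$: these are paid by $|S^*_j|\ge\frac{\epsilon'}{\Delta}|\{X\in\mathcal{M}:2j\le|X|<i\}|$.
\item[(c)] Sizes at least $i$: these are paid by $|S^*_j|>\epsilon'|\mathcal{M}_{\ge i}|$.
\end{enumerate}
Parts (b) and (c) are weighted by $l<2j$ and charged to the sets of $S^*_j$, each of which has size at least $j$. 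In your notation, locate the first index that is good at level $\epsilon/\Delta$. Then take $t$ to be one more than the top heavy bucket below it, rather than the next index that is good at level $\epsilon$.
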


To provide further clarity for Lemma~\ref{lem:threshold}, we define a notion of similarity between multisets. Given a subset of elements $A \subseteq \mathcal{N}$, a multiset $\hat{\mathcal{M}}$ is said to be $(A, \epsilon)$-similar to $\mathcal{M}$ if the following conditions are satisfied:

\begin{itemize}
	\item $\hat{\mathcal{M}} \subseteq \mathcal{M}$
	\item $|\{X \in \mathcal{M} \setminus \hat{\mathcal{M}} \mid X \cap A = \emptyset\}| \leq \epsilon |\mathcal{M} \setminus \hat{\mathcal{M}}|$
\end{itemize}

Intuitively, this similarity condition implies that if we replace $\mathcal{M}$ with an $(A, \epsilon)$-similar multiset during the execution of our greedy algorithm (where $A$ represents the current set of chosen elements), the additive error in the solution is bounded by at most $2\epsilon \mathsf{opt}$ where $\mathsf{opt}$ is the coverage of the optimal set of size $k$ (Recall that $\mathsf{opt} \geq \frac{|\{X \in \mathcal{M} \setminus \hat{\mathcal{M}} \mid X \cap A = \emptyset\}|}{|\mathcal{M}|}$ for any $A$ which is a partial solution). Lemma~\ref{lem:threshold} demonstrates that such an $(A, \epsilon)$-similar multiset can be constructed by imposing a limit on the sizes of the subsets within $\mathcal{M}$. Furthermore, the lemma guarantees that the average size of the sets in this similar multiset is comparable to the expected runtime of the oracle $\mathcal{Q}_{\ell}$ under that size limit. This relationship serves as the fundamental building block for implementing $\mathcal{Q}_{s}$ via $\mathcal{Q}_{\ell}$.

Recall that Algorithm~\ref{algorithm:linear} starts each time with an avoiding set $A$ and iteratively calls $\mathcal{Q}_s(A)$ until the total size of the sets it draws using $\mathcal{Q}_s$ reaches a threshold, which we denote by $\mathcal{B} \cdot n$. The oracle $\mathcal{Q}_s$ spends $O(\mathcal{B} \cdot n)$ time to construct these subsets. Now, define $\epsilon' = \epsilon/2$ and assume we are given the proper threshold $l$ from Lemma~\ref{lem:threshold} using the parameter $\epsilon'$, and we wish to use it to construct a similar sequence of subsets via $\mathcal{Q}_{\ell}$ using parameter $l$. We claim that if we keep running $\mathcal{Q}_\ell(l)$ until the total query cost reaches $\frac{96 (\lceil \log n \rceil +2)^2}{\epsilon^3} \cdot \mathcal{B} \cdot n$, and only include the non-null sets returned by $\mathcal{Q}_\ell$ that do not contain any element of $A$, we obtain the desired sequence of subsets with constant probability.

To see this, observe that since we use $\epsilon'$ for Lemma~\ref{lem:threshold} and we know that there is at least an $\epsilon$ fraction of the sets in $\mathcal{M}$ that do not intersect with $A$, the guarantees of Lemma~\ref{lem:threshold} imply that at most one half of such subsets have size more than $l$. Thus, at least an $\epsilon'$ fraction of our queries to $\mathcal{Q}_\ell$ result in valid sets. Thus, each time we call $\mathcal{Q}_{\ell}$ with parameter $l$, with probability at least $\epsilon'$,  the outcome results in a valid set. Intuitively, we call a unit of time spent on $\mathcal{Q}_\ell$ \textit{relevant} if it is spent on a set that has size at most $l$ and does not contain $A$, and \textit{not relevant} otherwise. Due to the guarantees of Lemma~\ref{lem:threshold}, the expected ratio of relevant time spent on $\mathcal{Q}_\ell$ over the total time spent on it is at least $\frac{\epsilon'^3}{6 (\lceil \log n \rceil +2)^2}$.

Thus, due to Markov's inequality, after we spent a cost of $\frac{12 (\lceil \log n \rceil +2)^2}{\epsilon'^3} \cdot \mathcal{B} \cdot n = \frac{96 (\lceil \log n \rceil +2)^2}{\epsilon^3} \cdot \mathcal{B} \cdot n$ on $Q_{\ell}$, with probability at least $1/2$ the total size of valid sets is at least $\mathcal{B} \cdot n$ as desired. This signals the fact by replacing $\mathcal{M}$ with an $(A,\epsilon/2)$-similar subset, we can basically implement $Q_s$ via $Q_\ell$ with a multiplicative overhead of $O(\log^2 n/\epsilon^3)$. However, there is one last challenge to be addressed: we are not aware of the correct value of $l$.

Now, suppose we are given a value $l$ and proceed with the aforementioned procedure. However, even if we succeed in generating the desired output, we require a formal guarantee of the validity of our process, particularly in cases where $l$ may not be the correct threshold. While the total sum of valid sets reaching threshold $\mathcal{B} \cdot n$ is an obvious criterion for correctness, it only addresses the second condition of Lemma~\ref{lem:threshold}. If the first condition of Lemma~\ref{lem:threshold} is violated for this $l$, the sampled multiset may not satisfy the necessary similarity properties. Thus, we must verify this condition before declaring success.

To this end, we modify the algorithm by maintaining a counter for the number of sets encountered with size strictly greater than $l$. As long as this counter remains below $200 \log n / \epsilon'$, we utilize the regular oracle $\mathcal{Q}$ instead of $\mathcal{Q}_{\ell}$, while strictly maintaining our cost accounting by only adding $\min\{|X|, l\}$ to our total cost budget in order to keep track of how much $\mathcal{Q}_{\ell}$ would cost on the same samples. Once the counter reaches this threshold, we evaluate the performance on these samples to determine how many of the returned sets do not contain any element of $A$. If this count exceeds $300 \log n$, then with high probability $l$ is not the correct threshold, allowing us to terminate and declare failure. Otherwise, even if $l$ is not the theoretically guaranteed parameter, we know with high probability that the sampled multiset is $(A, \epsilon)$-similar to $\mathcal{M}$. From that point forward, we ignore all sets with size larger than $l$ and exclusively use $\mathcal{Q}_{\ell}$ with parameter $l$, focusing solely on the accumulated sum of valid sets. If our counter never reaches the threshold, we simply consider all subsets (including those with size greater than $l$) and check if their total size is sufficient; if so, we report them, as we are certain we have sampled directly from $\mathcal{M}$.

Our proof for the validity of $l$ introduces three minor trade-offs. First, there is an additional query cost of $O(n \log n / \epsilon)$ for verification of the first condition, which adds an additive $O(\log n / \epsilon)$ term to our existing multiplicative $O(\log^2 n/\epsilon^3)$ overhead which is negligible. Second, it slightly reduces our success probability from $1/2$ to $1/2 - n^{-\Omega(1)}$, which is practically negligible. Third, it introduces a false positive probability of $n^{-\Omega(1)}$, which is well within tolerable limits for our analysis. The pseudocode for this refined procedure is provided in Algorithm~\ref{alg:validation}.

\begin{algorithm}[H]
	\caption{Adaptive Oracle Implementation with Validation}
	\label{alg:validation}
	\begin{algorithmic}[1]
		\STATE \textbf{Input:} Avoiding set $A$, threshold $\mathcal{B} \cdot n$, parameter $l$, error $\epsilon$.
		\STATE \textbf{Initialize:} $S \gets \emptyset$, $\text{total\_cost} \gets 0$, $\text{valid\_sum} \gets 0$, $\text{large\_sets} \gets \emptyset$, $\text{counter} \gets 0$.
		\STATE \textbf{Thresholds:} $T_{cost} \gets \frac{96 (\lceil \log n \rceil + 2)^2}{\epsilon^3} \cdot \mathcal{B} \cdot n$, $T_{large} \gets 400 \log n / \epsilon$.
		
		\WHILE{$\text{total\_cost} < T_{cost}$ \textbf{and} $\text{valid\_sum} < \mathcal{B} \cdot n$}
		\IF{$\text{counter} < T_{large}$}
		\STATE Draw $X \sim \mathcal{M}$ using regular oracle $\mathcal{Q}$.
		\STATE $\text{total\_cost} \gets \text{total\_cost} + \min\{|X|, l\}$.
		\IF{$X \cap A = \emptyset$}
		\STATE $S \gets S \cup \{X\}$, $\text{valid\_sum} \gets \text{valid\_sum} + |X|$.
		\ENDIF
		\IF{$|X| > l$}
		\STATE Add $X$ to $\text{large\_sets}$ and $\text{counter} \gets \text{counter} + 1$.
		\IF{$\text{counter} = T_{large}$}
		\IF{$|\{X \in \text{large\_sets} \mid X \cap A = \emptyset\}| > 300 \log n$}
		\RETURN \textbf{Failure} (Threshold $l$ is invalid).
		\ELSE
		\FOR{$X \in \{\text{large\_sets} \mid X \cap A = \emptyset\}$}
		\STATE $S \gets S \setminus \{X\}$, $\text{valid\_sum} \gets \text{valid\_sum} - |X|$.
		\ENDFOR
		\ENDIF
		\ENDIF
		\ENDIF
		\ELSE
		\STATE Draw $X$ using $\mathcal{Q}_{\ell}$ with parameter $l$.
		\STATE $\text{total\_cost} \gets \text{total\_cost} + \text{cost}(\mathcal{Q}_{\ell})$.
		\IF{$X \neq \text{null}$ \textbf{and} $X \cap A = \emptyset$}
		\STATE $S \gets S \cup \{X\}$, $\text{valid\_sum} \gets \text{valid\_sum} + |X|$.
		\ENDIF
		\ENDIF
		\ENDWHILE
		
		\IF{$\text{valid\_sum} \geq \mathcal{B} \cdot n$}
		\RETURN \textbf{Success}, $S$.
		\ELSE
		\RETURN \textbf{Failure}.
		\ENDIF
	\end{algorithmic}\label{alg:redu}
\end{algorithm}

We are now ready to state the main result of this section.
\begin{lemma}\label{lemma:gg}
	Let $A \subseteq \mathcal{N}$ be a subset of elements such that more than an $\epsilon$ fraction of sets in $\mathcal{M}$ do not contain any element from $A$. For a given budget $\mathcal{B} \cdot n$, one can generate a sequence of subsets $S_1, S_2, \ldots$ such that 
	\begin{itemize}
		\item $\sum |S_i| \geq \mathcal{B} \cdot n$.
		\item $S_i$'s are drawn uniformly and independently from a multiset which is $(A,\epsilon)$-similar to $\mathcal{M}$.
	\end{itemize}
	The algorithm runs in time $O(\frac{\log^4 n}{\epsilon^3}\cdot \mathcal{B} \cdot n)$  and succeeds with probability at least $1-n^{-9}$.
\end{lemma}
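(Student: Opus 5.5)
The plan is to run Algorithm~\ref{alg:validation} for every candidate threshold $l \in \{0,1,3,\ldots,2^{\Delta-1}-1\}$, which is $\Delta = O(\log n)$ values. For each $l$ we repeat the run $O(\log n)$ times independently. We output the first run that reports \textbf{Success}. The total time is then $\Delta \cdot O(\log n) \cdot T_{cost} = O(\log n)\cdot O(\log n)\cdot O(\frac{\log^2 n}{\epsilon^3}\mathcal{B} n)$. This matches the claimed $O(\frac{\log^4 n}{\epsilon^3}\mathcal{B} n)$, together with the additive verification cost $O(n\log n/\epsilon)$. That cost is dominated, because $\mathcal{B}\ge 1$ in our use.

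The proof has three ingredients.

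\textbf{Completeness.} Let $l^*$ be the threshold that Lemma~\ref{lem:threshold} guarantees with parameter $\epsilon'=\epsilon/2$. We show that a single run with $l^*$ succeeds with probability at least $1/2 - n^{-\Omega(1)}$. Condition (1) of Lemma~\ref{lem:threshold}, combined with $|\mathcal{M}^*|>\epsilon|\mathcal{M}|$, implies two things. First, large valid sets form at most an $\epsilon'$ fraction of the large sets. So among the $T_{large}$ large samples we expect at most $\epsilon' T_{large} = 200\log n$ valid ones, and a Chernoff bound shows the $300\log n$ failure test is triggered only with probability $n^{-\Omega(1)}$. Second, at least an $\epsilon'$ fraction of all draws are valid and of size at most $l^*$. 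Condition (2) says that the expected relevant cost per draw is at least a $\frac{\epsilon'^3}{6\Delta^2}$ fraction of the expected charged cost $\mathbb{E}[\min\{|X|,l\}]$. Apply Markov's inequality to the non-relevant part of the cost, as in the discussion preceding Algorithm~\ref{alg:validation}. This gives valid\_sum $\ge \mathcal{B} n$ before total\_cost reaches $T_{cost}$ with probability at least $1/2$. The charging is consistent across the two branches: in the first branch we charge $\min\{|X|,l\}$, and in the second branch $\mathcal{Q}_\ell$ costs the same. So the analysis is identical to pure $\mathcal{Q}_\ell$ sampling. With $O(\log n)$ independent repetitions, the probability that every run with $l^*$ fails is at most $2^{-O(\log n)} \le n^{-10}$.

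\textbf{Soundness.} Any run that returns \textbf{Success} has one of two forms.

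\begin{itemize}
\item The counter never reached $T_{large}$. Then every sample was drawn from $\mathcal{Q}$ and kept whenever it was valid. This is exactly uniform sampling from $\mathcal{M}$ conditioned on avoiding $A$, so we take $\hat{\mathcal{M}}=\mathcal{M}$. That multiset is trivially $(A,\epsilon)$-similar, since $\mathcal{M}\setminus\hat{\mathcal{M}}=\emptyset$.
\item The counter reached $T_{large}$ and at most $300\log n$ of the large sets were valid. Then take $\hat{\mathcal{M}}=\{X\in\mathcal{M}: |X|\le l\}$, and after this point only sets of $\hat{\mathcal{M}}$ are kept. We must show that $|\{X: |X|>l,\ X\cap A=\emptyset\}|\le \epsilon|\{X:|X|>l\}|$ holds with high probability. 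Suppose the true fraction of valid sets among large sets exceeded $\epsilon$. The expected count of valid sets in $T_{large}=400\log n/\epsilon$ large samples would then be at least $400\log n$. By Chernoff, observing at most $300\log n$ has probability $n^{-\Omega(1)}$.
\end{itemize}

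The constants are chosen so that both exponents are at least $12$. A union bound over the $O(\log^2 n)$ runs then gives the false positive probability.

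The step I expect to be the main obstacle is the subtlety in the kept samples. The valid sets of size at most $l$ collected in the first phase, and those collected afterward via $\mathcal{Q}_\ell$, must all be uniform and independent draws from $\hat{\mathcal{M}}$ restricted to sets avoiding $A$. The removal loop in Algorithm~\ref{alg:validation} discards valid large sets, and we must check that this removal does not bias the distribution. Each draw is independent of the others, and the decision to keep a draw depends only on that draw's own membership in $\hat{\mathcal{M}}$. So conditioning on the event that the validation check passed should leave the kept draws i.i.d.; I would try to formalize this first by a stopping-time argument. One more issue is that valid\_sum may drop after removals. The while loop simply continues in that case, and the total-cost accounting is unaffected. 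A final union bound over completeness and soundness yields success probability at least $1-n^{-9}$.
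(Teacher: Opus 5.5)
Your proposal is correct and follows the paper's proof essentially step by step. It uses the existence of $l^*$ from Lemma~\ref{lem:threshold} with $\epsilon/2$, Markov's inequality for the cost budget, and the two Chernoff bounds ($200\log n$ versus $300\log n$ for not rejecting $l^*$, and $400\log n$ versus $300\log n$ for false positives), with $O(\log^2 n)$ capped runs for amplification and running time; the only differences are that you enumerate every threshold with $O(\log n)$ repetitions instead of guessing $l$ uniformly at random in each of $O(\log^2 n)$ trials, and that you spell out the two soundness cases and the $O(n\log n/\epsilon)$ verification cost, which the paper leaves implicit.
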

\begin{proof}
	We have established that Algorithm~\ref{alg:redu} serves as a validated implementation of $\mathcal{Q}_s$ via $\mathcal{Q}_{\ell}$. To complete the proof, we formally address the success probability, the false positive rate, and the total complexity.
	
	By Lemma~\ref{lem:threshold}, there exists at least one value $l^* \in \{0, 1, 3, \ldots, 2^{\lceil \log_2 n \rceil + 1}-1\}$ that satisfies both conditions for $\epsilon' = \epsilon/2$. Since there are $\lceil \log_2 n \rceil + 2$ possible values in this set, a uniform random guess for $l$ selects the correct threshold with probability $1/(\lceil \log_2 n \rceil + 2)$.
	
	When the correct $l^*$ is chosen, we must account for two failure modes: first, that we fail to accumulate enough valid sets within the cost budget, and second, that our verification step erroneously rejects the correct $l^*$. 
	
	\begin{itemize}
		\item Budget Sufficiency: Based on the relevant time ratio established by Lemma~\ref{lem:threshold}, the expected cost to reach $\mathcal{B} \cdot n$ is half of our budget, which is $\frac{48 (\lceil \log n \rceil + 2)^2}{\epsilon^3} \cdot \mathcal{B} \cdot n$. By Markov's inequality, the probability that the actual cost exceeds the total budget of $\frac{96 (\lceil \log n \rceil + 2)^2}{\epsilon^3} \cdot \mathcal{B} \cdot n$ before we accumulate the required valid size is at most $1/2$.
		
		\item Verification Reliability: If $l^*$ is correct, the first condition of Lemma~\ref{lem:threshold} guarantees that the probability a set $X \in \mathcal{M}$ with $|X| > l^*$ does not contain an element of $A$ is at most $\epsilon/2$. Let $\Xi$ be the number of sets not intersecting $A$ among the $\kappa = 400 \log n / \epsilon$ samples of size $> l^*$. Then $\Xi$ follows a Binomial distribution $B(\kappa, \rho)$ where $\rho \leq \epsilon/2$. The expected value is $\mu = \kappa \rho \leq 200 \log n$. To erroneously reject $l^*$, the number of non-intersections must exceed $300 \log n$. Using the Chernoff bound $\mathbb{P}(\Xi \geq (1+\delta)\mu) \leq e^{-\frac{\delta^2 \mu}{2+\delta}}$ with $\delta = 0.5$ and $\mu = 200 \log n$, we have:
		\[ \mathbb{P}(\Xi \geq 300 \log n) \leq \exp\left(-\frac{0.25 \cdot 200 \log n}{2.5}\right) = \exp(-20 \log n) \le \exp(-20 \ln n) = n^{-20}. \]
	\end{itemize}
	
	Thus, a single invocation with the correct $l^*$ succeeds with probability at least $1/2 - n^{-20} > 1/3$. Including the probability of guessing $l$, the success probability is $\Omega(1/\log n)$. By running $O(\log^2 n)$ trials, we succeed with probability at least $1 - (1 - \frac{1}{2(\lceil \log n \rceil + 2)})^{O(\log^2 n)} \geq 1-n^{-10}$.
	
	A false positive occurs if the algorithm declares success for an invalid $l$. If $l$ violates the first condition of Lemma~\ref{lem:threshold} in that more than an $\epsilon$ fraction of the sets in $\mathcal{M}$ that have size greater than $l$ do not intersect with $A$, the expected number of sets not intersecting $A$ in the validation samples is $\mu \geq 400 \log n$. The probability that we see $\Xi \leq 300 \log n$ sets not intersecting $A$ is bounded by the Chernoff bound $\mathbb{P}(\Xi \leq (1-\delta)\mu) \leq e^{-\frac{\delta^2 \mu}{2}}$. With $\delta = 100/400 = 0.25$ and $\mu = 400 \log n$, we get:
	\[ \mathbb{P}(\Xi \leq 300 \log n) \leq \exp\left(-\frac{0.0625 \cdot 400 \log n}{2}\right) = \exp(-12.5 \log n) \le \exp(-12.5 \ln n) = n^{-12.5}. \]
	This ensures the overall false positive probability across all $O(\log^2 n)$ trials is at most $n^{-10}$, which is well within our $n^{-9}$ budget.
	
	Each invocation of Algorithm~\ref{alg:redu} is capped by a cost of $\frac{96 (\lceil \log n \rceil + 2)^2}{\epsilon^3} \cdot \mathcal{B} \cdot n$. Since we perform $O(\log^2 n)$ trials to cover the candidate values of $l$, the total runtime is:
	\[ O\left( \log^2 n \cdot \frac{(\lceil \log n \rceil + 2)^2}{\epsilon^3} \cdot \mathcal{B} n\right) = O\left(\frac{\log^4 n}{\epsilon^3} \cdot \mathcal{B} n\right). \]
	This completes the proof.
\end{proof}

\begin{algorithm}[H]\label{alg:rere}
	\caption{Implementation of $\mathcal{Q}_s$ via $\mathcal{Q}_{\ell}$}
	\begin{algorithmic}[1]
		\STATE \textbf{Initialize:} $L = \{0, 1, 3, \ldots, 2^{\lceil \log_2 n \rceil + 1}-1\}$, $\text{Trials} = c \cdot \log^2 n$.
		\FOR{$i = 1$ \TO $\text{Trials}$}
		\STATE Pick $l$ uniformly at random from $L$.
		\STATE {Result},$S$ $\gets$ \text{AdaptiveOracleWithValidation}$(A, \mathcal{B} \cdot n, l, \epsilon)$.
		\IF{{Result} is \textbf{Success}}
		\RETURN $S$.
		\ENDIF
		\ENDFOR
		\STATE \text{report $A$ as the solution and terminate the algorithm}.
	\end{algorithmic}
\end{algorithm}

The algorithm explained in the proof of Lemma~\ref{lemma:gg} is shown in Algorithm~\ref{alg:rere}. Lemma~\ref{lemma:gg} implies that in every phase of Algorithm~\ref{algorithm:linear}, the additive difference between the total perceived contribution of the elements and their actual contribution is bounded by an $\epsilon$ multiplicative factor of the optimal solution. Thus, we can extend the proof of Theorem~\ref{theorem:linear} to work for this setting as well.

\begin{theorem}\label{theorem:final}
	If query oracle $\mathcal{Q}_{\ell}$ is available and is used via Algorithm~\ref{alg:rere} to implement $\mathcal{Q}_s$, then Algorithm~\ref{algorithm:linear}  runs in time $O(\frac{n \log^{14} n  \log (1/\epsilon)}{\epsilon^5})$, has query cost $O(\frac{n \log^{13} n  \log (1/\epsilon)}{\epsilon^5})$, and achieves a $\left( 1 - \frac{1}{e} - (4w+7)\epsilon \right)$-approximate solution for the maximum $k$-cover problem with probability at least $1-n^{-7}$ where $w$ is the number of phases of Algorithm~\ref{algorithm:linear}, provided that $\epsilon \le 1/(50w)$.
\end{theorem}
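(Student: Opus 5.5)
The plan is to rerun the proof of Theorem~\ref{theorem:linear} with one change. In every phase, the exact oracle $\mathcal{Q}_s(A)$ is replaced by the sampler of Lemma~\ref{lemma:gg}, instantiated through Algorithm~\ref{alg:rere}. Two things then have to be tracked: the cost blow-up, and an additive error that accumulates once per phase.

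\textbf{Complexity and failure probability.} In each phase, Algorithm~\ref{algorithm:linear} asks for a total sampled size of $\mathcal{B}\cdot n=(1000\log^2 n)\tau = O(n\log^3 n/\epsilon^2)$. By Lemma~\ref{lemma:gg}, Algorithm~\ref{alg:rere} produces such a sample at cost $O(\frac{\log^4 n}{\epsilon^3}\mathcal{B}n)=O(n\log^7 n/\epsilon^5)$. Multiplying by the $O(\log^5 n\log(1/\epsilon))$ phases gives a query cost of $O(n\log^{12} n\log(1/\epsilon)/\epsilon^5)$, which is within the claimed $\log^{13}n$ bound. The heap-based greedy adds one more $\log n$ factor to the running time. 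For the failure probability, each phase fails with probability at most $n^{-9}$ by Lemma~\ref{lemma:gg}, and the corollaries fail with probability at most $n^{-10}$ each. A union bound over at most $n$ phases and events gives $1-n^{-7}$.

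There is one remaining case: Algorithm~\ref{alg:rere} may hit its fallback and output $A$. This only happens when the hypothesis of Lemma~\ref{lemma:gg} fails, or with probability $n^{-10}$. If the hypothesis fails, at most an $\epsilon$ fraction of sets avoid $A=\mathcal{C}$. Then $F(\mathcal{C})\ge 1-\epsilon\ge(1-\epsilon)\mathsf{opt}$, so early output is fine.

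\textbf{Approximation.} In phase $j$, let $\hat{\mathcal{M}}_j$ be the $(A,\epsilon)$-similar multiset from which the samples are actually drawn. Let $\hat F_j$ be the residual coverage function measured on $\hat{\mathcal{M}}_j$, normalized by $|\mathcal{M}|$. The sets of $\mathcal{M}$ that are discarded and avoid $A$ number at most $\epsilon|\mathcal{M}\setminus\hat{\mathcal{M}}_j|\le\epsilon|\mathcal{M}|$. Their mass is also bounded by $\epsilon$ times the residual mass of the removed part. I want to turn this into the bound $|\hat F_j(C)-F_j(C)|\le \epsilon\,\mathsf{opt}$ for every $C$; this is where the remark after the definition of similarity (error $2\epsilon\mathsf{opt}$) is invoked. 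I would also choose the normalization so that sampling uniformly from $\hat{\mathcal{M}}_j$ is a rescaling of $F_j$ by a factor in $[1-\epsilon,1]$.

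With this in hand, Corollaries~\ref{lemma:cor} and~\ref{lemma:cor2} apply verbatim to $\hat F_j$ in place of $F_j$, because they only use that the samples are i.i.d.\ from a fixed multiset. The chain of inequalities in Theorem~\ref{theorem:linear} then goes through with two modifications. First, the bound $v\ge u-\sum_{i\le q}d_i$ becomes $v\ge u-\sum_{i\le q}d_i-O(\epsilon)u$ per past phase. Second, the relation between true and perceived marginal gains within a phase acquires an additive $O(\epsilon u)$. Summing over the $w$ phases, the perceived optimum at step $r$ is at least
\[
\Bigl(u-(1+2\epsilon)\sum_{i<r}d'_i - 4w\epsilon u\Bigr)(1-\epsilon).
\]
Applying Lemma~\ref{lem:lower_bound_sum} with $u$ replaced by $u(1-4w\epsilon)$ gives a factor of $(1-1/e)(1-4w\epsilon)(1-3\epsilon)$. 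The early-termination and conversion losses contribute a few more $\epsilon$ terms, for a total of $1-1/e-(4w+7)\epsilon$. The condition $\epsilon\le 1/(50w)$ keeps $4w\epsilon$ small. This is needed so that the hypotheses of the corollaries, such as $\mathsf{opt}^j_i\ge i\,\mathsf{opt}^j_1/3$ and the $9/10$ dichotomy of Corollary~\ref{lemma:cor2}, survive the perturbation.

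\textbf{Main obstacle.} The delicate step is to convert $(A,\epsilon)$-similarity, which is a statement about sets avoiding the \emph{starting} set $A$ of a phase, into a uniform additive bound on marginal gains for every candidate $C$ used within that phase. I also need to ensure that the error does not compound multiplicatively across phases. My first attempt would be to note that any discarded set that is covered by $A$ contributes nothing to residual gains. Hence only the at most $\epsilon|\mathcal{M}|$ discarded sets avoiding $A$ matter, and these are bounded by $\epsilon\,\mathsf{opt}$ up to a constant via $\mathsf{opt}\ge k/n$ and the normalization. From this, the per-phase additive loss is at most $4\epsilon u$, and the losses are summed linearly over phases.
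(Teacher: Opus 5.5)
Your overall plan matches the paper's proof. Both use the per-phase overhead of Lemma~\ref{lemma:gg}, rerun the chain of Theorem~\ref{theorem:linear} with an additive $O(\epsilon u)$ loss per phase summed linearly over the $w$ phases, treat the fallback of Algorithm~\ref{alg:rere} as harmless, and finish with a union bound. Your way of absorbing the additive loss, applying Lemma~\ref{lem:lower_bound_sum} to $u(1-4w\epsilon)$, is if anything cleaner than the paper's folding it into $\alpha,\beta$. There are, however, two gaps.

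\textbf{Early termination.} This step silently reuses Lemma~\ref{lemma:kuchik} with the original cap of $c^*\log^5 n\log(1/\epsilon)$ phases. That lemma (via Lemma~\ref{lem:marginal_decay} and the potential built from the $g_e$) needs every phase to estimate the residual of \emph{one fixed} coverage function to within $\kappa/(10\log n)$, with $\kappa$ going down to $\epsilon/n$. Under Algorithm~\ref{alg:rere}, phase $j$ samples from $\hat{\mathcal{M}}_j$, which depends on the randomly chosen threshold $l$. The true $F$ is tracked only up to the $2\epsilon\,\mathsf{opt}$ similarity error, which is far too coarse. The paper observes that the sampling distribution depends only on $l$, so there are only $O(\log n)$ coverage functions, namely those defined by the sets of size at most $l$. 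It runs the potential argument separately for each; phases using another threshold only add elements and so never increase that function's $\Delta_x$ values. It then multiplies the phase cap by $O(\log n)$. That is exactly where the extra $\log n$ in the $\log^{13} n$ bound comes from. Your $\log^{12} n$ count leaves the claim that all remaining contributions are below $\epsilon/n$ at termination unsupported.

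\textbf{The similarity error.} Your proposed justification for the step you call the main obstacle does not work. Bounding the discarded sets that avoid $A$ by $\epsilon|\mathcal{M}|$ gives an additive error of $\epsilon$, and $\mathsf{opt}\ge k/n$ cannot turn that into $O(\epsilon\,\mathsf{opt})$; you would lose a factor of up to $n/k$. Use the relative form of similarity instead:
\begin{itemize}
\item Let $D_0$ and $D_1$ be the discarded sets that avoid, respectively hit, $A$. Similarity gives $|D_0|\le\frac{\epsilon}{1-\epsilon}|D_1|$.
\item Every set in $D_1$ is covered by the feasible solution $A$ (since $|A|\le k$), so $|D_1|\le\mathsf{opt}\,|\mathcal{M}|$. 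Hence $|D_0|\le 2\epsilon\,\mathsf{opt}\,|\mathcal{M}|$.
\item For every $B\supseteq A$ and every $C$, the marginal gains of $C$ over $B$ computed in $\mathcal{M}$ and in $\hat{\mathcal{M}}_j$ differ only by sets of $D_0$.
\end{itemize}
This gives the uniform within-phase bound directly. It contributes only an additive $2\epsilon\,\mathsf{opt}$ per phase, with no multiplicative compounding.
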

\begin{proof}
	 Since our algorithm may face $O(\log n)$ different distributions, in order to apply Lemma~\ref{lemma:kuchik} we multiply the number of allowed phases by a factor of $O(\log n)$. \claudeedit{Indeed, the samples of a phase are drawn uniformly from the sets of size at most $l$ that do not intersect $A$, i.e., from the residual problem of the coverage function defined by the sets of size at most $l$, and this function depends only on $l$. Hence, we can apply the argument of Lemma~\ref{lemma:kuchik} separately to each of the $O(\log n)$ coverage functions defined by the sets of size at most $l$, since the phases that use a different threshold only add elements to the solution and thus never increase the values $\Delta_x$ defined for $l$.} This in addition to the $O(\log^4 n / \epsilon ^ 3)$ overhead of Lemma~\ref{lemma:gg} results in query complexity $O(\frac{n \log^{13} n  \log (1/\epsilon)}{\epsilon^5})$ and runtime $O(\frac{n \log^{14} n  \log (1/\epsilon)}{\epsilon^5})$. See the proof of Theorem~\ref{theorem:linear} for more details. 
	 
	 In the rest of the proof, we show that the approximation ratio of the algorithm is at least $\left( 1 - \frac{1}{e} - (4w+7)\epsilon \right)$. The proof is almost identical to that of Theorem~\ref{theorem:linear} except that here we have an extra error corresponding to oracle $Q_{\ell}$. We assume for simplicity here that the algorithm has no failure and discuss the failure probability at the end of the proof. Also, we assume for simplicity here that Algorithm~\ref{algorithm:linear} iteratively puts $k$ elements into the solution and the early termination does not happen. We discuss at the end of the proof that the early termination does not deteriorate the quality of our solution.
	
	To this end, let $d_1, d_2, \ldots, d_k$ be the marginal contributions of the elements that Algorithm~\ref{algorithm:linear} iteratively puts into the solution to the function $\sigma(\mathcal{C})$. Also, we define $d'_i$ as the same contribution for the sampled problem of the corresponding round. That is, let $F'$ be the sampled $k$-cover problem of the corresponding round. Instead of function $F$, we use $F'$ to determine the value of $d'_i$. We also define $u$ as the highest coverage of any subset of size $k$ which is the coverage ratio of the optimal solution.

	Now, we wish to apply Lemma~\ref{lem:lower_bound_sum} to give a lower bound on the quality of our algorithm. Fix some $1 \leq r \leq k$ and assume that the $r$'th element of the solution is added in round $j$. Also, assume that $q$ elements are added to the solution prior to round $j$. Define $v$ as the highest utility a set of size $k$ can obtain in the beginning of round $j$ in the residual problem. It follows that $v \geq u - \sum_{i=1}^q  d_i$. Also, define $v'$ as the coverage ratio of the same subset in the sampled problem. Based on Corollary~\ref{lemma:cor2} and Lemma~\ref{lemma:gg}, we have $v' \geq v (1-\epsilon) - 2 \epsilon u \geq (u - \sum_{i=1}^q  d_i)(1-\epsilon) - 2\epsilon u $. Moreover, it follows from Corollary~\ref{lemma:cor} and Lemma~\ref{lemma:gg} that $\sum_{i=1}^q  d_i \leq (1+\epsilon) \sum_{i=1}^q  d'_i + 2w\epsilon u$ and therefore $v' \geq (u - (1+\epsilon)\sum_{i=1}^q  d'_i)(1-\epsilon) - (2w+2)\epsilon u$. This implies that at the time of adding the $r$'th element to the solution, there is a subset of size $k$ whose empirical contribution with respect to the sampled problem is at least 
	\begin{align*}
		v' - \sum_{i=q+1}^{r-1} d'_i - (2w+2)\epsilon u&\geq \left( u - (1+\epsilon)\sum_{i=1}^q d'_i \right)(1-\epsilon) - \sum_{i=q+1}^{r-1} d'_i - (2w+2)\epsilon u\\
		&= u(1-\epsilon) - (1-\epsilon^2)\sum_{i=1}^q d'_i - \sum_{i=q+1}^{r-1} d'_i - (2w+2)\epsilon u\\
		&\geq u(1-\epsilon) - \sum_{i=1}^q d'_i - \sum_{i=q+1}^{r-1} d'_i - (2w+2)\epsilon u\\
		&= u(1-\epsilon) - \sum_{i=1}^{r-1} d'_i - (2w+2)\epsilon u\\
		&\geq u(1-\epsilon) - (1+2\epsilon)(1-\epsilon) \sum_{i=1}^{r-1} d'_i - (2w+2)\epsilon u\\ 
		&= \left( u - (1+2\epsilon)\sum_{i=1}^{r-1} d'_i \right)(1-\epsilon) - (2w+2)\epsilon u\\
		&\geq \left( u - (1+(4w+6)\epsilon)\sum_{i=1}^{r-1} d'_i \right)(1-(4w+5)\epsilon).
	\end{align*}

Since each time we add the element with the highest marginal contribution, we have $d'_i \geq \left( u - (1+(4w+6)\epsilon)\sum_{i=1}^{r-1} d'_i \right)\frac{(1-(4w+5)\epsilon)}{k}$. Applying Lemma~\ref{lem:lower_bound_sum} to $d'$ with $\beta = 1+(4w+6)\epsilon$ and $\alpha = 1-(4w+5)\epsilon$ implies $\sum_{i=1}^k d'_i \ge u \left( 1 - \frac{1}{e} \right)  (1-(4w+6)\epsilon)$. This in addition to Corollary~\ref{lemma:cor} implies $\sum_{i=1}^k d_i \ge u \left( 1 - \frac{1}{e} \right)  (1-(4w+6)\epsilon)(1-\epsilon) \geq  u \left( 1 - \frac{1}{e} - (4w+7)\epsilon \right)$ which is desired.
	
	In the case of early termination by crossing the phase limit threshold, we know that the contribution of each neglected element is bounded by $\epsilon / n$ and therefore our solution has a quality of at least $(1-\epsilon)$ which is far more desirable than  the $\left( 1 - \frac{1}{e} - (4w+7)\epsilon \right)$ approximation factor. This is also the case for early termination of Algorithm~\ref{alg:rere} which only happens when our current solution covers at least $1-\epsilon$ fraction of the sets.
	
	Finally, since each failure probability is bounded by $n^{-9}$ the number of such possible scenarios is bounded by $n^2$, the total failure probability of the algorithm is bounded by $n^{-7}$.
\end{proof}

Finally, it follows from Theorem~\ref{theorem:final} that by scaling $\epsilon$, one can approximate the influence maximization problem within an approximation factor of $1-1/e-\epsilon$ in time $\tilde O_{\epsilon}(n+m)$. Since $w = O(\log^6 n \log(1/\epsilon))$, it suffices to run the algorithm with error parameter $\epsilon/\Theta(w)$, which increases the running time only by a polylogarithmic factor.

\begin{corollary}[of theorem~\ref{theorem:final}]\label{cor:final}
		The influence maximization problem can be approximated within a factor of $1-1/e-\epsilon$ in time $\tilde O_{\epsilon}(n+m)$. The algorithm has a failure probability of $n^{-7}$.
\end{corollary}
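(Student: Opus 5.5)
The corollary follows by composing the reduction of Section~\ref{sec:reduction} with Theorem~\ref{theorem:final}, after rescaling the error parameter.

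First, I would build the $k$-coverage instance. Each vertex $v$ is replaced by $\max\{\deg(v),1\}$ equivalent copies, so the ground set has size $N \le 2m+n$. The identity $\sigma(C) = n\cdot \mathsf{Pr}_{S\sim\mathcal{M}}[S\cap C\neq\emptyset]$ then gives that an $\alpha$-approximate cover of size at most $k$ (projected back to vertices, picking at most one copy per vertex) is an $\alpha$-approximate seed set. Sets are non-empty because they contain their root, so the standing assumption $\mathsf{opt}\ge k/N$ holds. Next, $\mathcal{Q}_{\ell}$ with parameter $l$ is implemented by a reverse BFS/DFS from a uniformly random root with lazily sampled edges. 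The traversal is aborted once the explored degree mass exceeds $l$, so the time spent is $\min\{\text{cost},l\}$ up to constants. Membership tests against $A$ are done with a boolean array, which Algorithm~\ref{alg:validation} needs. Hence every oracle call costs time proportional to its charged query cost, and the total running time is $O(\text{runtime} + \text{query cost})$ of Algorithm~\ref{algorithm:linear}.

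Second, I would pick the parameters. Run Algorithm~\ref{algorithm:linear}, with $\mathcal{Q}_s$ realized by Algorithm~\ref{alg:rere}, using error parameter $\epsilon' = \epsilon/(4w+7)$. Here $w = c^*\log^6 N\log(1/\epsilon')$ bounds the number of phases: Lemma~\ref{lemma:kuchik} gives $\log^5$, and Theorem~\ref{theorem:final} adds an $O(\log n)$ factor for the multiple size thresholds. The hypothesis $\epsilon'\le 1/(50w)$ holds once $\epsilon$ is below a fixed constant, and larger $\epsilon$ can simply be replaced by that constant. Theorem~\ref{theorem:final} then yields a $(1-1/e-\epsilon)$-approximation with probability $1-N^{-7}\ge 1-n^{-7}$, after adjusting the constant if needed.

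Third, I would account for the time. The bound $O(N\log^{14}N\log(1/\epsilon')/\epsilon'^5)$ becomes $O((n+m)\,\mathrm{polylog}(n)\,\mathrm{poly}(1/\epsilon))$ after substituting $\epsilon'$ and $w$, which is $\tilde O_\epsilon(n+m)$.

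The main obstacle is the circularity between $w$ and $\epsilon'$: $w$ depends on $\log(1/\epsilon')$, while $\epsilon'$ depends on $w$. I would resolve it by defining $w$ from the fixed phase cap with $\epsilon'$ substituted. This gives $w=\Theta(\log^6 n\,(\log(1/\epsilon)+\log\log n))$, which is self-consistent, and this choice keeps everything polylogarithmic.
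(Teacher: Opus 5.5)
Your proposal is correct and follows the paper's own argument: run Algorithm~\ref{algorithm:linear} with $\mathcal{Q}_s$ realized through Algorithm~\ref{alg:rere} on the degree-copy instance, using error parameter $\epsilon/\Theta(w)$. Your fixed-point choice of $w$ versus $\epsilon'$, your check of $\epsilon'\le 1/(50w)$, and the observation $N^{-7}\le n^{-7}$ are details the paper leaves implicit. One small correction: in the validation stage of Algorithm~\ref{alg:validation} the uncapped oracle $\mathcal{Q}$ is run but only $\min\{|X|,l\}$ is charged, so those calls are not time-proportional to their charge; their excess is at most $T_{large}\cdot N = O(N\log N/\epsilon')$ per invocation, which is the additive overhead the paper mentions and still keeps the total within $\tilde O_{\epsilon}(n+m)$.
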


\subsection{Proof of Lemma~\ref{lem:threshold}}\label{sec:prooflemmal}

Before we prove Lemma~\ref{lem:threshold}, we state Lemma~\ref{lem:average_ratio} as an auxiliary observation.

\begin{lemma}
	\label{lem:average_ratio}
	Let $X$ be a multiset of numbers in $\{1, \ldots, n\}$ and let $Y \subseteq X$. Suppose there exists a constant $0 < \alpha \leq 1$ such that for every dyadic scale $i \in \{1, 2, 4, \ldots, 2^{\lceil \log_2 n \rceil}\}$, the following tail density condition holds:
	$$ |\{y \in Y \mid y \geq i\}| \geq \alpha \cdot |\{x \in X \mid x \geq i\}| $$
	Then the average of the elements in $Y$ is at least an $\alpha/2$ fraction of the average of the elements in $X$.
\end{lemma}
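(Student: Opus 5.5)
The approach is to write both averages through the layer-cake (tail-sum) formula and compare them one dyadic block at a time. For a multiset $Z$ of positive integers,
\[
\sum_{z \in Z} z = \sum_{t \ge 1} |\{z \in Z \mid z \ge t\}| .
\]
Denote by $T_X(t)$ and $T_Y(t)$ the tail counts $|\{x \in X \mid x \ge t\}|$ and $|\{y \in Y \mid y \ge t\}|$. Both are non-increasing in $t$, and both vanish for $t > n$.

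\textbf{Step 1: the mass of $Y$.} Group the thresholds $t \in [2^j, 2^{j+1})$ for $j = 0, 1, \dots, \lceil \log_2 n\rceil$. Inside each block $T_Y(t) \ge T_Y(2^{j+1}-1)$, which is only a lower bound and not directly comparable to the hypothesis at $2^j$. So I go in the other direction and bound the sum for $Y$ from below at the left endpoints:
\[
\sum_{y\in Y} y \;\ge\; \sum_{j\ge 0} \sum_{t=2^{j}+1}^{2^{j+1}} T_Y(t) \;\ge\; \sum_{j\ge 0} 2^{j} T_Y(2^{j+1}).
\]
This uses that the block $(2^j, 2^{j+1}]$ has $2^j$ integers, on each of which $T_Y \ge T_Y(2^{j+1})$. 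Adding the $t=1$ term $T_Y(1)=|Y|$ gives
\[
\sum_{y \in Y} y \;\ge\; T_Y(1) + \sum_{j\ge 1} 2^{j-1} T_Y(2^j) \;\ge\; \tfrac12 \sum_{j \ge 0} 2^j T_Y(2^j).
\]

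\textbf{Step 2: the mass of $X$.} Symmetrically, grouping $t \in [2^j, 2^{j+1})$ and using $T_X(t) \le T_X(2^j)$,
\[
\sum_{x\in X} x \;\le\; \sum_{j \ge 0} 2^j T_X(2^j).
\]
Every nonzero term has $2^j \le n$, so it corresponds to one of the scales $i$ in the hypothesis.

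\textbf{Step 3: comparing sums.} Applying the hypothesis $T_Y(2^j) \ge \alpha T_X(2^j)$ termwise gives
\[
\sum_{y \in Y} y \;\ge\; \frac{\alpha}{2} \sum_{x \in X} x .
\]

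\textbf{Step 4: converting sums to averages.} This is the delicate part: the statement is about averages, so I still need to divide by $|Y|$ and $|X|$. Since $Y \subseteq X$, we have $|Y| \le |X|$, and therefore
\[
\frac{\sum_{y} y}{|Y|} \;\ge\; \frac{\sum_y y}{|X|} \;\ge\; \frac{\alpha}{2}\cdot\frac{\sum_x x}{|X|}.
\]
This is exactly what the lemma asks for, provided $Y$ is nonempty. Nonemptiness follows from the hypothesis at $i=1$ whenever $X$ is nonempty, since then $T_Y(1) \ge \alpha T_X(1) > 0$.

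The main obstacle I expect is the off-by-one care in the dyadic blocks. I need the lower bound on $Y$ to use tail values at points that dominate, and the upper bound on $X$ to use tail values at dyadic points. I also need the factor $2$ lost in shifting blocks to be the only loss. That loss is exactly what produces $\alpha/2$ rather than $\alpha$.
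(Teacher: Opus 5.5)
Your proof is correct and follows essentially the same route as the paper: both bound $\sum_{x\in X} x$ above by $\sum_{j} 2^j\,|\{x \in X \mid x \ge 2^j\}|$ and $\sum_{y \in Y} y$ below by half of the analogous dyadic sum for $Y$, then compare termwise and divide using $|Y|\le|X|$. The only differences are cosmetic. The paper gets the factor-$2$ lower bound by rounding each $y$ up to $2^{k+1}-1$ and using $y' < 2y$, whereas you shift the dyadic blocks in the tail-sum formula; you also check explicitly that $Y$ is nonempty, which the paper leaves implicit.
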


\begin{proof}
Let $T = \{2^0, 2^1, \dots, 2^{\lceil \log_2 n \rceil}\}$. We define a transformed multiset $Y'$ such that for each $y \in Y$, the corresponding $y' \in Y'$ is the smallest power of two larger than $y$, minus one. That is, if $y \in [2^k, 2^{k+1}-1]$, then $y' = 2^{k+1}-1$. Also, we define $\Sigma_X$, $\Sigma_Y$, and  $\Sigma_{Y'}$ to denote the sum of elements in $X$, $Y$, and $Y'$ respectively.
	We observe that for $Y'$, on each $t \in T$, the tail count $|\{y' \in Y' \mid y' \geq \gamma\}|$ is the same for every $t \leq \gamma < 2t$. More precisely, for any $t \in T$, we have $|\{y' \in Y' \mid y' \geq \gamma\}| = |\{y' \in Y' \mid y' \geq t\}|$ for every $t \leq \gamma < 2t$. Thus we can formulate $\sum_{Y'}$ using the identity $\sum_{z \in Z} z = \sum_{t \geq 1} |\{z \in Z \mid z \geq t\}|$ as follows:
	$$ \Sigma_{Y'} =  \sum_{t=1}^{2^{\lceil \log_2 n \rceil+1}-1} |\{y' \in Y' \mid y' \geq t\}|  = \sum_{\gamma \in T} \sum_{t=\gamma}^{2\gamma-1} |\{y' \in Y' \mid y' \geq t\}| = \sum_{\gamma \in T} \gamma \cdot |\{y \in Y \mid y \geq \gamma\}|.$$
	Now we provide an upper bound for $\Sigma_X$ using the same dyadic decomposition:
	$$ \Sigma_X = \sum_{\gamma \in T} \sum_{t=\gamma}^{2\gamma-1} |\{x \in X \mid x \geq t\}| \leq \sum_{\gamma \in T} \gamma \cdot |\{x \in X \mid x \geq \gamma\}|.$$
	Applying the tail density assumption $|\{y \in Y \mid y \geq \gamma\}| \geq \alpha |\{x \in X \mid x \geq \gamma\}|$ to every term in the sum:
	$$ \Sigma_{Y'} = \sum_{\gamma \in T} \gamma \cdot |\{y \in Y \mid y \geq \gamma\}| \geq \alpha \sum_{\gamma \in T} \gamma \cdot |\{x \in X \mid x \geq \gamma\}| \geq \alpha \Sigma_X.$$
	Since $y' \le 2y$ for all $y \ge 1$ (specifically, $y' = 2^{k+1}-1 < 2 \cdot 2^k \le 2y$), the sum $\Sigma_{Y'}$ is strictly less than $2\Sigma_Y$. This implies that 
	$ 2\Sigma_Y > \Sigma_{Y'} \geq \alpha \Sigma_X$ and thus $\Sigma_Y \geq \frac{\alpha}{2} \Sigma_X $ holds.
	Let $E_X = \Sigma_X / |X|$ and $E_Y = \Sigma_Y / |Y|$ be the averages. Since $Y \subseteq X$, it follows that $|Y| \leq |X|$. Thus:
	$$ E_Y = \frac{\Sigma_Y}{|Y|} \geq \frac{\frac{\alpha}{2} \Sigma_X}{|Y|} \geq \frac{\alpha}{2} \frac{\Sigma_X}{|X|} = \frac{\alpha}{2} E_X.$$
	This completes the proof.
\end{proof}

Now we are ready to prove Lemma~\ref{lem:threshold}.

\begin{proof}[of Lemma~\ref{lem:threshold}]
	Let $T = \{2^0, 2^1, \ldots, 2^{\Delta-1}\}$ be the set of dyadic scales. For $\gamma \in T$, let $\mathcal{M}_{\ge \gamma} = \{X \in \mathcal{M} \mid |X| \ge \gamma\}$ and similarly, $\mathcal{M}^*_{\ge \gamma} = \{X \in \mathcal{M}^* \mid |X| \ge \gamma\}$. Also let $S^*_\gamma = \{X \in \mathcal{M}^* \mid \gamma \le |X| < 2\gamma\}$. Define the density parameter $\epsilon' = \frac{\epsilon}{\Delta}$.
	

\begin{figure}[htbp]
	\centering
	\begin{tikzpicture}[xscale=1.2, yscale=0.8]
		\draw[thick, ->] (0,0) -- (10,0) node[right] {Size $|X|$ (log scale)};
		
		\draw (1, 0.1) -- (1, -0.1) node[below, font=\small] {1};
		\draw (2, 0.1) -- (2, -0.1) node[below, font=\small] {2};
		\draw (3, 0.1) -- (3, -0.1) node[below, font=\small] {4};
		\draw (4, 0.1) -- (4, -0.1) node[below, font=\small] {8};
		\draw (5, 0.1) -- (5, -0.1) node[below, font=\small] {16};
		\draw (6, 0.1) -- (6, -0.1) node[below, font=\small] {32};
		\draw (8, 0.1) -- (8, -0.1) node[below, font=\small] {$2^\Delta$};
		
		\fill[red!10] (6,0) rectangle (9.5, 4);
		\node[black, font=\footnotesize, align=center] at (7.75, 2) {Tail $\mathcal{M}^*_{\ge i}$\\(Density $\le \epsilon'$)};
		
		\fill[orange!15] (3.9,0) rectangle (5.8, 2.5);
		\node[black, font=\footnotesize] at (4.85, 1.25) {$|S^*_\gamma| < |S^*_j|$};
		
		\fill[green!20] (3,0) rectangle (3.9, 3.5);
		\draw[thick, green!60!black] (3,0) rectangle (3.9, 3.5);
		\node[black, font=\small, font=\bfseries] at (3.45, 1.75) {$S^*_j$};
		
		\draw[dashed, blue, thick] (6, 0) -- (6, 4.5) node[above] {$i$};
		\draw[dashed, blue, thick] (3, 0) -- (3, 4.5) node[above] {$j$};
		\draw[very thick, purple] (3.9, 0) -- (3.9, 5) node[above] {$l = 2j-1$};
		
		\draw[decorate, decoration={brace, amplitude=10pt, mirror}] (0,-1) -- (3.9,-1) 
		node[midway, below=12pt, font=\small] {Contributes to $r_1$};
		
		\draw[decorate, decoration={brace, amplitude=8pt, mirror}] (3.9,-1) -- (6,-1) 
		node[midway, below=10pt, font=\small] {Contributes to $r_{2,l}$};
		
		\draw[decorate, decoration={brace, amplitude=8pt, mirror}] (6,-1) -- (9.5,-1) 
		node[midway, below=10pt, font=\small] {Contributes to $r_{2,h}$};
		
	\end{tikzpicture}
	\caption{Decomposition of the multiset mass into components $r_1$, $r_{2,l}$, and $r_{2,h}$ relative to the chosen threshold $l$ and the tail threshold $i$.}
	\label{fig:threshold_decomposition}
\end{figure}

	Let $i$ be the smallest integer in $T$ such that $|\mathcal{M}^*_{\ge i}| \leq \epsilon' |\mathcal{M}_{\ge i}|$. If no such $i$ exists, we set $l = 2^{\Delta-1}-1$ and since both $\mathcal{M}_{\geq l+1}$ and $\mathcal{M}^*_{\geq l+1}$ are empty the first condition would hold. Also, the second condition holds since Lemma~\ref{lem:average_ratio} implies that the average size of sets in $\mathcal{M}^*$ is at least an $\epsilon'/2 \geq \frac{\epsilon^2}{6 \Delta^2}$ fraction of the average size of the sets in $\mathcal{M}$. For all $i' \in T$ such that $i' < i$, the minimality of $i$ implies:
	$ \frac{|\mathcal{M}^*_{\ge i'}|}{|\mathcal{M}_{\ge i'}|} > \epsilon' $. This ensures that for all scales strictly smaller than $i$, the ratio of the number of sets in $\mathcal{M}^*$ to those in $\mathcal{M}$ is bounded from below by $\epsilon'$. This also implies that for all $i' \in T$ such that $i' < i$ we have:
	\begin{equation}\label{eq:mohem1}
		\frac{|\mathcal{M}^*_{\ge i'} \setminus \mathcal{M}^*_{\ge i}|}{|\mathcal{M}_{\ge i'} \setminus \mathcal{M}_{\ge i}|} > \epsilon'
	\end{equation}
	
	Because  $|\mathcal{M}^*| > \epsilon |\mathcal{M}|$ and $\epsilon' = \frac{\epsilon}{\Delta}$ there exists a $\gamma \in T$ such that $\gamma < i$ and $|S^*_\gamma| > \epsilon'|\mathcal{M}_{\geq i}|$. We select $j$ to be the largest such $\gamma$. Based on maximality of $j$ and the fact that the number of elements between (but not including) $j$ and $i$ in $T$ is at most $\Delta$ we have:
	\begin{equation}\label{eq:mohem2}
		|S^*_j| \ge \frac{1}{\Delta} \sum_{\gamma \in T, j < \gamma < i} |S^*_\gamma|.
	\end{equation}
	Now, we define the threshold for the lemma as $l = 2j - 1$ and prove the conditions of the lemma below.
	
	By our choice of $l$, we have $\{X \in \mathcal{M}^* \mid |X| > l\} = \mathcal{M}^*_{\ge l+1}$. We partition $\mathcal{M}^*_{\ge l+1}$ into the tail at $i$ and the intermediate dyadic intervals:
	\begin{equation}\label{eq:notimportant}
		|\mathcal{M}^*_{\ge l+1}| = |\mathcal{M}^*_{\ge i}| + \sum_{\gamma \in T, j < \gamma < i} |S^*_\gamma| 
	\end{equation}
	Each of the at most $\Delta$ terms that are summed up in the right hand side of Equation~\eqref{eq:notimportant} is bounded by $\epsilon' |\mathcal{M}_{\geq i}|$ and thus $$ |\mathcal{M}^*_{\ge l+1}| \leq \epsilon' \Delta |\mathcal{M}_{\geq i}| = \epsilon |\mathcal{M}_{\geq i}| \leq \epsilon |\mathcal{M}_{\ge l+1}|$$ which is the first condition of the lemma.

If we categorize the sets in $\mathcal{M}$ by whether or not their size exceeds $l$, we can imply that the left hand side of the second condition of the lemma is equal to 
$$\mathbb{E}_{X \sim \mathcal{M}}[\min\{|X|, l\}] = \mathbb{E}_{X \sim \mathcal{M}}[|X| \cdot \mathbb{I}_{|X| \leq l}] + l \cdot \mathbb{P}_{X \sim \mathcal{M}}[|X| > l]$$
Let us break the right hand side into two terms $r_1 = \mathbb{E}_{X \sim \mathcal{M}}[|X| \cdot \mathbb{I}_{|X| \leq l}] $ and $r_2 = l \cdot \mathbb{P}_{X \sim \mathcal{M}}[|X| > l]$. Define $\mathcal{M}_s = \mathcal{M} \setminus \mathcal{M}_{\geq l+1}$ and similarly $\mathcal{M}^*_s = \mathcal{M}^* \setminus \mathcal{M}^*_{\geq l+1}$. Inequalities~\eqref{eq:mohem1} and ~\eqref{eq:mohem2} together imply that conditions of Lemma~\ref{lem:average_ratio} hold for $\mathcal{M}_s$ and $\mathcal{M}^*_s$ with parameter $\alpha = \epsilon'/\Delta$ and thus $r_1$ is bounded by $2\Delta/\epsilon' = 2\Delta^2/\epsilon$ times the average size of sets in $\mathcal{M}^*_s$.
For $r_2$, we conceptually divide $\mathbb{P}_{X \sim \mathcal{M}}[|X| > l]$ into two terms $r_{2,h} =  l \cdot \mathbb{P}_{X \sim \mathcal{M}}[|X| \geq i]$ and $r_{2,l} =  l \cdot \mathbb{P}_{X \sim \mathcal{M}}[l < |X|<i]$.

For $r_{2,l}$ we leverage Inequalities~\eqref{eq:mohem1} and ~\eqref{eq:mohem2} to imply that $\frac{|S^*_j|}{|\mathcal{M}|} \geq \frac{\epsilon'}{\Delta} \mathbb{P}_{X \sim \mathcal{M}}[l < |X|<i]$ and thus $r_{2,l} \leq \frac{2\Delta^2}{\epsilon} \mathbb{E}_{X \sim \mathcal{M}^* \text{ such that }|X| \leq l}[|X|]$. Also, $j$ is chosen such that $|S^*_j| > \epsilon' |\mathcal{M}_{\geq i}|$ and thus $r_{2,h} \leq \frac{2\Delta}{\epsilon} \mathbb{E}_{X \sim \mathcal{M}^* \text{ such that }|X| \leq l}[|X|]$. Therefore $r_1 + r_{2,l} + r_{2,h} \leq \frac{6 \Delta^2}{\epsilon^2} \cdot \mathbb{E}_{X \sim \mathcal{M}^* \text{ such that }|X| \leq l}[|X|]$ which is desired.
\end{proof}